\newcommand{\IR}{\mathbb{R}}
\def\OO{{\cal O}} % \O is a capital oh with a slash
\def\A{{\cal A}}
\def\I{{\mathcal I}}
\def\P1{{\mathcal P_1}}
\def\P2{{\mathcal P_2}}
\def\P{{\mathcal P}}
\def\C{{\mathcal C}}
\newcommand{\AFF}{\textsf{\sc First-Fit}}
\newcommand{\AGDS}{\textsf{\sc GreedyDS}}
\newcommand{\AL}{\textsf{\sc Layer}}
\newcommand{\AGCDS}{\textsf{\sc GreedyCDS}}
 \newtheorem{observation}{Observation}
\newcommand{\myqed}{\hfill $\Box$}
\begin{document}
%
% \title{Hitting Geometric Objects Online  via  Points in~$\mathbb{Z}^d$\thanks{
%       Work on this paper by the first author has been partially supported 
%       by  SERB MATRICS Grant MTR/2021/000584, and 
%       work by the second author has been supported 
%       by CSIR (File Number-09/086(1429)/2019-EMR\_I).
%    }}

% \title{Online hitting of $d$-dimensional unit balls and hypercubes using integer points}
% \title{Online hitting of $d$-dimensional unit balls and hypercubes}
% \title{Online Geometric hitting set and set cover for $d$-dimensional unit balls and hypercubes}

   \title{Online Dominating Set and Coloring for Geometric Intersection Graphs\thanks{Preliminary version of this paper will appear in the proceedings of the 16th annual international conference on combinatorial optimization and applications (COCOA), 2023~\cite{DeKS23}. Work on this paper by M. De has been partially supported 
      by  SERB MATRICS Grant MTR/2021/000584, and 
      work by S. Singh has been supported 
      by CSIR (File Number-09/086(1429)/2019-EMR-I).}}

%
%\titlerunning{Abbreviated paper title}
% If the paper title is too long for the running head, you can set
% an abbreviated paper title here
%

\author{Minati De\inst{1}\orcidID{0000-0002-1859-8800} \and Sambhav Khurana\inst{2} \and
Satyam Singh\inst{1}}
\authorrunning{M. De, S.Khurana and S. Singh}
% First names are abbreviated in the running head.
% If there are more than two authors, 'et al.' is used.
%
\institute{Dept. of Mathematics, Indian Institute of Technology Delhi, New Delhi, India
\email{\{minati,satyam.singh\}@maths.iitd.ac.in}\\
\and
Dept. of Computer Science \& Engineering, Texas A\&M University, USA\\
\email{\ sambhav{\textunderscore}khurana@tamu.edu}}

\maketitle              % typeset the header of the contribution

\begin{abstract}
We present online deterministic algorithms for minimum coloring and minimum dominating set problems in the context of geometric intersection graphs. We consider a graph parameter: the independent kissing number $\zeta$,  which is a number equal to `the size of the largest induced star in the graph $-1$'. For a graph with an independent kissing number at most $\zeta$, we show that the famous greedy algorithm achieves an optimal competitive ratio of $\zeta$ for the minimum dominating set and the minimum independent dominating set problems. However, for the minimum connected dominating set problem, we obtain a competitive ratio of at most $2\zeta$. To complement this, we prove that for the minimum connected dominating set problem, any deterministic online algorithm has a competitive ratio of at least $2(\zeta-1)$ for the geometric intersection graph of translates of a convex object in $\IR^2$. Next, for the minimum coloring problem, we obtain algorithms having a competitive ratio of $O\left({\zeta'}{\log m}\right)$ for geometric intersection graphs of bounded scaled $\alpha$-fat objects in $\IR^d$ having widths in the interval $[1,m]$, where $\zeta'$ is the independent kissing number of the geometric intersection graph of bounded scaled $\alpha$-fat objects having widths in the interval $[1,2]$.
Finally, we investigate the value of $\zeta$ for geometric intersection graphs of various families of geometric objects.

\keywords{
%% keywords here, in the form: keyword \sep keyword
$\alpha$-Fat objects \and Coloring \and Connected dominating set \and Dominating set \and  Independent kissing number \and  $t$-relaxed coloring.}
\end{abstract}

\section{Introduction}
\pagenumbering{arabic}
  We consider deterministic online algorithms for some well-known NP-hard problems: the minimum dominating set problem and its variants and the minimum coloring problem. While dominating set and its variants have several applications in wireless ad-hoc networks, routing, etc.~\cite{BoyarEFKL19,BoyarFKL22}, coloring has diverse applications in frequency assignment, scheduling and many more~\cite{AlbersS21,ErlebachF02,ErlebachF06}.
  \paragraph{Minimum Dominating Set and its Variants}
 For a graph $G=(V,E)$, a subset $D \subseteq V$  is a \emph{{dominating set}} (DS) if for each vertex $v \in V$, either $v\in D$ (containment) or there exists an edge $\{u,v\}\in E$ such that $u\in D$ (dominance).  A dominating set $D$ is said to be a \emph{{connected dominating set}} (CDS) if the induced subgraph $G[D]$ is connected (if $G$ is not connected, then $G[D]$ must be connected for each connected component of $G$). A dominating set $D$ is said to be an \emph{{independent dominating set}} (IDS) if the induced subgraph $G[D]$ is an independent set.  The \emph{{minimum dominating set}} (MDS) problem involves finding a dominating set of the minimum cardinality. Similarly, the objectives of the \emph{{minimum connected dominating set}} (MCDS) problem and the \emph{{minimum independent dominating set}} (MIDS) problem are to find a CDS and IDS, respectively, with the minimum number of vertices.
 % \red{All these problems have several applications. For example, a DS can be thought of as a transmitting station that can send messages to all stations in the network~\cite{liu9introduction}. A CDS is often used as the virtual backbone in a network where nodes in the CDS are responsible for relaying messages~\cite{eriksson1994mbone,GuhaK98}.
 % A relevant issue arises regarding the maintenance of the DS or CDS as the network topology changes with time. On the other hand, the coloring problem has several applications in scheduling~\cite{}.}

% Online computation models real-world phenomena: the irreversibility of time (decisions) and the uncertainty of the future (order of input). As a result, online models of computation have received a considerable amount of attention in recent times. 

Throughout the paper, we consider online algorithms for \emph{vertex arrival model} of graphs where a new vertex is revealed with its edges incident to previously appeared vertices.
The dominating set and its variants can be considered in various online models~\cite{BoyarFKL22}. 
%In this paper, we use the ``Classical-Online-Model'' and the ``Relaxed-Online-Model''.
In \emph{Classical-Online-Model} (also known as ``Standard-Model''), upon the arrival of a new vertex, an online algorithm must either accept the vertex by adding it to the solution set or reject it. 
In \emph{Relaxed-Online-Model} (also known as ``Late-Accept-Model''), 
upon the arrival of a new vertex, in addition to the revealed vertex, an online algorithm may also include any of the previously arrived vertices to the solution set. Note that once a vertex is included in the solution set for either model, the decision cannot be reversed in the future.
% Note that Relaxed-Online-Model is also known as the , while Classical-Online-Model is known as the ~\cite{BoyarFKL22}.
 % Boyar et al., Eidenbenz and Kobayashi~\cite{BoyarEFKL19,Eidenbenz,Kobayashi17} studied MDS, MIDS and MIS problems in Relaxed-Online-Model; whereas in this paper we study these in Classical-Online-Model. 
For the MCDS problem, if we cannot add previously arrived vertices in the solution set, the solution may result in a disconnected dominating set~\cite{BoyarEFKL19}. Therefore, in this paper, we use Classical-Online-Model for MDS and MIDS problems, while for the case of the MCDS problem, we use Relaxed-Online-Model. Furthermore, the revealed induced subgraph must always be connected for the MCDS problem.

 \paragraph{Minimum Coloring Problem} For a graph $G=(V,E)$, the \emph{coloring} is to assign colors (positive integers) to the vertices of $G$. The \emph{minimum coloring} (MC) problem is to find a coloring with the minimum number of distinct colors such that no two adjacent vertices (vertices connected by an edge) have the same color.
 % In the online minimum coloring problem, the graph is revealed vertex-by-vertex, along with edges to all vertices that have been revealed.
In the online version, upon the arrival of each new vertex $v$, an algorithm needs to immediately assign a feasible color to $v$, i.e.,  one distinct from the colors assigned to the neighbours of $v$ that have been already arrived. The color of the vertex $v$ cannot be changed in future.
% The goal is to minimize the total number of colors used.

% Both dominating set and coloring are notoriously hard in Classical-Online-Model for general graphs.
% % Unfortunately, in this Classical-Online-Model, for the MDS problem, the competitive ratio has a lower bound of $\Omega(n)$ that holds even for interval graphs~\cite{KingT97}. 
% % Due to this hopeless lower bound, the literature on these problems in Classical-Online-Model is relatively scarce compared to scheduling, routing, and packing problems. 
% To obtain meaningful results, one often needs to restrict inputs to a particular class of graphs and/or change the model of computation~\cite{BoyarFKL22,ErlebachF06}. Consequently, 

The quality of our online algorithm is analyzed using competitive analysis~\cite{BorodinE}.
% , where  the competitive ratio is the measure of comparison. We have defined the competitive ratio concerning a minimization problem in the following. For a maximization problem, one can define it similarly.  
An online algorithm $ALG$ for a minimization problem is said to be \emph{$c$-competitive}, if there exists a constant $d$ such that for any input sequence $\cal I$, we have $\A({\cal I}) \leq c\times\OO({\cal I}) +d$, where $\A({\cal I})$ and $\OO({\cal I})$ are the cost of solutions produced by  $ALG$ and  an optimal offline algorithm, respectively, for the input sequence $\cal I$. The smallest $c$ for which $ALG$ is $c$-competitive is known as an \emph{{asymptotic competitive ratio}} of $ALG$. The smallest $c$ for which $ALG$ is $c$-competitive with  $d=0$ is called an \emph{{absolute competitive ratio}} (also known as ``strict competitive ratio") of $ALG$. If not explicitly specified, we use the term ``competitive ratio'' to mean absolute  competitive ratio. 
%  If $ALG$ is a  randomized algorithm, then $\A({\cal I})$ is replaced by the expectation ${\mathbb{E}[\A({\cal I}]})$, and the 
% competitive ratio of  $ALG$ is  $\sup_{\cal I}\frac{\mathbb{E}[\A({\cal I})]}{\OO({\cal I})}$~\cite{BorodinE}.}

In this paper, we focus on geometric intersection graphs due to their applications in wireless sensors, network routing, medical imaging, etc~\cite{AgarwalKS98,BachmannHS13,BakshiCW20,ButenkoPSSS02,HalldorssonIMT02,Malesinska1997}. For a family ${\cal S}$ of geometric objects in $\IR^d$, the \emph{geometric intersection graph} $G$ of ${\cal S}$ is an undirected graph with set of vertices same as ${\cal S}$, and the set of edges is defined as $E=\{\{u,v\}| u,v\in {\cal S}, u\cap v\neq \emptyset \}$.
Several researchers have used the kissing number (for definition, see Section~\ref{Notations}) as a parameter to give an upper or lower bound for geometric problems. For instance, Butenko et al.~\cite{ButenkoKU11} used it to prove the upper bound of the MCDS problem in the offline setup for unit balls in $\IR^3$; whereas Dumitrescu et al.~\cite{DumitrescuGT20} used it to prove the upper bound for the unit covering problem for unit balls in $\IR^d$.
Similar to kissing number, we use a graph parameter- independent kissing number $\zeta$.
Let $\varphi(G)$  denote the size of a maximum independent set of a graph $G=(V,E)$.  For any vertex $v\in V$,  let $N(v)=\{u(\neq v) \in V | \{u,v\} \in E\}$ be the neighbourhood of the vertex $v$.
  Now, we define the \emph{{independent kissing number}} $\zeta$ for graphs. 
\begin{definition}[Independent Kissing Number]\label{def1}
   % Let $G(V,E)$ be a graph.
 The independent kissing number $\zeta$ of a graph  $G=(V,E)$  is defined as $\max_{v \in V} \{\varphi(G[N(v)])\}$.
\end{definition}
Note that the independent kissing number equals `the size of the largest induced star in the graph $-1$'. In other words, a graph with independent kissing number $\zeta$ is a $K_{1,\zeta+1}$-free graph (for definition, see Section~\ref{sect:graph}). Moreover,  the value of $\zeta$ may be very small compared to the number of vertices in a graph. For example, the value of~$\zeta$ is a fixed constant for the geometric intersection graph of several families of geometric objects like translated and rotated copies of a convex object in $\IR^2$. 
However, the use of this parameter is not new. For example, in the offline setup, Marathe et al.~\cite{MaratheBHRR95} obtained a $2(\zeta -1)$-approximation algorithm for the  MCDS problem for any graph having an independent kissing number at most $\zeta$.

%%% Will Keep it in the FULL Version %%%%%%

% In this paper, we focus on  intersection graphs of geometric objects.
% Here, by a \emph{geometric object}, we refer to a compact convex set in $\mathbb{R}^d$ with a nonempty interior.  Two objects $u$ and $v$ are said to \emph{intersect/dominate} each other if $u\cap v\neq\emptyset$.
% To be specific, for a family ${\cal S}$ of geometric objects in $\IR^d$, the \emph{geometric intersection graph} $G$ of ${\cal S}$ is an undirected graph with set of vertices same as ${\cal S}$, and the set of edges is defined as $E=\{\{u,v\}| u,v\in {\cal S}, u\cap v\neq \emptyset \}$.
% For example, if $\cal S$ is a family of unit disks or disks, the corresponding graph is known to be a unit disk graph or disk graph, respectively.  
% Geometric intersection graphs are well studied due to applications in wireless sensors, network routing, medical imaging, etc.

%%% Will Keep above in the FULL Version %%%%%%

%\subsection{Notation and Preliminaries}\label{prelims}
    % \paragraph{ Definitions.}
   
% Some known bounds are as follows: $\kappa(P_n)=6$ ($n\geq 5)$, where $P_n$ is a regular $n$-gon~\cite{ZHAO1998293,article};
%  $\kappa(B_2)=6$, $\kappa(B_3)=12$, where $B_2$ and $B_3$ are balls in $\mathbb{R}^2$ and $\mathbb{R}^3$, respectively~\cite{ Leech,Schutte}.  

% \paragraph{Independent Kissing Number.}\label{a:1}

\subsection{Related Work}

The dominating set and its variants are well-studied in the offline setup.
Finding MDS is known to be NP-hard even for unit disk graphs~\cite{ClarkCJ90,10.6/574848}. A polynomial-time approximation scheme (PTAS) is known when all objects are homothets {(bounded and scaled copies)} of a convex object~\cite{DeL23}.
King and Tzeng~\cite{KingT97} initiated the study of the online MDS problem in Classical-Online-Model. They showed that for a general graph, the deterministic greedy algorithm achieves a competitive ratio of $n-1$, which is also a tight bound achievable by any deterministic online algorithm for the MDS problem, where~$n$ is the length of the input sequence (i.e., number of vertices appeared). Even for the interval graph, the lower bound of the competitive ratio is~$n-1$~\cite{KingT97}.  Eidenbenz~\cite{Eidenbenz} proved that the greedy algorithm achieves a tight bound of~5 for finding MDS of the unit disk graph. 

Boyar et al.~\cite{BoyarEFKL19} considered a variant of  the Relaxed-Online-Model for MDS, MIDS and MCDS problems in which, in addition to the  Relaxed-Online-Model the revealed graph should always be connected.
In the aforementioned setup, they studied MDS, MIDS and MCDS problems for specific graph classes such as trees, bipartite graphs, bounded degree graphs, and planar graphs. Their results are summarized in~\cite[ Table 2]{BoyarEFKL19}. They also proposed a~3-competitive deterministic algorithm for the MDS problem in the above-mentioned model for a tree. Later, Kobayashi~\cite{Kobayashi17} proved that~3 is also the lower bound for trees in this setting.
In the same setup, Eidenbenz~\cite{Eidenbenz}  showed that, for the MCDS problem of unit disk graphs, the deterministic greedy algorithm achieves a competitive ratio of $8+\epsilon$. In contrast, no deterministic online algorithm can guarantee a  strictly better competitive ratio than $10/3$.
We observe that the (asymptotic) competitive ratio of the MCDS problem for unit disk graphs could be improved to~6.798 (see Section~\ref{2.2}).

The minimum coloring problem is a celebrated problem in combinatorial optimization~\cite{KargerMS98,MATULA1972109,Wigderson83}. The problem is known to be NP-hard,  even for unit disk graphs~\cite{ClarkCJ90}. In the offline setup, a 3-approximation algorithm for unit disk graphs was presented by Peeters~\cite{Peeters}, Gr$\ddot{\text{a}}$f et al.~\cite{GrafSW98} and Marathe et al.~\cite{MaratheBHRR95}, independently. Generalizing this, Marathe et al.~\cite{MaratheBHRR95}  obtained a 6-approximation algorithm for disk graphs. Later Erlebach and Fiala~\cite{ErlebachF06} sharpened the analysis and showed that the algorithm is, in fact,  5-approximation. Surprisingly, there is a lack of study of this problem in the offline setup for other geometric intersection graphs.

% In 1998, Gy{\'{a}}rf{\'{a}}s and Lehel~\cite{GyarfasL88} proved that there exists a tree on $n$ vertices such that for every online coloring algorithm, there exists a specific ordering of the vertices of the tree, such that the algorithm is forced to use $\Omega(\log n)$ distinct colors.
% For disk graphs, Erlebach and Fiala~\cite{ErlebachF02} proved that $\AFF$ achieves a competitive ratio of at most $\log m$. They also proved that there exists an $O(\log m)$-competitive coloring algorithm for disk graphs whose radius is bounded by $m$. Later,  Capponi and Pilloto~\cite{CapponiP05} proved that for any graph having an independent kissing number at most $\zeta$, $\AG$ having competitive ratio at most $\zeta$.  In this paper, generalise the result of  disk graphs whose radius is bounded by $m$ for the geometric intersection graph of bounded scaled $\alpha$-fat objects in $\IR^$.

For the online minimum coloring problem, the algorithm $\AFF$ is one of the most popular algorithms. For general graphs with $n$ vertices, while the competitive ratio of  $\AFF$ is $n/4$, Lov{\'{a}}sz et al.~\cite{LovaszST89} 
presented an algorithm that achieves a sublinear competitive ratio.
In the context of geometric intersection graphs, Erlebach and Fiala~\cite{ErlebachF02} proved that the algorithm $\AFF$ achieves a competitive ratio of $O(\log n)$ for disk graphs and square graphs. They also showed that no deterministic online
algorithm can achieve a competitive ratio better than $\Omega(\log n)$ for disk graphs and for square graphs,
even if the geometric representation is given as part of the input. 
Recently, Albers and Schraink~\cite{AlbersS21} proved that $\Omega(\log n)$ is also the lower bound of the competitive ratio
of any randomized online algorithms for the MC problem for disk graphs. As a result, $\AFF$ is an asymptotically optimal competitive algorithm for disk graphs.
Capponi and Pilloto~\cite{CapponiP05} proved that for any graph with an independent kissing number at most $\zeta$, the popular algorithm $\AFF$ achieves a competitive ratio of at most $\zeta$. 
Erlebach and Fiala~\cite{ErlebachF02} presented a deterministic algorithm $\AL$ having a competitive ratio of  $O(\log m)$ for bounded scaled disks having radii in the interval $[1,m]$. The algorithm $\AL$ uses the disk representation. Later, Caragiannis at al.~\cite{CaragiannisFKP07b} proved that $\AFF$, which does not use disk representation, is also $O(\log m)$ competitive for this case.   In this paper, we generalize this result for the geometric intersection graph of bounded scaled $\alpha$-fat objects in $\IR^d$.

\subsection{Our Contributions}

In this paper, we obtain the following results.

\begin{enumerate}
   
\item First, we prove that for MDS and MIDS problems, a well-known algorithm $\AGDS$ has a competitive ratio of at most $\zeta$ for a graph with an independent kissing number at most $\zeta$. This result is tight since the competitive ratios of any deterministic online algorithm for these problems is at least $\zeta$ (Theorem~\ref{th:mds_geometric_objects_1} and Theorem~\ref{th:mids_geometric_objects}). 
%Note that this algorithm does not require the object's representation upon arrival. 

 \item  For the MCDS problem, we prove that, for any graph having the independent kissing number at most~$\zeta$, an algorithm $\AGCDS$ achieves a competitive ratio of at most $2\zeta$ (Theorem~\ref{cds}). To complement this,  we prove that the lower bound of the competitive ratio is at least~${2(\zeta-1)}$ which holds even for a  geometric intersection graph of translates of a convex object in $\IR^2$   (Theorem~\ref{thm:mcds_con}).% Note that this algorithm also does not require the object's representation upon arrival.
 
 \item  Next, we consider the MC problem for geometric intersection graphs of bounded scaled $\alpha$-fat objects in $\IR^d$ having widths in the interval $[1,m]$. For this, due to \cite{CapponiP05}, the best known competitive ratio is $\zeta$, where $\zeta$  is the independent kissing number of bounded scaled $\alpha$-fat objects having widths in the interval $[1,m]$.
Inspired by Erlebach and Fiala~\cite{ErlebachF02}, we present an algorithm $\AL$  having a competitive ratio of at most $O\left({\zeta'}{\log m}\right)$, where  $\zeta'$ is the independent kissing number of bounded scaled $\alpha$-fat objects having widths in the interval $[1,2]$ (Theorem~\ref{thm:AL}). Since the value of $\zeta$ could be very large compared to ${\zeta'}{\log m}$ (see Remark~\ref{zeta_zeta'}), it is a significant improvement. Note that this algorithm uses the geometric representation of the objects. Next, we prove that the algorithm $\AFF$, which does not require object representation,  achieves asymptotically the same competitive ratio.

 %Note that in addition to Classical-Online-Model, $\AL$ needs to know the object's width upon arrival.

 \item All above-obtained results for the MC problem, MDS problem and its variants depend on the graph parameter: the independent kissing number $\zeta$. Therefore, the value of~$\zeta$  becomes a crucial graph parameter to investigate. To estimate the value of~$\zeta$, we consider various families of geometric objects.
% such as translated and congruent copies of a hypercube in $\IR^d$, congruent balls in~$\IR^3$,  and regular polygons in~$\IR^2$. 
We show that for congruent balls in~$\IR^3$  the value of~$\zeta$ is 12. For translates of a regular $k$-gon ($k\in([5,\infty)\cup\{3\})\cap\mathbb{Z}$) in~$\IR^2$, we show that $5\leq \zeta\leq 6$. {For translates of a hypercube and congruent hypercubes in~$\IR^d$, the value of $\zeta$ is $2^d$ and $2^{d+1}$, respectively. We also give bounds on the value of~$\zeta$ for $\alpha$-fat objects in $\IR^d$ having widths in the interval $[1,m]$ and disks in $\IR^2$ having radii in the interval $[1,2]$.} We feel that these results will find applications in many problems. We illustrate a few in Section~\ref{Application_zeta}. It is also summarized in Table~\ref{tab:competitive_ratios}.
\end{enumerate}

Note that all of our algorithms are deterministic, and they need not know the value of $\zeta$  in advance. In particular, algorithms $\AGDS$, $\AGCDS$ and $\AFF$ do not need to know the object's representation; whereas, the algorithm $\AL$ needs to know the object's width upon its arrival.

\begin{table}[htbp]
\centering
\caption{Obtained Competitive Ratios for Different Problems}
\label{tab:competitive_ratios}
\begin{threeparttable}
\begin{tabular}{p{4.75cm}p{1.75cm}p{3.25cm}p{3.25cm}p{2 cm}}
\toprule
\textbf{Geometric Intersections Graphs of} &\textbf{MDS, MIDS}& \textbf{MC} & \textbf{MCDS} & \textbf{t-Relaxed Coloring} \\
\midrule
  Congruent balls in $\mathbb{R}^3$ & $12$\tnote{*} & 12 &22 &288 \\
 \midrule
 Translated copies of a hypercube in $\mathbb{R}^d$, where $d \in \mathbb{Z}^+$ & $2^{d}$\tnote{*} & $2^d$ & $2(2^d-1)$ & $2^{2d+1}$\\
 \midrule
 Congruent hypercubes in $\IR^d$, where $d \in \mathbb{Z}^+$ & $2^{d+1}$&$2^{d+1}$ & $2(2^{d+1}-1)$& $2^{2d+3}$\\
 \midrule
 Translated copies of a regular $k$-gon (for $k = 3$ and $k \geq 5$) &  6 & 6& 10 & 72 \\
\midrule
  Bounded scaled disks in $\mathbb{R}^2$ with radii in $[1, 2]$ & $11$\tnote{*}  & 11&20 &242 \\
 \midrule
  Bounded-scaled $\alpha$-fat objects in $\mathbb{R}^d$ with widths in $[1, m]$, where $d \in \mathbb{Z}^+$ &    $\Big(\frac{m}{\alpha}+2\Big)^d$ & $O\Big(\Big(\frac{2}{\alpha}+2\Big)^d\log m\Big)$ & $2\left(\Big(\frac{m}{\alpha}+2\Big)^d-1\right)$ & $2\Big(\frac{m}{\alpha}+2\Big)^{2d}$\\
  \bottomrule
  \vspace{-0.75 cm}
 % \begin{flushleft}
 %  \item \small{$^{\star}$ Optimal results}
 %\end{flushleft}
\end{tabular}
\begin{tablenotes}
\item[*] The result is optimal due to matching the lower bound.
\end{tablenotes}
\end{threeparttable}
\end{table}

\subsection{Organization}
First, we give some relevant definitions and preliminaries in 
 Section~\ref{Notations}.
In Section~\ref{MDS_variants}, we discuss the performance of well-known online greedy algorithms for MDS and its variants. Then in Section~\ref{sec:lb_MCDS}, we propose a lower bound of the MCDS problem for translated copies of a convex object in $\IR^2$. In Section~\ref{sec:coloring}, we analyse the performance of the algorithm $\AL$ and $\AFF$ for a family of bounded scaled $\alpha$-fat objects in $\IR^d$ having widths in the interval $[1,m]$.
After that, in Sections~\ref{sec:IKS}, we study the value of $\zeta$  for specific families of geometric intersection graphs, and we illustrate a few applications in Section~\ref{Application_zeta}.
Finally, in Section~\ref{10}, we give a conclusion.

\section{Notation and Preliminaries}\label{Notations}
We use $\mathbb{Z}^{+}$ to denote the set of positive integers. For $n\in\mathbb{Z}^{+}$, we use $[n]$ to denote the set $\{1,2,\ldots,n\}$. We use $\mathbb{Z}_n$ to denote the set $\{0,1,\ldots,n-1\}$ of non-negative integers less than $n$, where the arithmetic operations such as addition, subtraction, and multiplication are modulo $n$.

\subsection{Related to Geometric Objects}
By a \emph{geometric object}, we refer to a compact (i.e., closed and bounded) convex set in $\mathbb{R}^d$ with a nonempty interior.
The \emph{center} of a geometric object is defined as the center of the smallest radius ball inscribing that geometric object. 
%Let $x$ and $y$  be two points in $\IR^d$, we use $d(x,y)$ to represent the distance between  $x$ and $y$ under the Euclidean norm.
% \blue{We use  
% $\partial(\sigma$) to represent the boundary of the geometric object $\sigma$.} 
We use  $\sigma(c)$ to represent a geometric object $\sigma$ centered at a point $c$. {We use the term \emph{unit hypercube} to denote a hypercube having side length one; while we use \emph{unit ball} to denote a ball having radius one.}
Two objects are \emph{{congruent}} if one can be transformed into the other by a combination of rigid motions, namely translation, rotation, and reflection.
Two geometric objects are said to be \emph{{non-overlapping}} if they have no common interior; whereas we call them \emph{{non-touching}} if their intersection is empty. Note that two non-overlapping objects may intersect in their boundary.  
% A point $y'$ is a \emph{reflection of a point} $y$ in a plane over a point $x$ if the midpoint of the line segment $\overline{yy'}$ is $x$.
% Now, we define the \emph{reflection of an object} $\sigma$ over a point {$c\in \IR^d$, denoted by $-\sigma(c)$, as $-\sigma(c)=\{y\in\IR^d\ |\ \frac{x+y}{2}=c \text{ and } x\in\sigma\}=\{2c-x\ |\ x\in\sigma\}$.
% Note that for any $c_1,c_2\in\IR^d$, $-\sigma(c_1)=-\sigma(c_2)+\tau$, where $\tau=2(c_1-c_2)$ is a translating vector. We use $-\sigma$ to denote a translated copy of $-\sigma(c)$.

 \subsection{Related to Graph}\label{sect:graph} 
% A \emph{simple graph} is a graph having no loops or multiple edges.
For any subset $V'\subseteq V$ of a graph $G=(V,E)$, the \emph{induced subgraph} $G[V']$ is a graph whose vertex set is $V'$ and whose edge set consists of all of the edges in $E$ that have both endpoints in  $V'$. For a given graph $H$, a graph $G$ is \emph{$H$-free} if $G$ does not contain $H$ as an induced subgraph.
A graph $G=(V,E)$ of order at least 2 is said to be \emph{bipartite} if its vertex set $V$ can be partitioned into two non-empty subsets $V_1$ and $V_2$ such that each of the edges in $G$ connects a vertex in $V_1$ to a vertex in $V_2$. The sets $V_1$ and $V_2$ are known as the partite sets.
A \emph{complete bipartite} graph is a bipartite graph such that two vertices are adjacent if and only if they are in different partite sets. When the partite sets $V_1$ and $V_2$ have sizes $m$ and $n$, respectively, then the corresponding complete bipartite graph is denoted as $K_{m,n}$.

\subsection{Kissing Number vs Independent Kissing Number}\label{K_vs_zeta}

%\paragraph{Kissing Number and Independent Kissing Number.}
We give a graph theoretic definition of {{independent kissing  number}} in Definition~\ref{def1}.
 {An alternative definition  for a family $\cal S$ of geometric objects in $\mathbb{R}^d$ is as follows.}
\begin{definition}\label{def2}
Let $\cal S$ be a family of geometric objects, and 
let $u$ be any object belonging to the family ${\cal S}$.  Let $\zeta_{u}$ be the maximum number of pairwise non-touching objects in $\cal S$ that we can arrange in and around $u$ such that all of them are intersected by $u$. The independent kissing number  $\zeta$ of  $\cal S$ is defined to be $\max_{u\in {\cal S}}\zeta_u$.
\end{definition}

 A set $K$ of objects belonging to the family $\cal S$ is said to form an \emph{independent kissing configuration} if (i) there exists an object $u\in K$  that intersects all objects in $K\setminus\{u\}$, and (ii) all objects in   $K\setminus\{u\}$ are mutually non-touching to each other. Here $u$ and $K\setminus\{u\}$ are said to be the \emph{core} and \emph{independent set}, respectively, of the independent kissing configuration. The  configuration is  considered \emph{optimal} if $| K\setminus\{u\}| =\zeta$, where $\zeta$ is the independent kissing number of $\cal S$.
The configuration is said to be \emph{standard} if all objects in $K\setminus\{u\}$ are mutually non-overlapping with $u$, i.e., their common interior is empty but touches the boundary of $u$.

For a convex object $u \subset \mathbb{R}^d$, the \emph{{Kissing number}} (also known as \emph{{Newton number}}), denoted as $\kappa(u)$, is the maximum number of non-overlapping congruent copies of $u$ that can be arranged around $u$ so that each of them touches $u$~\cite{BrassMP}.
\begin{figure}[htbp]
     \begin{subfigure}[b]{0.33\textwidth}
         \centering
\includegraphics[width=32 mm]{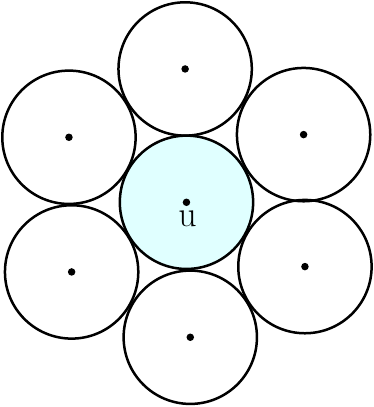} %width=65mm
\caption{}
\label{fig:kissing_number}
     \end{subfigure}
     \hfill
     \begin{subfigure}[b]{0.32\textwidth}
         \centering
\includegraphics[width=33 mm]{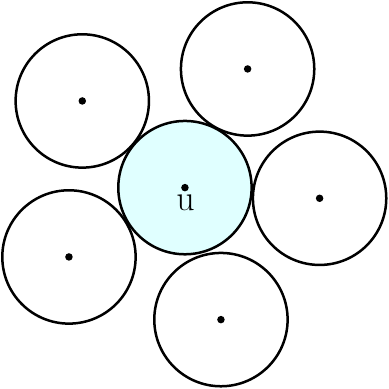}
\caption{}
\label{fig:Independent kissing}
     \end{subfigure}
 \hfill
     \begin{subfigure}[b]{0.33\textwidth}
        \centering
\includegraphics[width=36 mm]{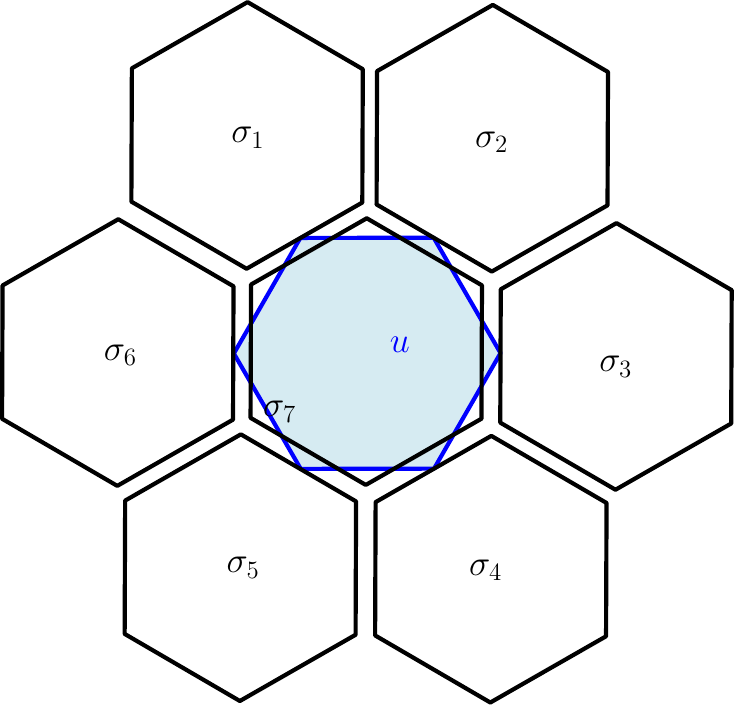}
\caption{}
\label{fig:cong_ind}
     \end{subfigure}
       \caption{(a) Optimal kissing configurations for unit disks; (b)  Optimal independent kissing configurations for unit disks; (c) An optimal independent kissing configuration for congruent hexagons.}
       \label{fig:configuration}
\end{figure}
Note that the kissing number is defined mainly for a set of congruent objects. On the other hand, we define the independent kissing number for any set of geometric objects.
In the kissing configuration, objects around $u$ are non-overlapping but can touch each other; on the other hand, in the independent kissing configuration, objects around $u$ are non-touching. For more illustration, see Figure~\ref{fig:configuration}. 
For a set of congruent copies (respectively, translated copies) of $u$, it is easy to observe that the value of the independent kissing number is at most $\kappa(u)+1$ (respectively, $\kappa(u)$), where $\kappa(u)$ is the kissing number of $u$.  For example, Zhao~\cite{ZHAO1998293} proved that for the hexagon, the value of the kissing number is 6. On the other hand, it is easy to observe from Figure~\ref{fig:cong_ind} that the value of the independent kissing number is 7 for a family of congruent hexagons.

\subsection{Fat Objects}\label{App_fat}
 A number of different definitions of fatness (not extremely long and skinny) are available in the geometry literature~\cite{BergSVK02,Chan03,EfratKNS00}. For our purpose, we define the following.
Let $\sigma$ be an {object} and $x$ be any point in $\sigma$. Let {$\alpha(x)$} be the ratio between the minimum and the maximum distance (under Euclidean norm) from $x$ to the boundary $\partial(\sigma)$ of the object $\sigma$. In other words, $\alpha(x)= \frac{\min_{y\in \partial(\sigma)}d(x,y)}{\max_{y\in \partial(\sigma)}d(x,y)}$, where $d(.,.)$ denotes the Euclidean distance.
	The \emph{aspect ratio} $\alpha$ of an object $\sigma$ is defined as the maximum value of $\alpha(x)$ for any point  $x\in \sigma$, i.e., $\alpha=\max \{\alpha(x)  : x \in  \sigma\}$. An object is said to be \emph{$\alpha$-fat object} if its aspect ratio is $\alpha$. Observe that $\alpha$-fat objects are invariant under translation, rotation and scaling.
 The \emph{aspect point} of  $\sigma$ is a point in $\sigma$ where the aspect ratio of $\sigma$ is attained.
 % A point $c\in \sigma$ with $\alpha(c)=\alpha(\sigma)$ is defined as the \emph{center} of the object $\sigma$ (see Figure~\ref{fig:aspect_2}).  
 The minimum distance (respectively, maximum distance) from  the aspect point to the boundary of the object is referred to as the \emph{width} (respectively, \emph{height}) of the object. Observe that $\alpha$-fat objects are invariant under translation, rotation and scaling.
 % For more details on $\alpha$-fat object, one may see~\cite{DeJKS23-arxiv}.

\subsection{Convex Distance Function}
Let $x,y$ be any two points in $\IR^d$.
% We use $d(x,y)$ to represent the distance between  $x$ and $y$ under the Euclidean norm.
Let $C$ be a convex object containing the origin in its interior.
 We translate $C$ by a vector $x$ and consider the ray from $x$ through $y$. Let $v$ denote
the unique point on the boundary of $C$ hit by this ray.
The convex distance function $d_{C}$, induced by the object $C$,
is defined as $d_{C}(x,y)=\frac{d(x,y)}{d(x,v)}$~\cite{IckingKLM95}.
% Note that the convex distance function satisfies the following properties.

% \begin{property}\label{prop:conv}
%     \begin{itemize}
%      \item [(i)] The convex distance function $d_{C}$ is symmetric $($i.e., $d_{C}\left(x,y\right)=d_{C}\left(y,x\right))$ if and only if $C$ is a centrally symmetric object.
%     \item[(ii)]  In general,  $d_C(x,y)=d_{-C}(y,x)$, where $x$, $y$ are any point in $\IR^d$ and $-C$ is a reflected copy of $C$ through the origin. 
% \item [(iii)] The convex distance function $d_{C}$ follows the triangular inequality, i.e., $d_{C}\left(x,y\right) \leq d_{C}\left(x,z\right)+d_{C}\left(z,y\right)$, where $x$, $y$ and $z$ are any three points in $\mathbb{R}^{d}$.
%     \end{itemize}
% \end{property}
% \blue{Note that  when  $C$ is not a centrally symmetric object, the convex distance function $d_C$  fails to  satisfy the metric property (symmetric property).}

\section{Dominating Set and its Variants}\label{MDS_variants}

% \section{Online Greedy Algorithms for Graphs }\label{sect:greedy}
% In this section, we discuss the popularly known greedy online algorithms for MDS, MIS and MCDS problems for graphs.  We show how their performance depends on the independent kissing number $\zeta$ of the graph. Note that the algorithms need not know the value of $\zeta$ in advance.
In this section, we discuss well-known greedy online algorithms ($\AGDS$ and $\AGCDS$) for MDS and MCDS problems for graphs and show how their performance depends on the parameter, independent kissing number, $\zeta$. We would like to mention that all the algorithms presented in this section need not know the value of $\zeta$  in advance, and the object's representation is also unnecessary.

\subsection{Minimum Dominating Set}\label{2.1}
The algorithm $\AGDS$ for 
 finding a minimum dominating set is as follows.
The algorithm maintains a feasible dominating set $\A$. Initially, $\A=\emptyset$. On receiving a new vertex $v$, if the vertex is not dominated by the existing dominating set $\A$, then update $\A\leftarrow\A \cup \{v\}$.
Eidenbenz~\cite{Eidenbenz}  showed that this algorithm achieves an optimal competitive ratio of $5$ for the unit disk graph. It is easy to generalize this result for graphs with a fixed independent kissing number $\zeta$.

\begin{observation}\label{obs1}
The vertices returned by the algorithm $\AGDS$ are
pairwise non-adjacent. In other words, the solution set is always an independent set. 
\end{observation}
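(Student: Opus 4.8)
The plan is to argue directly from the acceptance rule of $\AGDS$, using the fact that in the Classical-Online-Model a vertex, once placed in the solution set, can never be removed. First I would fix an arbitrary pair of vertices $u,v$ that both belong to the returned set $\A$, and assume without loss of generality that $u$ arrives strictly before $v$ in the input sequence. The entire task then reduces to showing $\{u,v\}\notin E$.

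The key observation is that $\A$ is monotone: the algorithm only ever adds vertices to $\A$ and never deletes one. Consequently $u$, which was accepted upon its own arrival, is still a member of $\A$ at the moment $v$ is revealed. Now I would invoke the acceptance criterion itself: $v$ is added to $\A$ \emph{only} when $v$ is not dominated by the current $\A$, i.e.\ when $v$ has no neighbour in $\A$ at that instant. Since $u\in\A$ at that instant, $v$ cannot be adjacent to $u$; otherwise $v$ would already be dominated by $u$ and $\AGDS$ would not have inserted it. Hence $\{u,v\}\notin E$. As $u,v$ were an arbitrary pair of accepted vertices, every two vertices of $\A$ are non-adjacent, so $\A$ is an independent set.

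I do not anticipate a genuine obstacle, since the statement is in essence a reformulation of the greedy rule. The only points requiring care are to justify the ``still a member'' claim — which rests on the irreversibility of decisions in the model — and to order the two vertices by arrival time so that the argument is applied in the correct direction (testing the later-arriving vertex against the earlier one); this ordering makes the otherwise symmetric reasoning unambiguous.
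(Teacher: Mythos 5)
Your proof is correct and follows exactly the reasoning the paper intends: the paper states this observation without proof precisely because it is immediate from the greedy acceptance rule, which is the argument you give (a later-accepted vertex cannot be adjacent to an earlier-accepted one, else it would have been dominated and rejected). The one subtlety you implicitly handle correctly is that ``not dominated'' for a newly arrived vertex reduces to having no neighbour in $\A$, since the containment case cannot apply at arrival time.
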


 \begin{theorem}\label{th:mds_geometric_objects_1}
 {The algorithm $\AGDS$ has a competitive ratio of at most $\zeta$ for the MDS problem of a graph  having an independent kissing number at most~$\zeta$. This result is tight: the competitive ratio of any online algorithm for this problem is at least $\zeta$.}
\end{theorem}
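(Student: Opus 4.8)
The plan is to establish two separate bounds corresponding to the two claims in the theorem: the upper bound that $\AGDS$ is $\zeta$-competitive, and the matching lower bound that no deterministic online algorithm can do better than $\zeta$.

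\textbf{Upper bound.} First I would set up the competitive analysis by comparing the solution set $\A$ produced by $\AGDS$ against a fixed optimal dominating set $\OO$. The key structural fact is Observation~\ref{obs1}: the set $\A$ is always an independent set, since $\AGDS$ only adds a new vertex $v$ precisely when $v$ is not dominated by the current solution, which in particular means $v$ is nonadjacent to every vertex already in $\A$. The natural charging scheme is to charge each vertex of $\A$ to a vertex of $\OO$ that dominates it. Concretely, since $\OO$ is a dominating set, every $a \in \A$ is either in $\OO$ or has a neighbour in $\OO$; pick such a dominator $\delta(a) \in \OO$. The crux is to bound how many vertices of $\A$ can be charged to a single vertex $o \in \OO$. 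If $a_1, \ldots, a_k$ are all charged to $o$, then each $a_i$ is either equal to $o$ or adjacent to $o$, so $\{a_1,\ldots,a_k\} \subseteq N[o] = N(o) \cup \{o\}$. Because $\A$ is an independent set, these vertices are pairwise nonadjacent, so they form an independent set inside $N[o]$. I would then argue that at most $\zeta$ of them can lie in $N(o)$ (this is exactly the definition of the independent kissing number, $\varphi(G[N(o)]) \le \zeta$), and handle the possible extra vertex $o$ itself carefully — here one must check that $o \in \A$ and $o$ adjacent to another charged $a_i$ cannot both happen, since that would violate independence of $\A$. The conclusion is that each $o \in \OO$ receives at most $\zeta$ charges, giving $|\A| \le \zeta\,|\OO|$, i.e.\ a competitive ratio of at most $\zeta$.

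\textbf{Lower bound.} For the matching lower bound I would exhibit an adversarial input sequence forcing any deterministic online algorithm to use at least $\zeta$ times the optimum. The idea is to realize an optimal independent kissing configuration as a graph: present a core vertex $u$ together with $\zeta$ pairwise nonadjacent vertices $v_1, \ldots, v_\zeta$, each adjacent to $u$, where the $v_i$ form an independent set in $N(u)$. The adversary reveals the $v_i$ one at a time in the Classical-Online-Model. Since each newly revealed $v_i$ is so far undominated (it is nonadjacent to all previously revealed vertices and $u$ has not yet appeared), the algorithm is forced to accept each $v_i$ into its solution, and once accepted this decision is irreversible. After all $\zeta$ independent vertices have been accepted, the adversary reveals the core $u$: the offline optimum is the single vertex $u$, which dominates all of $v_1,\ldots,v_\zeta$, so $\OO = 1$, whereas the online algorithm has already committed $\zeta$ vertices. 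This yields a ratio of $\zeta$. I would note that this construction is realizable as a genuine geometric intersection graph by placing the objects according to an optimal independent kissing configuration (using Definition~\ref{def2}), so the bound holds in the intended geometric setting and not merely for abstract graphs.

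\textbf{Main obstacle.} The routine geometric realization is easy; the delicate point in the upper bound is the treatment of the vertex $o$ itself when $o \in \A$. One must verify that the independent set of charges to $o$ has size at most $\zeta$ rather than $\zeta+1$ — that is, that $o$ and its charged neighbours cannot simultaneously overpopulate $N[o]$ beyond what independence permits. The cleanest way around this is to observe that if $o \in \A$, then no neighbour of $o$ can be in $\A$ (by independence of $\A$), so $o$ is the \emph{only} vertex of $\A$ in $N[o]$ and receives a single charge; whereas if $o \notin \A$, all charges lie in $N(o)$ and number at most $\varphi(G[N(o)]) \le \zeta$. Either way the charge count is bounded by $\zeta$, closing the argument.
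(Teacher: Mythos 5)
Your proof is correct and follows essentially the same approach as the paper: the upper bound rests on the independence of the greedy solution (Observation~\ref{obs1}) plus bounding, for each optimal vertex $o$, the independent set of solution vertices it dominates by $\zeta$ (your charging scheme with the $o\in\A$ versus $o\notin\A$ case split is just a repackaging of the paper's partition into $\C=\A\cap\OO$, $\A'$, and $\OO'$), and the lower bound uses the identical adversarial sequence that reveals an independent set of size $\zeta$ in some neighbourhood $N(v)$ before revealing $v$ itself.
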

\begin{proof}
    
\textbf{Upper Bound Obtained by $\AGDS$.}
Let $G=(V,E)$ be a graph received by the algorithm. Let $\zeta$ be the independent kissing number of $G$.
 Let $\A$ be the set of vertices reported by $\AGDS$. 
Similarly, let $\OO$ be the set of vertices reported by an offline optimal algorithm for $G$.
Let $\C=\A\cap \OO$. Let $\A'=\A \setminus \C$ and $\OO'=\OO\setminus \C$. 

Let $o\in \OO$ be any vertex in the optimum solution. Let $\A_o\subseteq \A'$ be the set of vertices in $\A'$ that are adjacent to the vertex $o$. Since the graph $G$  has an independent kissing number $\zeta$ and the  vertices in $\A_o$ are pairwise non-adjacent (by Observation~\ref{obs1}), we have $|\A_o|\leq~\zeta$.

Let $a$ be any vertex in $\A'$. Since $\OO$ also dominates $\A'$ and  $a\notin \OO$, the vertex $a$ must be  a neighbour of some vertex in $\OO$. 
Since $\A$ is also an independent set (by Observation~\ref{obs1}), the vertex $a$ cannot be a neighbour of some vertex in $\C$.
As a result, the vertex $a$ must be a neighbour of some vertex in $\OO'$.  In other words, $a\in \A_o$ for some $o\in \OO'$. Therefore, $\cup_{o\in \OO'}\A_o=\A'$. Hence, $|\A'|\leq \zeta|\OO'|$, resulting in $|\A|\leq \zeta|\OO|$. Thus, our algorithm achieves a competitive ratio of at most~$\zeta$.

%\noindent
\textbf{Lower Bound of the Problem.}
Let $\zeta$ be the independent kissing number of a graph $G=(V,E)$. So, there exists a vertex $v\in V$ such that  $\varphi(G[N(v)])=\zeta$. Recall that $\varphi(G)$  denotes the size of a maximum independent set of a graph $G$. In other words, there exists a set $I\subseteq N(v)$ of vertices such that $I$ is an independent set of size $\zeta$. Let $I=\{v_1,v_2,\ldots,v_{\zeta}\}$. 
For the sequence of vertices   $v_1,v_2,\ldots,v_{\zeta}, v$  any online algorithm will report the first $\zeta$ vertices as a dominating set; whereas the optimum dominating set contains only the last vertex $v$.  Thus, the lower bound of the competitive ratio for the minimum dominating set is at least~$\zeta$.
 \end{proof}

As a result of Observation~\ref{obs1}, the output produced by $\AGDS$ is an independent dominating set. It is easy to observe that both arguments for the upper-bound and the lower-bound construction would apply to the MIDS problem. Therefore, we have the following.

 \begin{theorem}\label{th:mids_geometric_objects}
{The algorithm $\AGDS$ has a competitive ratio of at most $\zeta$ for the MIDS problem of a graph  having an independent kissing number at most~$\zeta$. This result is tight: the competitive ratio of any online algorithm for this problem is at least $\zeta$.}
\end{theorem}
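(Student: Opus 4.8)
The plan is to observe that $\AGDS$ is already a correct algorithm for the MIDS problem, and that both halves of Theorem~\ref{th:mds_geometric_objects_1} transfer almost verbatim; the only new ingredient is the independence guaranteed by Observation~\ref{obs1}. First I would record feasibility: by Observation~\ref{obs1} the set $\A$ returned by $\AGDS$ is an independent set, and by construction (exactly as in the MDS analysis) it is a dominating set, so $\A$ is in fact an independent dominating set. Hence $\AGDS$ is a legitimate online algorithm for MIDS, and its output may be compared against a minimum IDS.

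For the upper bound I would take $\OO$ to be a minimum independent dominating set and simply rerun the counting from the proof of Theorem~\ref{th:mds_geometric_objects_1}. That argument used only two properties of the offline solution: that $\OO$ dominates every vertex of $\A'=\A\setminus\C$ (where $\C=\A\cap\OO$), and that $\A$ is independent, so no $a\in\A'$ has a neighbour in $\C$. Both properties persist here, because every IDS is in particular a dominating set. Grouping each $a\in\A'$ under a dominating vertex $o\in\OO'=\OO\setminus\C$ and using $|\A_o|\le\varphi(G[N(o)])\le\zeta$ then gives $|\A|\le\zeta|\OO|$. Equivalently, since every IDS is a DS one has $\opt_{\mathrm{MDS}}(G)\le\opt_{\mathrm{MIDS}}(G)$, so the bound $|\A|\le\zeta\,\opt_{\mathrm{MDS}}(G)$ already established in Theorem~\ref{th:mds_geometric_objects_1} immediately yields $|\A|\le\zeta\,\opt_{\mathrm{MIDS}}(G)$.

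For the lower bound I would reuse the same star gadget: choose $v$ with $\varphi(G[N(v)])=\zeta$, an independent set $I=\{v_1,\dots,v_\zeta\}\subseteq N(v)$, and present the sequence $v_1,\dots,v_\zeta,v$. Exactly as in the proof of Theorem~\ref{th:mds_geometric_objects_1}, when each $v_i$ arrives its only already-revealed neighbours are the pairwise non-adjacent $v_j$'s, so any online algorithm is forced to accept all $\zeta$ leaves to maintain a feasible dominating set; being a subset of the independent set $I$, this output is a valid IDS of size $\zeta$. Meanwhile $\{v\}$ dominates every $v_i$ and is trivially independent, so it is an IDS of size $1$, giving a competitive ratio of at least $\zeta$.

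Since the whole statement is a direct adaptation rather than a new argument, I do not expect a genuine obstacle. The only points deserving an explicit line are that a minimum IDS is still a dominating set—so the upper-bound counting is left untouched—and that both the forced leaf set and the singleton $\{v\}$ are genuinely \emph{independent} dominating sets of the lower-bound instance. Both of these are immediate from Observation~\ref{obs1} and the independence of $I$, respectively.
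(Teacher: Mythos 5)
Your proposal is correct and takes essentially the same route as the paper, which likewise argues that Observation~\ref{obs1} makes the output of $\AGDS$ an independent dominating set and that both the upper-bound counting and the star lower-bound construction of Theorem~\ref{th:mds_geometric_objects_1} carry over unchanged (your remark that every IDS is a DS, so $\opt_{\mathrm{MDS}}\le\opt_{\mathrm{MIDS}}$, is the cleanest justification of that transfer). One small wording slip: when $v_i$ arrives it has \emph{no} already-revealed neighbours at all (the $v_j$'s are non-adjacent to it), and that is precisely why every online algorithm is forced to accept it.
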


% \begin{figure}[!ht]
% \centering
% \includegraphics[width=140mm]{Figures/Independent set.pdf}
% \caption{Input instance of unit disks for the maximum independent set. From each cluster, Alice will add Yellow colored objects as the input instance only if Bob chooses a Blue colored object.}
% \label{fig:IS}
% \end{figure}

\subsection{Minimum Connected Dominating Set}\label{2.2}

{Recall that for the case of the MCDS problem, we use Relaxed-Online-Model, and the revealed induced subgraph must always be connected.
% Recall that if we cannot add previously arrived vertices in the solution set for the minimum connected dominating set problem, the solution may result in a disconnected dominating set. Therefore,  we use Relaxed-Online-Model for this problem.
 In this setup, Eidenbenz~\cite{Eidenbenz} proposed a greedy algorithm for unit disk graphs and showed that the algorithm achieves a competitive ratio of at most $8+\epsilon$. We analyse the same algorithm  for graphs with the fixed independent kissing number~$\zeta$.}

\textit{Description of Algorithm $\AGCDS$:} Let $V$ be the set of vertices presented to the algorithm and $\A \subseteq V$ be the set of vertices chosen by our algorithm such that $\A$ is a connected dominating set for the vertices in $V$. The algorithm maintains two disjoint sets $\A_1$ and $\A_2$ such that $\A= \A_1 \cup \A_2$. Initially, both $\A_1,\A_2=\emptyset$. Let $v$ be a new vertex presented to the algorithm. The algorithm first updates $V\leftarrow V\cup \{v\}$ and then does the following. 
\begin{itemize}
  \item If $v$ is dominated by the set $\A$, do nothing.
  \item Otherwise,  first,   add $v$ to $\A_1$.  If $v$ has at least one neighbour in $V$,  choose any one  neighbour, say $u$, of $v$ from $V$, and  add  $u$ to $\A_2$. In other words,  update $\A_1\leftarrow\A_1 \cup \{v\}$ and if necessary update $\A_2\leftarrow\A_2 \cup \{u\}$. Note that $u$ is already dominated by the existing dominating set $\A$. As a result, if we add $u$ to the dominating set, it will result in a connected dominating set.
  \end{itemize}

% First, we present two important lemmas that will play an essential role in the analysis of the algorithm.

% Let us assume that the input graph is always a connected graph.
Note that the algorithm produces a feasible connected dominating set. The addition of vertex in $\A_1$ assures that $\A$ is a dominating set, and the addition of vertices in $\A_2$ ensures that  $\A$ is a connected dominating set. 
Now, using induction, it is easy to prove the following.

\begin{lemma}\label{lem:inva}
The algorithm $\AGCDS$ maintains  the following two invariants: (i) $\A_1$ is an independent set, and (ii)  $|\A_1|\geq|\A_2|$.
% \item[(ii)]  $|\A_1|\geq|\A_2|$
% \begin{itemize}
% \item[(i)] $\A_1$ is an independent set, and
% \item[(ii)]  $|\A_1|\geq|\A_2|$.
% \end{itemize}
\end{lemma}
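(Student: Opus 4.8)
The plan is to prove both invariants simultaneously by induction on the number of vertices presented to $\AGCDS$, tracking how $\A_1$ and $\A_2$ evolve as each new vertex $v$ is processed. For the base case, before any vertex arrives we have $\A_1 = \A_2 = \emptyset$, so (i) holds vacuously (the empty set is independent) and (ii) holds since $0 \geq 0$. For the inductive step I would assume both invariants hold for the current sets $\A_1, \A_2$ and analyse the two branches of the algorithm triggered by the arrival of $v$.

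If $v$ is already dominated by $\A$, then neither $\A_1$ nor $\A_2$ changes, so both invariants are trivially preserved. The interesting branch is when $v$ is not dominated. Here the crucial step for invariant (i) is to unpack the meaning of ``not dominated'': since $v$ is new it is not in $\A$, so $v$ failing to be dominated means $v$ has no neighbour in $\A$, and in particular no neighbour in $\A_1 \subseteq \A$. Consequently $v$ is non-adjacent to every vertex currently in $\A_1$, and since all of those vertices arrived earlier (so all edges incident to $v$ among them are already revealed in the vertex-arrival model), $\A_1 \cup \{v\}$ remains an independent set. This restores invariant (i).

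For invariant (ii) I would count the net change in cardinalities during this branch. The vertex $v$ is genuinely new to $\A_1$ (it was not previously in $\A$ at all), so $|\A_1|$ increases by exactly one. On the other hand, the algorithm adds at most one vertex $u$ to $\A_2$, so $|\A_2|$ increases by at most one. Combining with the inductive hypothesis $|\A_1| \geq |\A_2|$: the left-hand side gains exactly one while the right-hand side gains at most one, so the inequality is preserved. It is also worth recording, for the disjointness of $\A_1$ and $\A_2$, that $u$ is a neighbour of $v$ and hence cannot already lie in $\A$ — otherwise $v$ would have been dominated — so $u$ is a fresh addition; this fact is not strictly needed for the inequality, but it keeps $\A = \A_1 \cup \A_2$ a disjoint union as the description asserts.

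I do not anticipate a serious obstacle, since the lemma is essentially a bookkeeping argument. The one point requiring genuine care is invariant (i), where one must correctly translate ``$v$ is not dominated by $\A$'' into ``$v$ has no neighbour in $\A_1$'': this is exactly the hook that forces every newly added core vertex to be non-adjacent to all previously added ones, and it is the only place where the domination test does real work.
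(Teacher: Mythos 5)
Your proof is correct and follows essentially the same route as the paper's: induction on arriving vertices, with the key step translating ``$v$ is not dominated by $\A$'' into ``$v$ has no neighbour in $\A_1$'' to preserve independence, and the observation that any addition to $\A_2$ is accompanied by an addition to $\A_1$ to preserve the cardinality inequality. The only cosmetic difference is that the paper states invariant (ii) via this pairing directly rather than folding it into the induction, which changes nothing of substance.
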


\begin{proof}
Note that whenever we add any vertex to the set $\A_2$, we are also adding a vertex to $\A_1$. Therefore, the invariant~(ii) is maintained throughout the algorithm.
Now, we prove by induction that invariant~(i) is also maintained. It is easy to observe that the invariant trivially holds at the beginning. Let us assume that the invariant holds before the arrival of the $i$th vertex $v$. 
If vertex $v$ is already dominated by the existing dominating set $\A$, then we have nothing to prove. Without loss of generality, let us assume that $v$ is not dominated by the existing dominating set $\A$. Since $v$ is not dominated by $\A$,  it is not adjacent to any of the vertices in $\A$, particularly to $\A_1$. Since the existing set $\A_1$ is an independent set, $\A_1 \cup \{v\}$ is also an independent set. So, invariant~(i) is maintained. 
 \end{proof}

\noindent
The next lemma 
% (for the proof, refer to Appendix~\ref{App_2.2}) 
is a  generalization of  a result by Wan et al.~\cite[Lemma 9]{WanAF04}.

\begin{lemma}\label{lm:cds_opt}
Let $\cal I$ be  an independent set, and  $\OO$ be a minimum connected dominating set  of a  graph with the independent kissing number $\zeta$. Then $| {\cal I}| \leq (\zeta-1){| \OO| } +1$.
\end{lemma}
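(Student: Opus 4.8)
The plan is to exploit the connectivity of $\OO$ through a spanning-tree ordering and then charge each vertex of ${\cal I}$ to a vertex of $\OO$. Since $\OO$ is a connected dominating set, the induced subgraph $G[\OO]$ is connected; I would fix a spanning tree of $G[\OO]$, root it arbitrarily, and list its vertices $u_1,u_2,\ldots,u_k$ (with $k=|\OO|$) in an order of non-decreasing depth (e.g.\ a BFS order), so that every $u_i$ with $i\ge 2$ has a tree-parent $u_{p(i)}$ satisfying $p(i)<i$ and $u_{p(i)}\in N(u_i)$ (tree edges are edges of $G$).

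Next I would define a charging map $f\colon {\cal I}\to\OO$ by sending each $w\in{\cal I}$ to the vertex $u_i$ of \emph{smallest} index whose closed neighborhood $N[u_i]=N(u_i)\cup\{u_i\}$ contains $w$; this is well defined because $\OO$ dominates $G$. It then suffices to bound the fibers $|f^{-1}(u_i)|$. For the root I would argue $|f^{-1}(u_1)|\le\zeta$: either $u_1\in{\cal I}$, in which case independence of ${\cal I}$ forces $f^{-1}(u_1)=\{u_1\}$, or $u_1\notin{\cal I}$ and then $f^{-1}(u_1)\subseteq N(u_1)\cap{\cal I}$ is an independent set inside $N(u_1)$, so its size is at most $\varphi(G[N(u_1)])\le\zeta$.

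The crux is the improved bound $|f^{-1}(u_i)|\le\zeta-1$ for every $i\ge 2$, and this is where connectivity pays off. Fix such an $i$ and write $u=u_i$ and $w^{\ast}=u_{p(i)}$. By the minimal-index rule defining $f$ and since $p(i)<i$, no vertex of $f^{-1}(u)$ lies in $N[w^{\ast}]$; in particular each such vertex is distinct from $w^{\ast}$ and non-adjacent to $w^{\ast}$. Assuming $u\notin{\cal I}$ (the case $u\in{\cal I}$ gives $|f^{-1}(u)|\le 1\le\zeta-1$ directly, for $\zeta\ge 2$), every vertex of $f^{-1}(u)$ lies in the \emph{open} neighborhood $N(u)$, and these vertices are pairwise non-adjacent as a subset of ${\cal I}$. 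Since $w^{\ast}\in N(u)$ and $w^{\ast}$ is non-adjacent to all of them, the set $f^{-1}(u)\cup\{w^{\ast}\}$ is an independent set contained in $N(u)$ of size $|f^{-1}(u)|+1$; hence $|f^{-1}(u)|+1\le\varphi(G[N(u)])\le\zeta$. Summing over all fibers yields $|{\cal I}|=\sum_{i=1}^{k}|f^{-1}(u_i)|\le\zeta+(k-1)(\zeta-1)=(\zeta-1)|\OO|+1$, as required.

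The step I expect to demand the most care is this augmentation argument, where one must check simultaneously that (a) the parent $w^{\ast}$ genuinely lies in the same open neighborhood $N(u)$ already containing all of $f^{-1}(u)$, and (b) $w^{\ast}$ is non-adjacent to every vertex of $f^{-1}(u)$, so that the enlarged set remains independent; part (b) is precisely what the smallest-index rule for $f$ buys us, while (a) is the tree-edge adjacency. The only other delicate bookkeeping is the closed-versus-open neighborhood distinction, needed so that vertices of ${\cal I}$ that happen to lie in $\OO$ are charged correctly. Finally, the degenerate case $\zeta=1$ should be dispatched separately: then $G$ is a disjoint union of cliques, so any independent set inside a connected component has size $1=(\zeta-1)|\OO|+1$, and the bound holds trivially.
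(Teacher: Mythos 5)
Your proof is correct and follows essentially the same route as the paper's: a spanning tree of $G[\OO]$ ordered so that every non-root vertex has an earlier neighbor (the paper uses a preorder traversal, you use BFS order), a first-dominator partition of ${\cal I}$, a bound of $\zeta$ on the first class, and augmentation of each later class by the earlier tree-neighbor to get the improved bound $\zeta-1$. Your closed-neighborhood charging (which cleanly handles vertices of ${\cal I}\cap\OO$) and the separate $\zeta=1$ case are just slightly more careful bookkeeping, not a different argument.
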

\begin{proof}
 Let $T$ be a spanning tree of $\OO$.
 Consider an arbitrary preorder traversal of $T$ given by $v_1, v_2,\ldots, v_{{| \OO| }}$. Let ${\cal I}_1$ be the set of vertices in ${\cal I}$ adjacent to $v_1$. For any $2\leq i\leq {| \OO| }$, let ${\cal I}_i$ be the set of vertices in ${\cal I}$ that are adjacent to $v_i$ but not adjacent to any of $v_1,\ v_2,\ldots,\ v_{i-1}$. Then ${\cal I}_1,\ {\cal I}_2,\ldots,\ {\cal I}_{{| \OO| }}$ forms a partition of ${\cal I}$. Since the graph has the independent kissing number $\zeta$, each vertex $v\in\OO$ can be adjacent to at most $\zeta$  pairwise non-adjacent vertices. As a result, for each $1\leq i\leq |\OO|$, $| {\cal I}_i|  \leq \zeta$. On the other hand,  for any $2\leq i\leq {| \OO| }$, at least one vertex $v_t \in \{v_1,\ v_2,\ldots,\ v_{i-1}\}$ is adjacent to $v_i$ and each vertex in ${\cal I}_i$ is adjacent to none of the vertices in  $\{v_1,\ v_2,\ldots,\ v_{i-1}\}$. So, $\{v_t\}\cup {\cal I}_i$ forms an independent set adjacent to the vertex $v_i$.
Therefore, $| {\cal I}_i|  \leq (\zeta-1)$ where $2\leq i\leq {| \OO| }$. Consequently, we have 
$| {\cal I}| = \sum_{i=1}^{{| \OO| }} | {\cal I}_i|
   = {\cal I}_1 +\sum_{i=2}^{{| \OO| }} | {\cal I}_i| 
  \leq \zeta + ({| \OO| }-1)(\zeta-1)
   \leq (\zeta-1){| \OO| } +1$.
   % \begin{align*}
%      | {\cal I}| &= \sum_{i=1}^{{| \OO| }} | {\cal I}_i|
%    = {\cal I}_1 +\sum_{i=2}^{{| \OO| }} | {\cal I}_i| \\
%    &\leq \zeta + ({| \OO| }-1)(\zeta-1)
%    \leq (\zeta-1){| \OO| } +1.
% \end{align*}
% 
Hence, the lemma follows.
 \end{proof}

Now, we present the following theorem.
\begin{theorem}\label{cds}
The algorithm $\AGCDS$ has an asymptotic competitive ratio of at most~$2(\zeta-1)$
and an absolute competitive ratio of at most~$2\zeta$ for the MCDS problem for a graph having an independent kissing number at most~$\zeta$.
\end{theorem}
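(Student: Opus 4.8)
The plan is to combine the structural invariants of Lemma~\ref{lem:inva} with the counting bound of Lemma~\ref{lm:cds_opt}, so that almost all the real work has already been done and the theorem reduces to a short chain of inequalities. The output $\A$ of $\AGCDS$ splits as $\A = \A_1 \cup \A_2$, and the two invariants are exactly what is needed: invariant~(ii) says $|\A_2| \leq |\A_1|$, which lets me control the total solution size purely in terms of $\A_1$, while invariant~(i) says $\A_1$ is an independent set, which makes Lemma~\ref{lm:cds_opt} applicable to it.

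First I would write $|\A| = |\A_1| + |\A_2| \leq 2|\A_1|$, using invariant~(ii). Then, since $\A_1$ is an independent set (invariant~(i)) and $\OO$ is a minimum connected dominating set of a graph with independent kissing number at most $\zeta$, Lemma~\ref{lm:cds_opt} gives $|\A_1| \leq (\zeta-1)|\OO| + 1$. Chaining these yields
\[
|\A| \leq 2|\A_1| \leq 2(\zeta-1)|\OO| + 2,
\]
which is precisely an asymptotic competitive ratio of $2(\zeta-1)$ with additive constant $2$.

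For the absolute (strict) bound I would rewrite the right-hand side as $2\zeta|\OO| - 2|\OO| + 2$. Since the revealed graph is always connected in the MCDS model, any connected dominating set is nonempty, so $|\OO| \geq 1$; hence $-2|\OO| + 2 \leq 0$ and $|\A| \leq 2\zeta|\OO|$, giving the absolute competitive ratio of $2\zeta$ with no additive term.

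The proof itself is essentially routine once the two lemmas are in hand, so the only genuine subtlety is the passage from the asymptotic bound to the absolute one: the seemingly free additive constant $+2$ must be absorbed, and this is exactly what costs the extra unit in the ratio, turning $2(\zeta-1)$ into $2\zeta$. I would therefore emphasize the role of $|\OO| \geq 1$ (guaranteed by connectivity of the revealed graph) as the one place where the argument genuinely uses a property of the model rather than the generic invariants.
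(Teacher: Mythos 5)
Your proposal is correct and follows essentially the same route as the paper: both use invariant~(ii) to get $|\A| \leq 2|\A_1|$, invariant~(i) plus Lemma~\ref{lm:cds_opt} to get $|\A_1| \leq (\zeta-1)|\OO|+1$, and then $|\OO| \geq 1$ to absorb the additive constant into the absolute ratio. Your absorption step ($2(\zeta-1)|\OO|+2 \leq 2\zeta|\OO|$) is algebraically equivalent to the paper's (which first sharpens $|\A_1| \leq \zeta|\OO|$ and then bounds the ratio), so there is no substantive difference.
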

\begin{proof}
Let $\A$ be the connected dominating set returned by $\AGCDS$, and let $\OO$ be a minimum connected dominating set for the input graph.
Due to Lemma~\ref{lem:inva}, we have $| \A| =|\A_1|+|\A_2|\leq 2| \A_1| $. On the other hand, we know that $\A_1$ is an independent set (due to Lemma~\ref{lem:inva}).  So, using  Lemma~\ref{lm:cds_opt}, we have $| \A_1| \leq (\zeta-1)| \OO| +1$. As a result, we get $| \A| \leq 2(\zeta-1)| \OO| +2$. Therefore, $\AGCDS$ has an asymptotic competitive ratio of at most~$2(\zeta-1)$.
Note that we have $| { \A_1}| \leq (\zeta-1){| \OO| } +1\leq \zeta {| \OO| }$. Here, the second inequality follows from the fact that ${| \OO| }\geq 1$.  
Thus, for the (absolute) competitive ratio, we have 
$\frac{| \A| }{{| \OO| }}\leq 
     \frac{2| {\A_1}| }{{|{\A_1}|}/{\zeta}} = 2\zeta$.
This completes the proof.
  \end{proof}
% For proof of the above theorem, we refer to Appendix~\ref{App_cds}. 
% Note that due to the result of Du and Du~\cite[Thm 1]{DuD15}, for unit disk graphs,
% we have  $|{\cal I}|  \leq 3.399| \OO|  + 4.874$. As a result, similar to Theorem~\ref{cds}, 
% one can prove that $| \A|  \leq 6.798| \OO| +9.748$. Hence, for unit disk graphs, $\AGCDS$ has an asymptotic competitive ratio of at most~$6.798$.

\begin{remark}
    Note that due to the result of Du and Du~\cite[Thm 1]{DuD15}, for unit disk graphs,
 we have  $|{\cal I}|  \leq 3.399| \OO|  + 4.874$. As a result, similar to Theorem~\ref{cds}, 
one can prove that $| \A|  \leq 6.798| \OO| +9.748$. Hence, for unit disk graphs, $\AGCDS$ has an asymptotic competitive ratio of at most~$6.798$.
\end{remark}

 \section{Lower Bound of the MCDS Problem}\label{sec:lb_MCDS}

In this section, first, we propose a lower bound of the MCDS problem for a wheel graph.  Then, using that, we propose a lower bound for the geometric intersection graph of translated copies of a convex object in $\IR^2$. 

Consider a \emph{wheel graph} $W_k=(V,E)$ of order $k$, where $V=\{v_0,v_1,\ldots, v_k\}$
and $E=\big{\{}\{v_i,v_k\}\ |\ i\in[k-1]\big{\}}\cup\big{\{}\{v_i,v_{(i+1)\mod k}\} |\ i\in[k-1]\big{\}}$. In other words, in $W_k$, the vertices $v_0,v_1,\ldots, v_{k-1}$ form a  cycle $C_k$ and a single core vertex $v_k$ is adjacent to 
each vertex of $C_k$.
% Note that the value of $\zeta$ for  $W_{2k}$ is $k$.
 Now, we define a cyclone-order of vertices in a wheel graph~$W_k$.% as follows.
\begin{figure}[htbp]
     \centering
\includegraphics[width=44 mm]{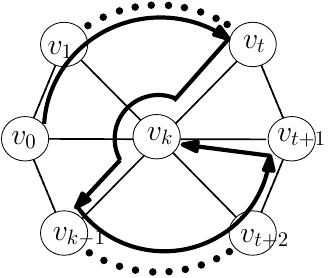}
       \caption{Cyclone-order of vertices in a wheel graph. The bold arrow indicates a cyclone-order.}
       \label{fig:new_cyclone}
\end{figure}
% \begin{figure}[htbp]
% \centering
% \includegraphics[width=25 mm]{Figures/cyclone-order-cycle.pdf}
%       \caption{\footnotesize{ Cyclone-order of vertices in a cycle. The bold arrow indicates cyclone-order.}}
%       \label{fig:new_cyclone}
% \end{figure}

\begin{definition}[Cyclone-order of vertices in a wheel graph]
For an integer $t\ (0<t<k-1)$, in the the cyclone order of $W_k$,   first, we enumerate $t+1$ vertices $v_0, v_{1}, \ldots, v_{t}$ of $C_k$, followed by an enumeration of the remaining $k-t-1$ vertices of $C_k$, i.e.,   $v_{k-1}, v_{k-2},\ldots,v_{t+2},v_{t+1}$.   Finally, the core vertex $v_k$ is appended.
 We denote the first $t+1$ length sequence of $C_k$ as a \emph{cw-part} and the remaining $k-t-1$ length sequence as an \emph{acw-part} of the cyclone-order. 
\end{definition}

Now, it is easy to obtain the following lemma.% in our lower-bound construction. 
\begin{lemma}\label{lemma:path-of-cycles}
 If the vertices of a wheel graph $W_k$ are enumerated in a cyclone-order,  then any online algorithm reports a CDS of size at least~$k-2$, where the size of an offline optimum is 1.  
\end{lemma}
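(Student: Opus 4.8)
The plan is to treat the two halves of the statement separately: the trivial offline bound $|\OO|=1$, and the online lower bound $|\A|\ge k-2$. For the offline part I would simply note that the core $v_k$ is adjacent to every cycle vertex $v_0,\dots,v_{k-1}$, so $\{v_k\}$ dominates $V$ and a single vertex trivially induces a connected subgraph; hence $\{v_k\}$ is a connected dominating set and $|\OO|=1$.

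For the online part the idea is to exploit the per-step feasibility requirement of the MCDS model (after each arrival the current set must be a connected dominating set of the revealed subgraph) together with the fact that the cyclone-order postpones the cheap core $v_k$ to the very last position. First I would check that the cyclone-order is an admissible input, i.e.\ that every revealed prefix induces a connected graph: the cw-part $v_0,\dots,v_t$ is a subpath of $C_k$; the first acw vertex $v_{k-1}$ is adjacent to the already-revealed $v_0$ (the wrap-around edge of $C_k$); each later acw vertex $v_{k-j}$ attaches to the previously revealed $v_{k-j+1}$; and the final acw vertex $v_{t+1}$ is adjacent to both $v_t$ and $v_{t+2}$. Consequently connectivity holds throughout, and — the point of the construction — at the instant $v_{t+1}$ is revealed the revealed graph is exactly the full cycle $C_k$, while the core $v_k$ has not yet appeared.

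The combinatorial crux is then the minimum CDS of a cycle. Right after $v_{t+1}$ the algorithm's current set $S$ must be a connected dominating set of $C_k$ using only cycle vertices. Any proper subset $S\subsetneq V(C_k)$ inducing a connected subgraph is a contiguous arc, and an arc on $\ell<k$ vertices is adjacent to at most two further cycle vertices, so it dominates at most $\ell+2$ vertices; feasibility forces $\ell+2\ge k$, i.e.\ $\ell\ge k-2$ (and the only other connected option, $S=V(C_k)$, has size $k\ge k-2$). Hence $|S|\ge k-2$. Because accepted vertices can never be removed, the final reported set contains $S$, so $|\A|\ge k-2$, giving the claimed bound against $|\OO|=1$.

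I expect the main obstacle to be the modeling subtlety rather than the counting: a priori an online algorithm might hope to defer all its decisions until the cheap core arrives, so the proof must make precise that the per-step feasibility condition, combined with the cyclone-order revealing the \emph{entire} cycle (and nothing cheaper) before the core, forces the algorithm to commit to a full cycle-CDS that it can never undo. Once that is pinned down, the only genuinely quantitative ingredient is the min-CDS-of-a-cycle bound $k-2$, which is elementary.
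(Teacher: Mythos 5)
Your proof is correct, but it takes a genuinely different route from the paper's. The paper argues \emph{incrementally}: it tracks which vertices are forced into the solution after each arrival --- after the cw-part $v_0,\ldots,v_t$ the solution must already contain $v_0,\ldots,v_{t-1}$; after the acw-part down to $v_{t+2}$ it must contain $k-3$ vertices in total; and the arrival of $v_{t+1}$ forces one more vertex ($v_t$ or $v_{t+2}$) --- so the specific cyclone-order drives every step of the count. You instead take a single snapshot at the moment the last cycle vertex $v_{t+1}$ arrives: the revealed graph is exactly $C_k$, the core is still hidden, per-step feasibility forces the current solution $S$ to be a CDS of $C_k$ consisting only of cycle vertices, and the static fact that a connected dominating set of a $k$-cycle has at least $k-2$ vertices (connected proper subsets of a cycle are arcs, and an arc on $\ell$ vertices dominates at most $\ell+2$ vertices) gives $|S|\ge k-2$; irrevocability then transfers the bound to the final output. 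Your argument is cleaner and slightly more general --- it shows that \emph{any} admissible ordering that reveals the whole cycle (with connected prefixes) before the core yields the same bound, so the cyclone structure matters only for admissibility --- whereas the paper's bookkeeping pins down exactly which vertices are forced at each step. Both proofs ultimately rest on the same two modelling facts you correctly isolate: the solution must be a feasible CDS of the revealed graph after every arrival, and accepted vertices can never be removed.
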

\begin{proof}
% Let $G(V,E)=W_k$ be a Wheel graph. 
Consider an input sequence of vertices of $W_k$, where the vertices arrive in cyclone-order. First, let us consider the cw-part of the sequence  $v_0, v_{1}, \ldots, v_{t}$. 
 Any online algorithm will report at least the first $t$ vertices as a CDS for this sequence. Next, if we append the first $k-t-2$ vertices of acw-part $v_{k-1}, v_{k-2},\ldots, v_{t+2}$, then any online algorithm will report at least  $k-3$  objects as CDS in total. Now, if we append $v_{t+1}$ to the sequence, any online algorithm will add $v_{t}$ or $v_{t+2}$ in the CDS.  Therefore, any online algorithm reports a CDS of size at least  $k-2$. 
Since $v_k$ is the core vertex, it is already dominated. Hence, if we append $v_{k}$ to the sequence, any online algorithm reports a CDS of size at least  $k-2$; while an offline optimum will report $v_{k}$ as a CDS.  Therefore, for $W_k$, any online algorithm reports a CDS of size at least  $k-2$; while the offline optimum is 1.
 \end{proof}
Now, we give an explicit construction of a wheel graph $W_{2\zeta}$ using translates of a convex object having independent kissing number $\zeta$.
 % For brevity, we keep the proof of Lemma~\ref{lm:C_block} in Appendix~\ref{App_lm:C_block}.

\begin{lemma} \label{lm:C_block}
For a family of translates of a convex object having independent kissing number $\zeta$, there exists a geometric intersection graph $W_{2\zeta}$.
\end{lemma}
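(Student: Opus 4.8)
The plan is to translate the combinatorial target $W_{2\zeta}$ into a statement about the centers of translates. Fix the convex object $\sigma$ and let $D=\sigma\oplus(-\sigma)$ be its (centrally symmetric) difference body; two translates $\sigma(x)$ and $\sigma(y)$ intersect if and only if $x-y\in D$. Thus building an intersection graph isomorphic to $W_{2\zeta}$ amounts to placing $2\zeta+1$ center points---one ``core'' center at the origin and $2\zeta$ ``rim'' centers---so that every rim center lies in $D$ (adjacency to the core), consecutive rim centers differ by a vector in $D$ (the cycle edges), and all remaining pairwise differences lie outside $D$ (the non-edges). This reduction lets me argue entirely in terms of the convex body $D$ and its support function.

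First I would fix an optimal standard independent kissing configuration of $\sigma$: a core $u=\sigma(o)$ together with $\zeta$ translates $o_1,\dots,o_\zeta$ touching $u$ and pairwise non-touching (after translating each object radially outward to the boundary, so that the maximum $\zeta$ is realized by a standard configuration). Taking $o=0$, their centers $c_1,\dots,c_\zeta$ lie on $\partial D$ with $c_i-c_j\notin D$ for $i\neq j$, and I list them in the cyclic (angular) order in which they occur around $\partial D$. These become the even vertices of the target cycle, which is consistent because $C_{2\zeta}$ has independence number $\zeta$. For each consecutive pair I would insert a connector $x_i=\sigma(p_i)$ with $p_i=\tfrac12(c_i+c_{i+1})$. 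Since $D$ is convex, $p_i\in D$, so $x_i$ meets the core; and since $D$ is centrally symmetric, $c_i,-c_{i+1}\in\partial D\subseteq D$ gives $p_i-c_{i+1}=\tfrac12(c_i-c_{i+1})\in D$, and symmetrically $p_i-c_i\in D$, so $x_i$ meets exactly its two intended cycle neighbours $o_i$ and $o_{i+1}$.

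It remains to verify that the intersection graph is \emph{exactly} $W_{2\zeta}$, i.e.\ that no spurious edges appear: each $x_i$ must miss every $o_j$ with $j\neq i,i+1$, and the connectors (which occupy the odd cycle positions and hence must themselves form an independent set) must be pairwise non-touching. This is the step I expect to be the main obstacle, because the complement of $D$ is not convex and the midpoint of two points lying outside $D$ can fall inside $D$. The plan for this step is to exhibit, for each offending pair, a single supporting hyperplane of $D$ that separates \emph{both} relevant differences from $D$ simultaneously: if a linear functional $\ell$ satisfies $\ell(c_i-c_j)>h_D(\ell)$ and $\ell(c_{i+1}-c_j)>h_D(\ell)$, where $h_D$ is the support function of $D$, then by linearity $\ell(p_i-c_j)=\tfrac12\bigl(\ell(c_i-c_j)+\ell(c_{i+1}-c_j)\bigr)>h_D(\ell)$, whence $p_i-c_j\notin D$. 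I would produce such a common functional from the cyclic order of the $c_k$ on $\partial D$ together with the maximality of the configuration, which forces $\partial D\subseteq\bigcup_k(c_k+D)$ and thereby bounds how far apart consecutive centers can be; an analogous separation argument applied to the midpoints of two distinct consecutive pairs rules out connector--connector intersections. Once every spurious difference is shown to lie outside $D$, the intersection graph of $\{u,o_1,x_1,o_2,x_2,\dots,o_\zeta,x_\zeta\}$ is precisely the cycle $o_1x_1o_2x_2\cdots o_\zeta x_\zeta$ of length $2\zeta$ together with the universally adjacent core $u$, which is $W_{2\zeta}$, completing the construction.
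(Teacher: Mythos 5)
Your reduction to the difference body $D=\sigma\oplus(-\sigma)$ and your overall skeleton (start from a standard optimal configuration, insert one connector between each pair of consecutive rim objects) matches the spirit of the paper's construction, but your concrete placement of the connectors at the midpoints $p_i=\tfrac12(c_i+c_{i+1})$ is genuinely flawed, and the gap you flagged is not just unverified --- it is unfixable for that placement. Take unit disks: $\zeta=5$, and a legitimate standard optimal configuration has the five rim centers at the vertices of a regular pentagon of circumradius $2$ about the core (pairwise distances $4\sin 36^\circ\approx 2.35>2$). Your midpoint connectors then sit at radius $2\cos 36^\circ\approx 1.618$, so two consecutive connectors are at distance $2\cdot 1.618\cdot\sin 36^\circ\approx 1.90<2$, i.e., the corresponding unit disks \emph{intersect}. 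This spurious connector--connector edge destroys $W_{10}$ (in the cycle $o_1x_1o_2x_2\cdots$ the connectors $x_1,x_2$ must be non-adjacent), and your proposed repair --- a common supporting functional separating both offending differences from $D$ --- cannot exist for this pair, since the two objects really do overlap. The root cause is that taking midpoints pulls the connectors inward toward the core, whereas keeping the wheel's non-edges requires the connectors to stay as far out as the rim objects.

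The paper avoids exactly this trap: it keeps the connector's center \emph{on} the locus $\mathcal{L}$ of centers at touching distance from the core (in your language, on $\partial D$), and obtains $\sigma_i'$ by sliding a copy of $\sigma_i$ along $\mathcal{L}$ toward $\sigma_{i+1}$ until it first touches $\sigma_{i+1}$. The key step you are missing is then supplied by \emph{optimality} of the configuration rather than by a separation argument: if $\sigma_i'$ did not still touch $\sigma_i$, one could insert an extra object between $\sigma_i$ and $\sigma_{i+1}$, non-touching to all $\zeta$ rim objects yet intersecting the core, contradicting the maximality of $\zeta$; and since $\sigma_i'$ stops at first contact with $\sigma_{i+1}$ while staying on $\mathcal{L}$, it meets no other rim object and no other connector. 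If you want to salvage your version, you would have to move each $p_i$ radially out to $\partial D$ (for disks this gives consecutive-connector distance $4\sin 36^\circ>2$ and works), but then intersection with $c_i$ and $c_{i+1}$ is no longer automatic from convexity and you would need precisely the paper's maximality argument to get it --- so the optimality-based touching argument is the essential missing idea, not a technicality.
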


{Before presenting the proof of the of Lemma~\ref{lm:C_block}, first, we present an observation, followed by a lemma, in which we prove that for a family of translated convex objects, we can always have an optimal independent kissing configuration in a standard form. }

\begin{observation}\label{obs:0}
Let $O$ and $L$ be a convex object and a line, respectively. Let $x$ be any point on $L$. If $x$ moves along the  line $L$ from one end to another, then   the distance $d_C(x, O)$ first decreases monotonically until a minimum point (or an interval of the same minimum value), then increases monotonically (refer to Figure~\ref{fig:convex_dist}).
\end{observation}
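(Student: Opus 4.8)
The plan is to recognise $d_C(\cdot,\cdot)$ as a gauge and thereby reduce the statement to the elementary fact that a convex function of one real variable is unimodal. Recall that for a convex body $C$ containing the origin in its interior, its gauge (Minkowski functional) $\gamma_C(z)=\inf\{t>0: z/t\in C\}$ is a finite, nonnegative, positively homogeneous and subadditive (hence convex) function on $\IR^d$. Comparing this with the definition of the convex distance function shows that $d_C(x,y)=\gamma_C(y-x)$, so that $d_C$ is jointly convex in $(x,y)$. Writing $d_C(x,O)=\min_{y\in O}\gamma_C(y-x)$, where the minimum is attained because $O$ is compact, the object to analyse along $L$ is $\phi(t):=d_C(x_0+tu,\,O)$ for an affine parametrization $x_0+tu$ of the line $L$.

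The key step is to prove that $x\mapsto d_C(x,O)$ is convex, so that its restriction $\phi$ to $L$ is a convex function of the single variable $t$. I would argue this geometrically through the sublevel sets: for every $r\ge 0$,
\[
\{x: d_C(x,O)\le r\}=\{x: \exists\, y\in O,\ y-x\in rC\}=O\oplus(-rC),
\]
the Minkowski sum (denoted $\oplus$) of the convex set $O$ with the convex set $-rC$; being a Minkowski sum of convex sets it is convex. Thus every sublevel set of $d_C(\cdot,O)$ is convex, i.e. $d_C(\cdot,O)$ is quasiconvex. Equivalently — and this is the version I would actually record, since it also delivers strictness — one may view $d_C(\cdot,O)$ as the infimal projection $\inf_{y}\bigl(\gamma_C(y-x)+\iota_O(y)\bigr)$ of a jointly convex function over the convex set $O$, where $\iota_O$ is the convex indicator of $O$ (zero on $O$ and $+\infty$ outside); partial minimization of a jointly convex function yields a convex function, so $d_C(\cdot,O)$ is convex in $x$.

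With $\phi$ convex on the interval parametrizing $L$, the conclusion is the standard behaviour of a one-dimensional convex function: its set of minimizers is a (possibly degenerate) subinterval $[\alpha,\beta]$, $\phi$ is non-increasing to the left of it and non-decreasing to its right, and convexity in fact forces \emph{strict} monotonicity strictly outside $[\alpha,\beta]$. This is exactly ``first decreases monotonically until a minimum point (or an interval of the same minimum value), then increases monotonically''; the minimizing set degenerates to a single point generically and becomes a genuine segment precisely when $L$ meets $O$ (where $\phi\equiv 0$) or runs along a $C$-flat supporting direction of $O$.

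The only real content, and hence the step I expect to need the most care, is the convexity/quasiconvexity of the point-to-set convex distance $d_C(\cdot,O)$, for which the Minkowski-sum identity above is the crux. The remaining points are routine but worth stating explicitly: $\gamma_C$ is finite and continuous because the origin lies in the interior of $C$, and the defining minimum over $O$ is attained because $O$ is compact, so $\phi$ is a genuine finite, continuous convex function and the one-variable unimodality argument applies verbatim.
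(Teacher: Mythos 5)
Your argument is correct, but note that the paper does not actually prove Observation~\ref{obs:0}: it is asserted as geometrically self-evident, with Figure~\ref{fig:convex_dist} as its only justification, before being used in the proof of Lemma~\ref{lm:standardIndepndentKC}. So there is no paper proof to match; what you have done is supply the missing rigor. Your route --- identifying $d_C(x,y)$ with the gauge $\gamma_C(y-x)$ of $C$, deducing convexity of $x\mapsto d_C(x,O)$ either from the sublevel-set identity $\{x:\ d_C(x,O)\le r\}=O\oplus(-rC)$ or from partial minimization of the jointly convex function $\gamma_C(y-x)+\iota_O(y)$, and finishing with one-variable convex unimodality --- is sound; each ingredient (sublinearity of the gauge because $C$ is compact, convex, with the origin interior; convexity of Minkowski sums; convexity under infimal projection; attainment of the inner minimum by compactness of $O$) is standard and correctly applied, and full convexity indeed buys strict monotonicity outside the minimizing interval, which the quasiconvexity route alone would not. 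Two small points are worth closing if you record this: (i) for the ``minimum point (or interval)'' to exist you need the minimum of $\phi$ over the whole line to be attained, which follows from the coercivity bound $\gamma_C(z)\ge d(0,z)/R$ with $R=\max_{w\in C}d(0,w)$, so that $\phi(t)\to\infty$ as $|t|\to\infty$; (ii) your parenthetical ``precisely when'' characterization of a nondegenerate minimizing segment is not exact (a line touching $O$ at a single point has a one-point minimizing set with value $0$), though nothing in the observation depends on it.
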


Now, consider the following lemma.
\begin{lemma}\label{lm:standardIndepndentKC}
 There exists a standard optimal independent kissing configuration for a family of translated copies of a convex object in $\IR^2$.
\end{lemma}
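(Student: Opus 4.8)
The plan is to start from an arbitrary optimal independent kissing configuration and to \emph{push} each surrounding object radially outward until it merely touches the boundary of the core, and then to argue that this operation preserves both optimality and mutual non-touching. The natural bookkeeping device is the Minkowski difference body. Writing the core as $u=\sigma(c_u)$ and the $\zeta$ objects of the independent set as $v_i=\sigma(c_i)$, and letting $D=\sigma\oplus(-\sigma)$ be the difference body of the underlying convex object, I would first record the standard translate dictionary: two translates $\sigma(a),\sigma(b)$ satisfy $\sigma(a)\cap\sigma(b)\neq\emptyset$ iff $a-b\in D$; they touch (boundary only) iff $a-b\in\partial D$; they overlap iff $a-b\in\interior(D)$; and they are non-touching iff $a-b\notin D$. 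Here $D$ is compact, convex, centrally symmetric, and contains the origin in its interior. In this language the hypotheses read $c_i-c_u\in D$ for every $i$ (each $v_i$ meets $u$) and $c_i-c_j\notin D$ for $i\neq j$ (the independent set is mutually non-touching), while ``standard'' means $c_i-c_u\in\partial D$.

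Next I would define the push-out map $c_i\mapsto c_i':=c_u+(c_i-c_u)/g_D(c_i-c_u)$, where $g_D$ is the Minkowski functional (gauge) of $D$; this scales $c_i-c_u$ radially onto $\partial D$, so each $v_i'=\sigma(c_i')$ touches $u$ without overlapping it. Observation~\ref{obs:0} is exactly what legitimizes this step: as the object slides outward along the ray from $c_u$, its convex distance to $u$ changes monotonically once it has left $u$, so there is a unique outward position at which $v_i$ becomes tangent to $u$, namely the radial projection of $c_i-c_u$ onto $\partial D$. Optimality is immediate, since the number $\zeta$ of surrounding objects is unchanged and each still meets $u$.

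The crux---and the step I expect to be the only real obstacle---is showing that the push-out preserves pairwise non-touching, i.e. $c_i'-c_j'\notin D$. I would prove the contrapositive. Put $p_i=c_i'-c_u\in\partial D$ and $a=g_D(c_i-c_u)\in(0,1]$, so that $c_i-c_u=a\,p_i$, and similarly $p_j,b$. Suppose $c_i'-c_j'=p_i-p_j\in D$. The four points $0$, $p_i$, $-p_j$, and $p_i-p_j$ then all lie in $D$ (using $0\in\interior(D)$, $p_i,p_j\in\partial D\subset D$, central symmetry for $-p_j$, and the assumption for $p_i-p_j$). These are precisely the vertices of the parallelogram $\{a'p_i-b'p_j : a',b'\in[0,1]\}$, which is the image of the unit square under a linear map and hence the convex hull of those four vertices; by convexity of $D$ the entire parallelogram lies in $D$. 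Taking $a'=a$ and $b'=b$ yields $c_i-c_j=a\,p_i-b\,p_j\in D$, contradicting the non-touching of $v_i$ and $v_j$. Hence every pair remains non-touching, the resulting configuration $\{u,v_1',\ldots,v_\zeta'\}$ is a standard optimal independent kissing configuration, and the lemma follows.
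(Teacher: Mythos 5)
Your proof is correct, and it reaches the lemma by a genuinely different route than the paper. The paper's proof also pushes each overlapping object radially outward from the core's center, but it controls the motion with the convex distance function: by Observation~\ref{obs:0}, $d_C(x,\cdot)$ is unimodal along the ray through $c$ and $c_i$, and since the distance to $\sigma_{i\pm 1}$ is already larger at $c_i$ than at $c$, sliding $\sigma_i$ outward can only increase its distance to its two \emph{cyclic neighbours} $\sigma_{i-1}$ and $\sigma_{i+1}$ in the angular order around the core; the objects are then processed one at a time. Your argument replaces this with the difference body $D=\sigma\oplus(-\sigma)$ and its gauge, pushing all centers simultaneously onto $\partial D$ and verifying non-touching for \emph{every} pair at once via the parallelogram step, i.e.\ the convex hull of $\{0,\,p_i,\,-p_j,\,p_i-p_j\}$ lying inside $D$. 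This buys two things. First, completeness: the paper's written proof only checks each moved object against its two cyclic neighbours and treats the moves sequentially, leaving interactions with non-adjacent objects, and between two objects that have both been moved, implicit; your parallelogram argument disposes of all pairs uniformly. Second, generality: nothing in your proof uses planarity, so it works verbatim for translates of a convex body in $\IR^d$, whereas the paper's cyclic-order argument is inherently two-dimensional (which suffices for its purpose, since the lemma feeds the planar wheel-graph construction of Lemma~\ref{lm:C_block}). Two cosmetic remarks: your appeal to Observation~\ref{obs:0} is not actually needed, since positive homogeneity of the gauge already yields the unique tangent position on $\partial D$; and the push-out map needs $c_i\neq c_u$, which is automatic because an independent-set object coinciding with the core would intersect every other independent-set object.
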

 % To prove this,  we need the following observation.

 % We use $\mathbb{Z}_n$ to denote the set $\{0,1,\ldots,n-1\}$ of non-negative integers less than $n$, where the  arithmetic operations such as addition, subtraction, and multiplication are modulo $n$.
% \end{lemma}

\begin{figure}[!ht]
     \begin{subfigure}[b]{0.45\textwidth}
            \includegraphics[scale=.33]{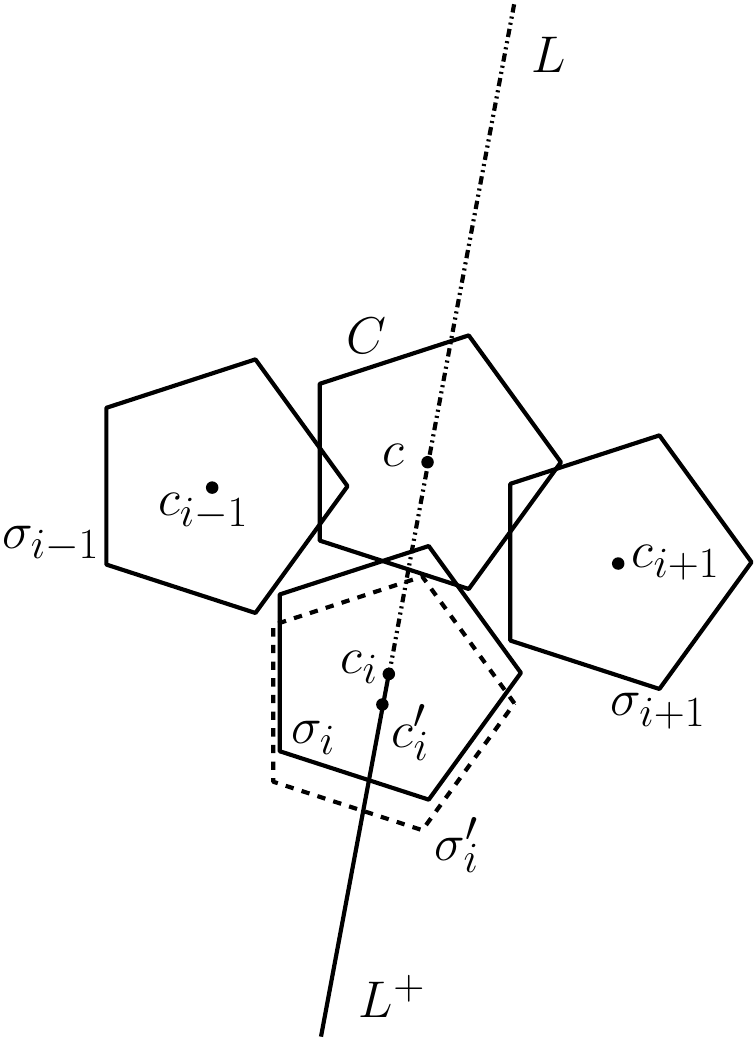} 
            % \caption{}
            \caption{}
            \label{fig:ikn_to_sikn}
     \end{subfigure}
     \hfill
%       \begin{subfigure}[b]{0.325\textwidth}
%          \centering
% \includegraphics[scale=.3]{Figures/Moved case.pdf}
% \caption{}
% \label{fig:ikn_to_sikn}
%      \end{subfigure}
%      \hfill
     \begin{subfigure}[b]{0.45\textwidth}
        \includegraphics[scale=.43]{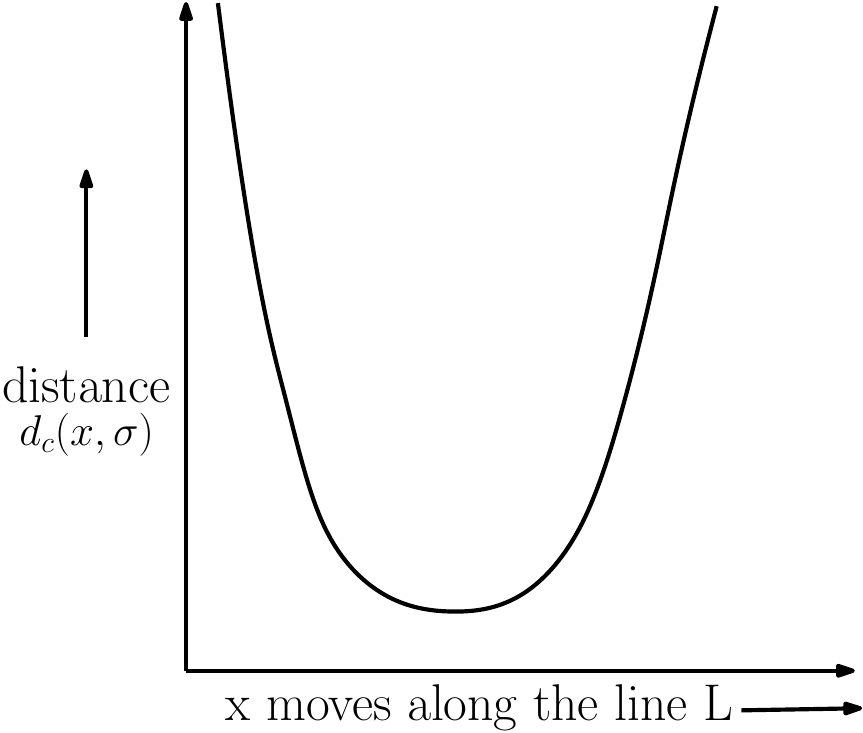}
        \caption{}
        \label{fig:convex_dist}
     \end{subfigure}
      \caption{(a)~Illustration of Lemma~\ref{lm:standardIndepndentKC}: Moving $\sigma_i$ along $L^{+}$; (b)~Illustration of Observation~\ref{obs:0}.}
      \label{fig:translates}
\end{figure}

\begin{proof}
To prove the lemma, we show that for any given optimal independent kissing configuration, we can transform it into a standard form. Consider an optimal independent kissing configuration, where $C$ is the core, and $\cal I$ is the independent set. Let  $({\sigma}_0$, ${\sigma}_1$,\ldots, ${\sigma}_{\zeta-1})$ be the order of appearance of objects in $\cal I$ around the core object $C$ in the configuration, where ${\sigma}_i\in {\cal I}$ and $i\in \mathbb{Z}_{\zeta}$. Consider an object ${\sigma}_i\in {\cal I}$  with a nonempty common interior with $C$. Let $c$ and $c_i$  be the center of the object $C$ and ${\sigma}_i$, respectively (see Figure~\ref{fig:ikn_to_sikn}). Let $L$ be a line obtained by extending the line segment $cc_i$ in both directions.  Since objects ${\sigma}_i$ and ${\sigma}_{i-1}$ (respectively, ${\sigma}_{i+1}$) are non-touching and ${\sigma}_i$ intersects $C$, we have  $d_C(c_i,{\sigma}_{i-1}) > d_C(c,{\sigma}_{i-1})$ (respectively, $d_C(c_i,{\sigma}_{i+1}) > d_C(c,{\sigma}_{i+1})$). Note that the operation on index $i$ is modulo $\zeta$. Consider the half-line $L^{+}$  of $L$  with one endpoint at $c_i$ and does not contain the point $c$. Due to Observation~\ref{obs:0}, if $x$ moves along $L^{+}$  from $c_i$ then the distance $d_C(x, {\sigma}_{i-1})$ (respectively, $d_C(x, {\sigma}_{i+1})$) monotonically increases. As a result, if we translate  the object ${\sigma}_i$ by moving the center along $L^{+}$, the translated copy ${\sigma}_i'$ never touches the object ${\sigma}_{i-1}$ (respectively, ${\sigma}_{i+1}$). Hence, we will be able to find a point $x$  on $L^{+}$ such that the translated copy ${\sigma}_i'$ centering on $x$ will touch the boundary of $C$,  without touching  ${\sigma}_{i-1}$ and ${\sigma}_{i+1}$.  In the independent kissing configuration, we replace the object ${\sigma}_i$ with ${\sigma}_i'$. For each object $I\in {\cal I}$ that has a nonempty common interior with $C$,
we follow a similar approach as above. This ensures that all these $\zeta$ objects are mutually non-overlapping (but touching) with $C$. Hence, the lemma follows.\end{proof}

Now, we present the proof of Lemma~\ref{lm:C_block}.

\noindent
\textbf{Proof of Lemma~\ref{lm:C_block}}

\noindent
The proof is by construction. 
  Due to Lemma~\ref{lm:standardIndepndentKC}, we know a standard optimal independent kissing configuration exists for a family of translated copies of a convex object. Let $K$ be a standard independent kissing configuration, where $\sigma$ is a core object, and $\cal I$ is an independent set. Let  $({\sigma}_0$, ${\sigma}_1$,\ldots, ${\sigma}_{\zeta-1})$ be  the order of appearance of objects in $\cal I$  around the core object $\sigma$ in the configuration, where ${\sigma}_i\in {\cal I}$, $i\in \mathbb{Z}_{\zeta}$. Let us define a locus $\cal L$ that contains all points $x\in \IR^2$ such that $d_{\sigma}(x,\sigma)=1$. Note that the centers of all objects $\sigma_i\in {\cal I}$ lie on the locus $\cal L$. Now, for each $i\in \mathbb{Z}_{\zeta}$,  we  make a copy $\sigma_i'$ of $\sigma_i$. {We translate $\sigma_i'$ around $\sigma$ keeping the center of $\sigma_i'$  on the locus $\cal L$ until $\sigma_i'$ touches the object $\sigma_{i+1}$.}
 Note that  $\sigma_i'$  also touches $\sigma_i$; otherwise, $K$ is not an optimal configuration since we can place an extra object between $\sigma_i$ and $\sigma_{i+1}$. It is easy to observe that, apart from these two objects ($\sigma_i$ and $\sigma_{i+1}$), $\sigma_i'$ does not intersect any other objects in  ${\cal I}$.
 In this way, we obtain a set ${\cal I}'=\{{\sigma}_0'$, ${\sigma}_1'$,\ldots, ${\sigma}_{\zeta-1}'\}$ of $\zeta$ objects. The method of construction ensures that ${\cal I}'$ is an independent set (and ${\cal I}'$ together with $\sigma$ is a standard optimal independent kissing configuration). Now, consider the order $({\sigma}_0, {\sigma}_0', {\sigma}_1, {\sigma}_1',\ldots, {\sigma}_{\zeta-1}, {\sigma}_{\zeta-1}')$ of
 appearance of  objects in $\cal I\cup {\cal I}'$  around $\sigma$. Here, each object is intersected by exactly two objects: its previous and next object in the sequence. Therefore, the geometric intersection graph of $\cal I\cup {\cal I}'\cup\sigma$ is a $W_{2\zeta}$, where $\sigma$ is the core object.
 \myqed\\

Combining Lemmas~\ref{lemma:path-of-cycles} and~\ref{lm:C_block}, we have the following result.
\begin{theorem}\label{thm:mcds_con}
Let $\zeta$ be the independent kissing number of a family $\cal S$ of translated copies of a convex object in $\IR^2$.
Then the competitive ratio of every online algorithm for MCDS of $\cal S$ is at least $2(\zeta-1)$.
\end{theorem}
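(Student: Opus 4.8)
The plan is to combine the two structural lemmas already proved. The theorem asserts a lower bound of $2(\zeta-1)$ on the competitive ratio for MCDS on translates of a convex object in $\IR^2$, and the natural route is to exhibit a hard instance built from a wheel graph $W_{2\zeta}$ realized as such a geometric intersection graph. Since Lemma~\ref{lm:C_block} guarantees that $W_{2\zeta}$ is realizable by translates of a convex object with independent kissing number $\zeta$, and Lemma~\ref{lemma:path-of-cycles} guarantees that when the vertices of a wheel graph $W_k$ arrive in cyclone-order any online algorithm must report a CDS of size at least $k-2$ while the offline optimum is $1$, the two results should glue together almost immediately.

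\begin{proof}
By Lemma~\ref{lm:C_block}, for a family $\cal S$ of translated copies of a convex object in $\IR^2$ having independent kissing number $\zeta$, there is a collection of objects whose geometric intersection graph is the wheel graph $W_{2\zeta}$, with the core object $\sigma$ playing the role of the hub vertex $v_{2\zeta}$. Present the vertices of this $W_{2\zeta}$ to the online algorithm in the cyclone-order of Definition. Throughout, the revealed induced subgraph is connected: the cw-part and acw-part arrive along the cycle $C_{2\zeta}$ so each new rim vertex is adjacent to a previously revealed rim vertex, and the hub is appended last. Hence the Relaxed-Online-Model requirement that the revealed graph always be connected is met. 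By Lemma~\ref{lemma:path-of-cycles} applied with $k=2\zeta$, any online algorithm reports a CDS of size at least $k-2 = 2\zeta-2 = 2(\zeta-1)$, whereas the offline optimum consists solely of the hub vertex $\sigma$, which dominates every rim vertex and trivially induces a connected subgraph, so its size is $1$. Therefore the competitive ratio of every online algorithm for MCDS of $\cal S$ is at least $\frac{2(\zeta-1)}{1} = 2(\zeta-1)$.
\end{proof}

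The only subtlety worth flagging is verifying that the hard instance respects the connectivity constraint imposed on the MCDS model; this is exactly why the cyclone-order was engineered, so it is a matter of reading off that each prefix induces a connected subgraph rather than a genuine obstacle. I would not expect any real difficulty here: the entire weight of the argument has been front-loaded into Lemmas~\ref{lemma:path-of-cycles} and~\ref{lm:C_block}, so the theorem is essentially a one-line corollary obtained by instantiating $k=2\zeta$ and dividing the online cost $2(\zeta-1)$ by the optimum cost $1$.
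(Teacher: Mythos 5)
Your proof is correct and follows exactly the paper's route: the paper proves Theorem~\ref{thm:mcds_con} by combining Lemma~\ref{lemma:path-of-cycles} (with $k=2\zeta$) and Lemma~\ref{lm:C_block}, just as you do. Your explicit check that every prefix of the cyclone-order induces a connected subgraph is a welcome detail the paper leaves implicit.
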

% \begin{proof}
% Consider a C-block $\cal B$ of size $2\zeta$ 
%  as in Lemma~\ref{lm:C_block}. Note that the geometric intersection graph of objects in $\cal B$ is a cycle.  Consider the input sequence for objects in $\cal B$ in cyclone-order. Due to Lemma~\ref{lemma:path-of-cycles}, any online algorithm reports a CDS of size $2\zeta -2$ for this sequence. Whereas all objects in $\cal B$ can be dominated by a core object that we append as the last object in the input sequence. Therefore, the competitive ratio is at least $2(\zeta-1)$, whereas  the optimum is one.
%   \end{proof}

%%%%%%%%%%%%%%%%%%%%%%%%%%%%%%%%%%%%%%%%%%%%%%%%%%%%%%%%%%%%
%%%%%% -----New-Proof of MCDS lower bound:Ends-------%%%%%%%
%%%%%%%%%%%%%%%%%%%%%%%%%%%%%%%%%%%%%%%%%%%%%%%%%%%%%%%%%%%%

% The proof of the above Theorem~\ref{thm:mcds_con} is given in Appendix~\ref{App_thm:mcds_con}

% \begin{remark}
% For any family of arbitrary oriented convex objects in $\IR^2$, if we have a standard independent kissing configuration where the size of the independent set is $\zeta'$, then we can apply the above technique to obtain a lower bound result  ($\zeta$ will be replaced by $\zeta'$) of MCDS for that family.
% \end{remark}

\section{Algorithm for the Minimum Coloring Problem}\label{sec:coloring}

In this section, we consider the minimum coloring problem for the geometric intersection graph of bounded scaled $\alpha$-fat objects in $\IR^d$ having a width between $[1,m]$, where $m\geq 2$. In the subsequent subsections, first, we present a deterministic algorithm $\AL$; next, we present an analysis of the well-known algorithm  $\AFF$ for this problem.  While the algorithm $\AL$ uses the geometric representation, the algorithm $\AFF$ does not use it.
Note that the algorithm $\AL$ uses $\AFF$ as a subroutine.
Here,  we mention  $\AFF$  and its known competitive ratio for graphs with bounded $\zeta$.

\textit{Description of the algorithm $\AFF$}:
Upon the arrival of an object $\sigma$, the algorithm $\AFF$ assigns the smallest color available, i.e., the smallest positive integer that has not yet been assigned to an adjacent vertex of $\sigma$.

\begin{lemma}~\cite[Lemma~4]{CapponiP05}\label{lem:coloring}
   The algorithm  $\AFF$ has a competitive ratio of $\zeta$ for the MC problem for a graph having an independent kissing number at most~$\zeta$.
\end{lemma}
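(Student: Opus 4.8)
The plan is to bound the largest color that \AFF\ ever assigns in terms of $\opt$, the minimum number of colors in a proper coloring of the input graph. First I would let $k$ denote the number of colors used by \AFF\ (equivalently, the largest color it assigns) and fix a vertex $v$ that receives color $k$. By the First-Fit rule, $v$ was forced to color $k$ only because each of the colors $1,2,\ldots,k-1$ was already blocked by some earlier-arrived neighbour of $v$. Hence I can select representatives $u_1,u_2,\ldots,u_{k-1}\in N(v)$, where $u_j$ carries the \AFF-color $j$; these are $k-1$ distinct vertices, all lying in $N(v)$.

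The core of the argument is to compare this structure against a fixed optimal proper coloring that uses $\opt$ colors. Vertex $v$ receives some color $c_0$ in this optimal coloring, and since every $u_j$ is adjacent to $v$, none of the $u_j$ can be colored $c_0$; therefore the representatives use at most $\opt-1$ distinct optimal colors. Grouping $u_1,\ldots,u_{k-1}$ by their optimal color thus yields at most $\opt-1$ groups, and each group is an independent set (the optimal coloring is proper) contained entirely in $N(v)$. By Definition~\ref{def1} we have $\varphi(G[N(v)])\le \zeta$, so every such group has size at most $\zeta$. Counting gives $k-1\le (\opt-1)\zeta$, and since $\zeta\ge 1$ this rearranges to $k\le (\opt-1)\zeta+1\le \zeta\cdot\opt$, which is exactly the claimed absolute competitive ratio.

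I expect the one genuinely delicate step to be shaving off the additive term so that the clean absolute ratio $\zeta$ (rather than $\zeta\cdot\opt+1$) is obtained. The naive pigeonhole count allows the $k-1$ neighbours to spread over all $\opt$ optimal color classes, yielding only $k-1\le \opt\cdot\zeta$ and hence an unwanted $+1$. The observation that removes this slack is that the representative neighbours must all avoid $v$'s own optimal color, which drops the number of available optimal classes to $\opt-1$; the bound $\zeta\ge 1$ then absorbs the remaining constant. A minor point worth stating explicitly is that the independence-number bound $\varphi(G[N(v)])\le\zeta$ is applied to the \emph{induced} neighbourhood $G[N(v)]$, so that each optimal color class intersected with $N(v)$ is genuinely an independent set of the neighbourhood to which the definition of $\zeta$ applies, rather than merely an independent set of the global graph.
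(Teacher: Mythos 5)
Your proof is correct, and it establishes the clean absolute ratio: choosing representatives $u_1,\ldots,u_{k-1}\in N(v)$ of the blocked colors, partitioning them by the classes of a fixed optimal coloring into at most $\opt-1$ groups (since all of them must avoid $v$'s own optimal color), and bounding each group by $\varphi(G[N(v)])\le\zeta$ gives $k\le(\opt-1)\zeta+1\le\zeta\cdot\opt$ for $\zeta\ge 1$. Note that the paper itself offers no proof of this lemma---it imports it as Lemma~4 of Capponi and Pilotto~\cite{CapponiP05}---so the only comparison is with that external source, whose argument is essentially the same pigeonhole; your write-up is a valid self-contained substitute for the citation.
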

% For proof of the above lemma, see Appendix~\ref{App_coloring}.

\subsection{Algorithm $\AL$}
This algorithm is similar to the algorithm of Erlebach and Fiala~\cite{ErlebachF02} originally defined for bounded scaled disks having radii in the interval $[1,m]$.
For any $j\in\mathbb{Z}^{+}\cup\{0\}$, let $L_j$ be the $j$th layer containing all objects with widths in the interval $[2^j,2^{j+1})$. 
Observe that the width of each layer's objects falls within a factor of two.
For each layer $L_j$, we use $\AFF$ separately to color the objects. When an object $\sigma_i$ having width $w_i$ arrives, our algorithm, first, determines the layer number $j=\lfloor\log w_i\rfloor$.  Then we color $\sigma_i$ using
$\AFF$ considering already arrived objects in  $L_j$,
 and also we use the fact that
a color that is used in any other layer cannot be used for $\sigma_i$. 
A pseudo-code of the algorithm $\AL$ is given in Algorithm~\ref{alg:cap}.
% in Appendix~\ref{App_AL}.
% and $\AFF$ has $\zeta'$ competitive ratio on each layer.
% A formal description of $\AL$ is given in Algorithm~\ref{alg:cap}.
\begin{algorithm}[htbp]
\caption{$\AL$}\label{alg:cap}
\footnotesize{\begin{algorithmic}[1]
% \Require Upon arrival of $\sigma_i$, it comes with the information about its width $w_i$.
% \State \blue{For each}  $j\in\mathbb{Z}^{+}$,  $L_j\leftarrow \emptyset$ %\in[\lfloor \log_2 m\rfloor]$
 \For {$i= 1$ to $\infty$;} \Comment{Arrival of an object $\sigma_i$ having a width $w_i$}
 \Begin
\State $j=\lfloor \log_2 w_i\rfloor$; \Comment{Identifying the index of the layer to which $\sigma_i$ belongs, where $r_i$ is the width of $\sigma_i$.}

\If {$\sigma_i$ is the first arrived object from layer $L_j$}
 \State $L_j\leftarrow \emptyset$
\EndIf
\State $L_j\leftarrow L_j\cup\{\sigma_i\}$; \Comment{The layer containing $\sigma_i$}
\State $F=\{c(\sigma_k):1\leq k<i, \sigma_k\in L_j, \sigma_k\cap\sigma_i\neq\emptyset\}\cup\{c(\sigma_k):1\leq k<i, \sigma_k\notin L_j\}$; \Comment{The set of forbidden colors}
\State $c(\sigma_i)=\min\{\mathbb{Z}^{+}\setminus F\}$; \Comment{color assigned to $\sigma_i$}
 \EndBegin
 \EndFor
\end{algorithmic}}
\end{algorithm} 
\begin{theorem}\label{thm:AL}
   Let $\zeta'$ be the independent kissing number of bounded scaled $\alpha$-fat objects having widths in the interval $[1,2]$. The algorithm $\AL$ has a competitive ratio of at most $\zeta'(\lfloor\log m \rfloor+1)$ for MC of geometric intersection graph of bounded scaled $\alpha$-fat objects in $\IR^d$ having widths in the interval $[1,m]$.
\end{theorem}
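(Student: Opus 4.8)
The plan is to bound the total number of colors used by $\AL$ by a sum over the layers and to control each layer separately. Since every object has width in $[1,m]$, the only nonempty layers are $L_0,L_1,\ldots,L_{\lfloor\log m\rfloor}$, so there are at most $\lfloor\log m\rfloor+1$ of them. If I can show that $\AL$ spends at most $\zeta'\cdot\opt$ colors on each individual layer and that the palettes of distinct layers are disjoint, then summing over the (at most $\lfloor\log m\rfloor+1$) layers immediately gives $\alg\le\zeta'(\lfloor\log m\rfloor+1)\cdot\opt$, which is the claim; here $\opt=\chi(G)$ is the chromatic number of the whole intersection graph $G$, i.e.\ the offline optimum.

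First I would establish two structural facts. (i) For each layer $L_j$, the induced subgraph $G[L_j]$ has independent kissing number at most $\zeta'$. Indeed, all objects in $L_j$ have widths in $[2^j,2^{j+1})$, and scaling the whole configuration by $2^{-j}$ is a bijection of $\IR^d$ that preserves all intersection and non-intersection relations, hence the intersection graph and therefore the independent kissing number; since $\alpha$-fat objects are invariant under scaling, it maps $L_j$ to a family of $\alpha$-fat objects of widths in $[1,2)$. As this is a subfamily of the $[1,2]$-family defining $\zeta'$, every induced star in $G[L_j]$ scales to an independent kissing configuration of that subfamily, so $\zeta(G[L_j])\le\zeta'$. (ii) The sets of colors used on distinct layers are pairwise disjoint: when $\AL$ colors an object of $L_j$ it forbids every color already used on objects outside $L_j$, so by an easy induction on arrival order no color can ever appear in two different layers.

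The heart of the argument is to show that, on each layer, $\AL$ uses exactly as many colors as $\AFF$ would use if run on $G[L_j]$ alone. The point is that, by fact (ii), the extra colors $\AL$ forbids from other layers never lie in the palette $S_j$ of layer $L_j$; hence among the colors of $S_j$ the only ones forbidden to an incoming object $\sigma\in L_j$ are those of its earlier neighbours inside $L_j$, which is precisely the First-Fit constraint. A short induction---using also the observation that $\AL$ always assigns the globally smallest admissible color, so the set of all colors it has ever used is an initial segment of $\mathbb{Z}^{+}$ with no gaps---shows that the restriction of the $\AL$-coloring to $L_j$, with its colors relabelled by rank, coincides with the $\AFF$-coloring of $G[L_j]$ in the same arrival order. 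In particular $\AL$ introduces a new color on $L_j$ exactly when $\AFF$ does, so the number of colors $\AL$ spends on $L_j$ equals $\AFF(G[L_j])$.

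Finally I combine the pieces. By fact (i) and Lemma~\ref{lem:coloring}, $\AFF(G[L_j])\le\zeta'\cdot\chi(G[L_j])\le\zeta'\cdot\chi(G)=\zeta'\cdot\opt$, where the middle inequality holds because $G[L_j]$ is an induced subgraph of $G$. Together with the per-layer equality and the disjointness of palettes this yields $\alg=\sum_{j=0}^{\lfloor\log m\rfloor}\AFF(G[L_j])\le(\lfloor\log m\rfloor+1)\,\zeta'\,\opt$, as required. I expect the main obstacle to be the third step: making rigorous that the cross-layer forbidding rule cannot inflate the per-layer color count above the plain First-Fit count. A priori, forbidding colors from other layers could push $\AL$ onto a fresh color where $\AFF$ would reuse an old one; the disjointness of palettes together with the ``no gaps'' property is exactly what rules this out, and the order-isomorphism/no-gaps induction is the clean way to record it.
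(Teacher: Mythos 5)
Your proposal is correct and follows essentially the same route as the paper: decompose the input into layers, observe that the color palettes of distinct layers are disjoint, bound each layer's color count via Lemma~\ref{lem:coloring} (using that widths within a layer lie within a factor of two, so each layer's independent kissing number is at most $\zeta'$), compare against the restriction of the optimal coloring to that layer, and sum over the at most $\lfloor\log m\rfloor+1$ layers. The only notable difference is that you explicitly justify, via the no-gaps/order-isomorphism induction, that the cross-layer forbidding rule cannot inflate a layer's color count beyond what $\AFF$ would use on that layer alone --- a step the paper takes for granted when it applies Lemma~\ref{lem:coloring} directly to each layer, so your write-up is in fact slightly more rigorous on that point.
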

\begin{proof}
 Let $\A$ and $\OO$ be the set of colors used by the algorithm $\AL$ and offline optimum for an input sequence $\cal I$. For each $i\in\{0,1,\ldots,\lfloor \log m\rfloor\}$, let the layer $L_i$ be the collection of all $\alpha$-fat objects in $\cal I$ having widths in $\left[2^i, 2^{i+1}\right)$. Let $\OO_i$ be a set of colors the offline optimum algorithm uses for the layer $L_i$. Let $\OO_i'\subseteq\OO$ be the set of colors used for the layer $L_i$.
 Note that the colors in $\OO_i'$ are a valid coloring for objects in $L_i$. Thus, we have $|\OO_i|\leq |\OO_i'|$. 
 % Observe that the objects in $S_i$ are 2-bounded objects.
 Let $\A_i$ be the set of colors used by the algorithm $\AL$ to color layer $L_i$. Note that $\A=\cup_{i=0}^{\lfloor \log m\rfloor}\A_i$ and $\A_i\cap\A_j=\emptyset$, where $i\neq j\in[\lfloor\log m\rfloor]$.
  Due to Lemma~\ref{lem:coloring}, for all $i\in[\lfloor\log m\rfloor]$, we have $|\A_i|\leq{\zeta_{i}}|\OO_i|$, where $\zeta_{i}$ is the independent kissing number of bounded scaled $\alpha$-fat objects having widths in $[2^i,2^{i+1})$. Since the width of objects in each layer is within a factor of two, for each $i$, the value of $\zeta_{i}$ is the same as $\zeta'$. Since $|\OO_i|\leq |\OO_i'|\leq|\OO|$, we have $|\A_i|\leq {\zeta'}|\OO_i'|\leq 
  {\zeta'}|\OO|$.   Then, we have $|\A|=\sum_{i=0}^{\lfloor \log m\rfloor}|\A_i|\leq\sum_{i=0}^{\lfloor \log m\rfloor}{\zeta'}|\OO_i|=\zeta'(\lfloor\log m \rfloor+1)|\OO|$.
Hence, the theorem follows.
 \end{proof}

\subsection{Algorithm $\AFF$}
 Recall that the algorithm $\AFF$ does not require object representation in advance. Here, we show that, for MC of geometric intersection graph of bounded scaled $\alpha$-fat objects having widths in the interval $[1,m]$, the competitive ratio achieved by the algorithm $\AFF$ is asymptotically the same as that attained by the algorithm $\AL$. The analysis is a generalization and very similar to~\cite{CaragiannisFKP07b}, except  Claim~\ref{clm:layer}. For the sake of completeness, we have described it.

\begin{theorem}\label{thm:AFF}
    Let $\zeta'$ be the independent kissing number of bounded scaled $\alpha$-fat objects having widths in the interval $[1,2]$. The algorithm \textsc{First-Fit} has an asymptotic competitive ratio of $O(\zeta'\log m)$ for MC of geometric intersection graph of bounded scaled $\alpha$-fat objects having widths in the interval~$[1,m]$.
\end{theorem}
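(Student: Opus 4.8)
The plan is to bound the largest color that $\AFF$ ever assigns, since the number of colors used by $\AFF$ equals this maximum: if some object receives color $c$, then for every $j < c$ at least one already-arrived neighbour carries color $j$, so all of $1,\dots,c-1$ are in use. Fix the object $\sigma$ attaining the maximum color $c$. It has at least $c-1$ pairwise-distinct lower-colored neighbours, one of each color $1,\dots,c-1$. First I would split these neighbours into the set $W$ of those at least as wide as $\sigma$ (``wide'' neighbours, lying in layers of index $\ge$ that of $\sigma$) and the set $N$ of those strictly narrower (``narrow'' neighbours, lying in strictly lower layers), so that $|W| + |N| \ge c-1$.

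The key geometric ingredient, and the point where the argument departs from~\cite{CaragiannisFKP07b} (the role of Claim~\ref{clm:layer}), is a bound on the wide neighbours that is \emph{uniform in $m$}: the number of pairwise non-adjacent objects of width at least that of $\sigma$ which all intersect $\sigma$ is at most $\zeta'$. I would establish this by a shrinking argument. Shrink each such wide object towards a point of its intersection with $\sigma$ until its width falls in the layer of $\sigma$; this keeps it intersecting $\sigma$ and only decreases its size, so pairwise non-adjacency is preserved. By the scale-invariance of $\alpha$-fatness, rescaling by the reciprocal of $\sigma$'s width turns the result into a standard independent kissing configuration of the family of $\alpha$-fat objects with widths in $[1,2]$ and core $\sigma$, whose independent set therefore has size at most $\zeta'$. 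Consequently the induced subgraph on $W$ has independence number at most $\zeta'$, so the optimum coloring $\OO$, being a proper coloring of $W$ with each color class an independent set of size at most $\zeta'$, uses at least $|W|/\zeta'$ colors on $W$; hence $|W| \le \zeta'\,|\OO|$.

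It then remains to control $N$ by a recursion over layers, which produces the $\log m$ factor. Since the lower-colored neighbours carry distinct colors and $|W| \le \zeta'\,|\OO|$, there are at least $c-1-\zeta'\,|\OO|$ narrow neighbours, all with distinct colors and all lying in strictly lower layers, so the largest color among them is at least $c-1-\zeta'\,|\OO|$. Writing $M(\ell)$ for the largest $\AFF$-color used on any object in layers $0,\dots,\ell$, applying this to a maximizer of $M(\ell)$ gives $M(\ell) \le M(\ell-1) + \zeta'\,|\OO| + 1$. Telescoping over the $\lfloor\log m\rfloor + 1$ layers, with $M(-1)=0$, yields $|\A| = M(\lfloor\log m\rfloor) \le (\lfloor\log m\rfloor+1)(\zeta'\,|\OO| + 1)$, i.e.\ an asymptotic competitive ratio of $O(\zeta'\log m)$. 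The hard part is precisely the uniform wide-neighbour bound $\zeta'$: a large core can intersect unboundedly many pairwise non-adjacent \emph{smaller} objects, so one cannot bound all of $\sigma$'s neighbours in a single step; the whole scheme hinges on bounding only the wide neighbours uniformly in $m$ (via Claim~\ref{clm:layer}) while deferring the narrow ones to the layer recursion.
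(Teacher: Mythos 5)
Your proposal follows the paper's own route: the key geometric ingredient is exactly the paper's Claim~\ref{clm:layer}, proved by the same shrink-and-rescale argument combined with a pigeonhole over the optimum's color classes, and your telescoping of $M(\ell)$ over layers is the same computation as the paper's induction on the layer index (both yield a bound of the form $(\lfloor\log m\rfloor+1)(\zeta'(\kappa-1)+1)$ on the largest color used).

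However, one step fails as literally written. You define $W$ as the neighbours at least as wide as $\sigma$ and $N$ as the strictly narrower ones, and then assert that $N$ lies in strictly lower layers. That is false: a neighbour strictly narrower than $\sigma$ can lie in $\sigma$'s own layer (widths $3.9$ and $2.1$ both lie in layer $1$), and such a neighbour is covered neither by your wide-neighbour bound (it is narrower than $\sigma$) nor by $M(\ell-1)$ (it is not in a lower layer); consequently $M(\ell)\le M(\ell-1)+\zeta'\,|\OO|+1$ does not follow from what you proved, and a width-based recursion instead of a layer-based one could take far more than $\lfloor\log m\rfloor+1$ steps. The repair is to make the split layer-based, as in Claim~\ref{clm:layer}: let $W$ be the neighbours of $\sigma$ in layers $\ge i$ (width $\ge 2^i$), shrink each of them about a point of its intersection with $\sigma$ to width exactly $2^i$, and rescale the whole configuration by $2^{-i}$, so that the shrunk objects have width $1$ and $\sigma$ has width in $[1,2)$; your independence-number/pigeonhole argument then gives $|W|\le\zeta'\,|\OO|$ for this larger set, $N$ genuinely lies in layers $<i$, and the telescoping goes through. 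Note also that your stated rescaling factor is off for the same reason: shrinking only ``into the layer of $\sigma$'' and then rescaling by the reciprocal of $\sigma$'s width can produce widths slightly below $1$, outside $[1,2]$; rescaling by $2^{-i}$ avoids this.
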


\begin{proof}
 Let $\kappa$ be the number of colors used by an offline optimum for an input sequence $\cal I$.  An object $o\in\I$, having width $w_o$, belongs to layer $L_i$, where $i=\lfloor\log w_o\rfloor$.
For the sake of simplicity, throughout the proof, we will use the layer index $i$ to denote the layer $L_i$. 
 Note that all the objects in $\I$ must belong to some layer $i\in\{0,1,\ldots,\lfloor \log m\rfloor\}$. 
 We first prove the following claim.
 % show that an object belonging to layer $i\geq 0$ can intersect at most $\zeta'(\kappa-1)$ other objects of the layer at least $i$. 
 \begin{claim}\label{clm:layer}
     An object belonging to layer $i\geq 0$ can intersect at most $\zeta'(\kappa-1)$ other objects of layers at least $i$. 
 \end{claim}
 \begin{proof}
     For a contradiction, let us assume that there exists an object $o$ belonging to layer $i$ such that the object $o$ intersects at least $\zeta'(\kappa-1)+1$ other objects of layers at least $i$. 
     Let ${\cal S} \subseteq {\cal I}$ be the collection of all objects belonging to layers at least $i$ that intersect $o$. Since all the objects in ${\cal S}$ intersect with $o$, and 
 the offline optimum requires at most $\kappa$ colors to color $\cal I$, the offline optimum needs at most $\kappa-1$ colors to color the objects in ${\cal S}$.  Since, as per our assumption, $|{\cal S}| \geq \zeta'(\kappa-1)+1$, there exists a subset ${\cal S}'\subseteq  {\cal S}$ of cardinality at least $\zeta'+1$ such that all of the objects in ${\cal S}'$ are assigned a same color by the offline optimum. Note that ${\cal S}'$ consists of pairwise non-touching objects.

       It is evident that every object in ${\cal S}'$ possesses a minimum width of $2^i$, and the width of the object $o$ is in the interval $[2^i, 2^{i+1})$. Now, we shrink (scale down) each object $\sigma\in {\cal S}'$ into an object $\sigma'$ of width $2^i$ such that the object $\sigma'$  is totally contained in $\sigma$, and the object $\sigma'$ still intersects the object $o$. 
 % We apply the following shrinking procedure to all objects of ${\cal S}_o$. By the following procedure, we shrink each object $o'\in {\cal S}_o$ into an object with a width $2^i$.
 %    \begin{itemize}
 %        \item If the center $c_{o'}$ of the object $o'$ lies inside the object $o$, then we shrink $o'$ into an object with width $2^i$ centered at $c_{o'}$.
 %        \item Otherwise, let $p_{o'}$ be a point lying on the of $o'$ which is closest to the center of the object $o$. We shrink $o'$ such that $p_{o'}$ is again the point on the boundary of the shrunken copy and is closest to the center of the object $o$.
 %    \end{itemize} 
 Let ${\cal S}''$ be the collection of all such shrunk objects.
  Since all the objects in ${\cal S}''$ intersect with $o$ and the objects in ${\cal S}''$ are pairwise non-touching, the set ${\cal S}''$  forms an independent kissing configuration. Thus, the independent kissing number for the set ${\cal S}'' \cup \{o\}$ is at least $ |{\cal S}''|=|{\cal S}'|\geq\zeta'+1$. On the other hand, since the width of objects in ${\cal S}'' \cup \{o\}$ is within a factor of two, the independent kissing number for the set ${\cal S}'' \cup \{o\}$ is at most
  $\zeta'$. Thus, we get a contradiction.
  \end{proof}
 
  Now, we will show that each object $o\in {\cal I}$ at  layer $i\geq 0$ is colored by the algorithm $\AFF$ with a color from the range $[1, (\zeta'(\kappa-1)+1)(i+1)]$. Thus, the maximum color that the algorithm $\AFF$ can use is~$(\zeta'(\kappa-1)+1)(\lfloor\log m\rfloor+1)$. Since $\kappa$ is the offline optimum, the asymptotic competitive ratio obtained by the algorithm $\AFF$ is $O(\zeta'\log m)$.

By using induction, we show that each object $o\in {\cal I}$ at layer $i\geq 0$ is colored by the algorithm $\AFF$ with a color from the range $[1, (\zeta'(\kappa-1)+1)(i+1)]$. Due to claim~\ref{clm:layer}, the induction statement is true for objects at layer~0. Assume that the statement holds true for objects at the layer $i=0,1,2,\ldots,k\ (<\lfloor\log m\rfloor)$. We will now demonstrate that the statement holds for $i=k+1$. let us consider an object $o$ at layer $k+1$. Note that, due to the induction hypothesis, each object intersecting $o$  from the lower layers less than $k+1$ are colored using colors from the range $[1, (\zeta'(\kappa-1)+1)(k+1)]$. On the other hand, due to Claim~\ref{clm:layer}, the object $o$ can intersect at most $\zeta'(\kappa-1)$ additional objects from layers at least $k+1$. Thus, the maximum colors that can be assigned to object $o$ by the algorithm $\AFF$  is $(\zeta'(\kappa-1)+1)(k+1)$+$\zeta'(\kappa-1)$+1=$(\zeta'(\kappa-1)+1)(k+2)$. This completes the proof.
\end{proof}

%%% Will KEEP The below IN FULL VERSION %%%%%%

% \begin{remark} Due to~\cite[Remark~6(i)]{DeS23ind}, the value of $\zeta'$ for bounded scaled disks in $\IR^2$ having a radius between $[1,2]$ is at most 11. Consequently, for coloring the geometric intersection graph of bounded scaled disks in $\IR^2$ having a radius between $[1,m]$, $\AL$ has a competitive ratio of at most ${11\log m}$, which is an improvement over the result ${28\log m}$ obtained in~\cite{ErlebachF02}.
% \end{remark}

%%%%%%%%%%%%%%%% KEEP ABOVE
%%%%%%%%%%%%%%%%%

\section{Value of Independent Kissing Number for Families of Geometric Objects}\label{sec:IKS}
Note that the value of $\zeta$ for unit disk graphs is already known to be 5~\cite{Eidenbenz}. Here, we study the value of $\zeta$ for other geometric intersection graphs.
\begin{theorem}\label{thm:main_zeta}
 The independent kissing number for the family of
 \begin{itemize}
 \item [(a)] congruent balls in $\IR^3$ is~12;
     \item[(b)] translated copies of a hypercube in $\IR^d$ is $2^d$, where $d\in\mathbb{Z}^{+}$;
     \item[(c)] translated copies of an equilateral triangle is at least $5$ and at most $6$;
     \item[(d)] translated copies of a regular $k$-gon $(k\geq 5)$ is at least $5$ and at most $6$;
   \item [(e)] congruent hypercubes in $\IR^d$ is at least~$2^{d+1}$, where $d\geq 2$ is an integer;
    \item[(f)] bounded scaled disks in $\IR^2$ having radii in the interval $[1,2]$ is~11;
     \item[(g)] bounded-scaled $\alpha$-fat objects in $\IR^d$ having widths in the interval $[1,m]$ is at least~$\left(\frac{\alpha}{2}\left(\frac{m+2}{1+\epsilon}\right)\right)^d$ and at most~$\Big(\frac{m}{\alpha}+2\Big)^d$, where $\epsilon>0$ is a very small constant and $d\in\mathbb{Z}^{+}$.
    
     \end{itemize}
 \end{theorem}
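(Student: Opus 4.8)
The plan is to handle the seven families separately, but every part reduces to the same two tasks: exhibit a configuration of the claimed number of pairwise non-touching objects all met by a single core (the lower bound), and argue that no configuration can be larger (the upper bound). Throughout I would lean on the elementary facts recorded in Section~\ref{K_vs_zeta} — namely $\zeta\le\kappa(u)+1$ for congruent copies and $\zeta\le\kappa(u)$ for translates — together with the standard-form reduction of Lemma~\ref{lm:standardIndepndentKC} for translates of a planar convex body, and the width/height estimates for $\alpha$-fat objects from Section~\ref{App_fat}.

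For the \emph{upper bounds} I would use three variants of a packing argument. First, a \emph{reduction to the kissing number} for congruent balls (part~(a)): placing the core of radius $r$ at the origin, an independent ball has center $c_i$ with $|c_i|\le 2r$ (it meets the core) and $|c_i-c_j|>2r$ (non-touching). Projecting each center radially onto the sphere of radius $2r$ and estimating the new chord by the law of cosines, the constraint $d_i^2-d_id_j+d_j^2\le 4r^2$ on $d_i=|c_i|\le 2r$ forces $\cos\theta<\frac12$, so every pairwise angular separation exceeds $60^\circ$; the projected balls thus form a genuine kissing configuration and there are at most $12$ of them, the classical three-dimensional kissing number. Second, an \emph{$\ell_\infty$ pigeonhole} for axis-parallel boxes (part~(b)): two translated unit hypercubes meet iff their translation vectors differ by at most $1$ in each coordinate and are non-touching iff they differ by more than $1$ in some coordinate, so the independent vectors lie in $[-1,1]^d$ with pairwise $\ell_\infty$-distance exceeding $1$; partitioning $[-1,1]^d$ into $2^d$ unit cells allows at most one vector per cell, giving $\zeta\le 2^d$. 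Third, a \emph{volume estimate} for fat objects (parts~(f),(g)): a width-$w$ object contains a ball of radius $\ge 1$ and sits in a ball of radius $\le m/\alpha$ about its aspect point, so the inscribed balls of the (separated) independent objects are disjoint inside a bounded neighbourhood of the core, and a grid/volume comparison yields $\bigl(\frac{m}{\alpha}+2\bigr)^d$ and its disk specialization. The polygonal caps of parts~(c)–(d) instead require a direct geometric case analysis around the core showing that a seventh non-touching translate cannot be inserted, so $\zeta\le 6$.

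For the \emph{lower bounds} I would build explicit configurations. For~(a) the twelve balls of the icosahedral kissing configuration already sit at mutual angular distance $\arccos(1/\sqrt5)\approx 63.4^\circ>60^\circ$, hence are pairwise non-touching, giving $\zeta\ge 12$ and thus equality. For~(b) the $2^d$ corner translates with vectors in $\{-1+\varepsilon,\,1-\varepsilon\}^d$ are pairwise at $\ell_\infty$-distance $>1$ and each meets the core. For congruent hypercubes (part~(e)) I would superimpose those $2^d$ axis-parallel corner copies with a second family of diagonally rotated copies tucked into the facet gaps, producing $2^{d+1}$ mutually non-touching copies. For the polygons (parts~(c)–(d)) and for disks with radii in $[1,2]$ (part~(f)) I would give hand-built placements — five translates around a regular $k$-gon, and an explicit eleven-disk arrangement using radius-$1$ satellites around a radius-$2$ core placed off a single circle — each checked to be pairwise non-touching and to meet the core. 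For~(g) a regular grid of width-$1$ objects packed around a width-$m$ core gives the stated $\bigl(\frac{\alpha}{2}\cdot\frac{m+2}{1+\varepsilon}\bigr)^d$ bound.

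The hard part is pinning down the \emph{exact} constants rather than mere orders of magnitude. The $12$ in~(a) ultimately rests on the classical (and deep) determination of the kissing number of $\IR^3$, which I would simply cite; the only new content there is the radial-projection lemma showing that non-touching is preserved. The genuinely case-heavy arguments are~(c),~(d) and~(f): proving that no seventh translate fits around a regular $k$-gon, and that precisely eleven disks (neither ten nor twelve) fit for radii in $[1,2]$, demands a careful optimization over the positions and radii of the satellites, and I expect the bulk of the effort to lie exactly here. By contrast the fat-object bounds of~(g) are deliberately non-tight, so there only a clean volume comparison is needed.
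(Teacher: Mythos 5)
Your parts (a)--(c) and the upper bound of (g) track the paper's proofs closely; in fact, for (a) your radial-projection step is \emph{more} explicit than the paper, which simply invokes the kissing number of $\IR^3$ for the upper bound, and your icosahedral lower bound and the $\ell_\infty$-pigeonhole for (b) are exactly the paper's arguments. The first genuine gap is part (e). A $d$-dimensional hypercube has only $2d$ facets, so a ``second family of diagonally rotated copies tucked into the facet gaps'' can have at most $2d$ members, and $2^d+2d<2^{d+1}$ already for $d=3$; the construction as described simply cannot produce $2^{d+1}$ copies in dimension $d\geq 3$ (and even in the plane it is doubtful that four axis-parallel corner squares reaching the core leave room for rotated squares at the edge midpoints that meet the core without hitting two corner squares). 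The paper instead argues by induction on $d$: the base case is an explicit planar arrangement of $8$ congruent squares around a core square, and the step from $d-1$ to $d$ takes \emph{two} translated copies of the $(d-1)$-dimensional independent set, appending $d$-th coordinate $+\left(\tfrac12+\epsilon\right)$ to one copy and $-\left(\tfrac12+\epsilon\right)$ to the other while the core gets $0$; this doubles the count at every dimension, and all cubes stay axis-parallel except in the first two coordinates. Some such dimension-doubling device is needed; nothing resembling it appears in your sketch.

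The second genuine gap is part (f). Your only upper-bound tool there is the volume/grid comparison, whose disk specialization gives $\left(\tfrac{2}{1}+2\right)^2=16$, nowhere near $11$, and you explicitly defer the remaining work to ``a careful optimization over the positions and radii of the satellites'' --- but that optimization \emph{is} the theorem, and neither you nor the paper could carry it out by hand. The paper's actual route is a reduction to two known circle-packing-in-a-circle results: every neighbour of radius in $[1,2]$ contains a unit disk that still meets the core, so the pairwise non-touching unit disks all pack inside a disk of radius $4$; Fodor's theorem that $12$ unit disks require a container of radius at least $\approx 4.09$ yields $\zeta\leq 11$, while Melissen's packing of $11$ unit disks in a container of radius $1+1/\sin(\pi/9)\approx 3.92<4$ yields the matching lower bound. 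Without citing (or reproving) these two results, ``exactly $11$'' is out of reach. A smaller quantitative problem of the same flavour sits in your lower bound for (g): a \emph{regular grid} of balls of radius $(1+\epsilon)/\alpha$ has packing density of order $V_d\,2^{-d}$, where $V_d\to 0$ is the volume of the $d$-dimensional unit ball, which is strictly below the density $2^{-d}$ needed to reach the stated constant $\left(\tfrac{\alpha}{2}\right)^d$; the paper obtains that factor from the Minkowski--Hlawka theorem (existence of packings of density at least $2^{-d}$), not from a grid. Relatedly, for (d) your ``hand-built placements'' must work uniformly for all $k\geq 7$; the paper does this with a single construction (five unit $k$-gons at the vertices of a pentagon of circumradius $1.78$, using the inradius bound $\cos(\pi/7)>0.9$), whereas building each $k$ separately is not an option.
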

For each item (except (e)) of the above theorem, we prove both the upper and lower bounds of the value of the independent kissing number. \\

% \begin{theorem}
%      The independent kissing number for the family of congruent balls in $\IR^3$ is~12.
% \end{theorem}

 \noindent
\textbf{Theorem~\ref{thm:main_zeta}(a).} \textit{The independent kissing number for the family of congruent balls in $\IR^3$ is~12.}
 % \noindent
 % \textbf{Proof of Theorem~\ref{thm:main_zeta}(a).}\\
 \begin{proof}
First, we present a lower bound.
Let $C$ be  a regular icosahedron whose each edge is of length $\ell$=$2+\epsilon$, where $0<\epsilon <1$ (in particular, one can choose $\epsilon$=$0.001$).
Let corner points of $C$ be the centers  of  unit (radius) balls $\sigma_1,\sigma_2,\ldots,\sigma_{12}$. 
Since the edge length of the icosahedron $C$ is greater than 2, all these balls are mutually non-touching. Let $B$ be the circumscribed ball of the icosahedron $C$. It is a well-known fact that if a regular icosahedron has edge length $\ell$, then the radius $r$ of the circumscribed ball is $r=\ell \sin(\frac{2\pi}{5})$~\cite{novivant}. In our case, it is easy to see that  $r<2$.  In other words, the distance from the center of $B$ to each corner point of $C$ is less than two units.
Thus, each of these unit balls $\sigma_1,\sigma_2,\ldots,\sigma_{12}$ is intersected by a unit ball $\sigma_{13}$ whose center  coincides with the center of  $B$. This implies that the value of $\zeta$ for congruent balls in $\IR^3$ is at least~12.
The upper bound follows from the fact that the kissing number for balls in $\IR^3$ is 12~\cite{BrassMP,Schutte}. Hence,  the result follows.
 \end{proof}

% \begin{theorem}
%      The independent kissing number for the family of translated copies of a hypercube in $\IR^d$ is $2^d$, where $d\in\mathbb{Z}^{+}$.
% \end{theorem}

\noindent
\textbf{Theorem~\ref{thm:main_zeta}(b).}  \textit{The independent kissing number for the family of translated copies of a hypercube in $\IR^d$ is $2^d$, where $d\in\mathbb{Z}^{+}$.}

% \noindent
% \textbf{Proof of Theorem~\ref{thm:main_zeta}(b).}\\

\begin{proof}
\textbf{Upper Bound.} Let $K$ be an optimal independent kissing configuration for translates of an axis-parallel unit hypercube in $\IR^d$. Let the core of the configuration be $u$. 
It is easy to observe that an axis-parallel hypercube $R$, with a side length of 2 units, contains all the centers of hypercubes in $K\setminus \{u\}$.  Let us partition $R$ into $2^{d}$ smaller axis-parallel hypercubes, each having unit side length.
Note that each of these smaller hypercubes can contain at most one center of a hypercube in $K\setminus \{u\}$. As a result, we have $| K\setminus \{u\}|  \leq 2^d$. Therefore, the independent kissing number for translates of a hypercube in $\IR^d$ is at most $2^{d}$.

\textbf{Lower Bound.} We give an explicit construction of an independent kissing configuration $K$, where the size of the independent set is $2^d$.
 Let $\sigma_1,\sigma_2,\ldots,\sigma_{2^d}$  and $\sigma_{2^d+1}$ be the $d$-dimensional  axis-parallel unit hypercubes of $K$.
 We use $c_i$ to denote the center of $\sigma_i$, for $i\in [2^d+1]$. Let the center  $c_{2^d+1}=(\frac{1}{2},\frac{1}{2},\ldots,\frac{1}{2}$),  and $p_1,p_2,\ldots,p_{2^d}\in \IR^d$ be corner points of the  hypercube $\sigma_{2^d+1}$. It is easy to observe that each coordinate of $p_i, i\in [2^d]$ is either 0 or 1.
Let $\epsilon$ be a positive constant satisfying $0<\epsilon <\frac{1}{2\sqrt{d}}$. For $i\in[2^d]$ and $j\in[d]$, let us define the $j$th coordinates of $c_i$ as follows:
  \begin{equation}\label{eq_no_1}\small{
c_i(x_j) =
    \begin{cases}
      -\epsilon, & \text{ if $p_i(x_j)=0$}\\
      1+\epsilon, & \text{ if $p_i(x_j)=1$},  
    \end{cases}   }   
\end{equation}
where $c_i(x_j)$ and $p_i(x_j)$ are the $j^{th}$ coordinate value of $c_i$ and $p_i$, respectively.

To complete the proof, we argue that
 the hypercubes $\sigma_1,\sigma_2,\ldots,\sigma_{2^d}$ are mutually non-touching, and  each intersected by the  hypercube   $\sigma_{2^d+1}$.
To see this, first, 
note that, for any $i\in [2^d]$,  the Euclidean distance $d(p_i,c_i)$ between $p_i$  and $c_i$ is $\sqrt{d}\epsilon$ (follows from Equation~\ref{eq_no_1}). Since $\epsilon <\frac{1}{2\sqrt{d}}$, we have $d(p_i,c_i)< \frac{1}{2}$.  As a result,  the corner point $p_i$ of $\sigma_{2^d+1}$ is contained in the  hypercube $\sigma_i$. Thus,   $\sigma_{2^d+1}$ intersects $\sigma_i$, $\forall$ $i\in [2^d]$.
Now, consider any  $i,j\in [2^d]$ such that $i \neq j$. Since $p_i$ and  $p_j$
are distinct, it is easy to note that they will differ in at least one coordinate. As a result, 
the distance (under $L_{\infty}$-norm) between $c_{i}$ and $c_j$ is $(1+2\epsilon)$. So, $\sigma_i$ and $\sigma_j$ are non-touching.
 Hence, the result follows.
 %\myqed
 % \myqed\\ %\end{proof}
\end{proof}

% \begin{theorem}
%      The independent kissing number for the family of translated copies of an equilateral triangle is at least $5$ and at most $6$.
% \end{theorem}

\noindent
\textbf{Theorem~\ref{thm:main_zeta}(c).} \textit{The independent kissing number for the family of translated copies of an equilateral triangle is at least $5$ and at most $6$.}

% \noindent
% \textbf{Proof of Theorem~\ref{thm:main_zeta}(c).}\\
\begin{proof}
Throughout the following proof, if not explicitly mentioned, we refer to an equilateral triangle as a unit triangle. 
% \noindent
% \textbf{Theorem 9.}\textit{
%     For the family of the translated copies of an equilateral triangle, the value of the independent kissing number is at least $5$ and at most $6$.}

For the lower bound, refer to Figure~\ref{fig:triangle}. Here, we have constructed an example of the independent kissing configuration for unit triangles, where the size of the independent set is five. The rest of the proof is for the upper bound.
\begin{figure}[htbp]
  \centering
     \begin{subfigure}[b]{0.30\textwidth}
          \centering
        \includegraphics[scale=0.37]{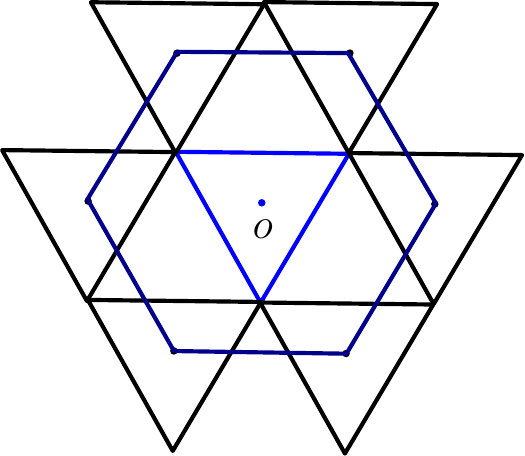}
    \caption{}
    \label{fig:nbd_tri}
     \end{subfigure}
     \hfill
     \begin{subfigure}[b]{0.30\textwidth}
          \centering
        \includegraphics[scale=0.6]{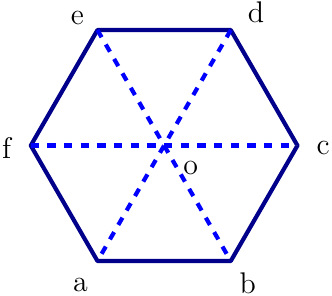}
    \caption{}
    \label{fig:tri_nbd}
     \end{subfigure}
      \hfill
     \begin{subfigure}[b]{0.30\textwidth}
          \centering
        \includegraphics[scale=0.6]{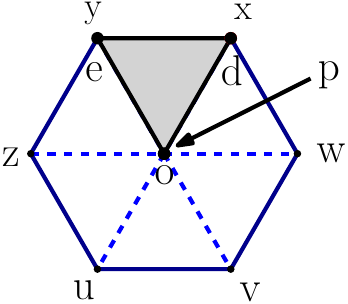}
    \caption{}
    \label{fig:tri_ode}
     \end{subfigure}
       \caption{(a) The region $N(\sigma(o))$ containing all the centers of unit triangles that can intersect  $\sigma(o)$ is marked blue; (b) The region $N(\sigma(o))$ partitioned into six symmetrical triangles; (c) The point $p$ coincides with the point $o$. } \end{figure}
 
Let $K$ be an optimal independent kissing configuration for  translates of an equilateral triangle. Let the core of the configuration be the triangle $\sigma(o)$.
It is easy to observe that if we move a translated copy $\sigma'$ of $\sigma(o)$ along the boundary of $\sigma(o)$, the locus of the center of $\sigma'$ will form a hexagon of side length one. Therefore, the hexagonal region $N(\sigma(o))$, as shown in Figure~\ref{fig:nbd_tri}, contains  all the centers of translates of a unit triangle that can touch/intersect $\sigma(o)$.

Let us partition the region $N(\sigma(o))$, denoted by the hexagon $abcdef$, into six sub-regions, each consisting of an equilateral triangle with unit side as shown in the Figure~\ref{fig:tri_nbd}.
We will show that each sub-region can contain at most one triangle that belongs to $K\setminus \{\sigma(o)\}$.
We prove this for the triangular sub-region $\Delta ode$ (proof for all other sub-regions will be similar).
To prove this, it is enough to show that if we put an equilateral triangle $\sigma(p) \in K\setminus \{\sigma\}$ for any point $p\in \Delta ode$, $N(\sigma(p))$ will cover the entire triangular sub-region $\Delta ode$.
We denote the hexagon $N(\sigma(p))$ by $uvwxyz$.

 \begin{figure}[htbp]
  \centering
     \begin{subfigure}[b]{0.3\textwidth}
          \centering
        \includegraphics[scale=0.55]{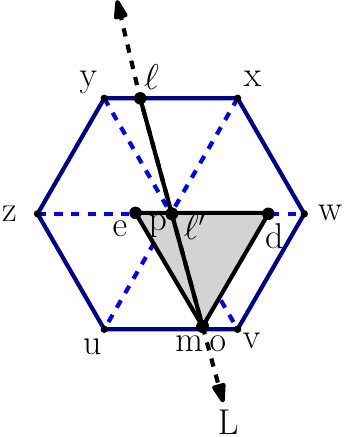}
    \caption{}
     \label{fig:tri_case1}
     \end{subfigure}
     \hfill
     \begin{subfigure}[b]{0.3\textwidth}
          \centering
        \includegraphics[scale=0.55]{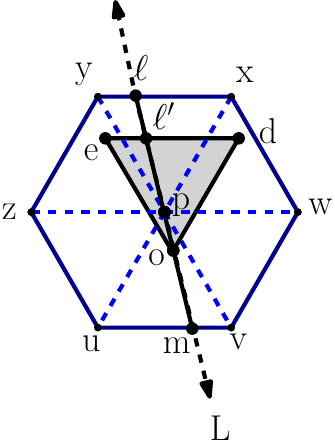}
    \caption{}
    \label{fig:tri_case2}
     \end{subfigure}
     \centering 
     \begin{subfigure}[b]{0.35\textwidth}
         \centering
\includegraphics[width=40mm]{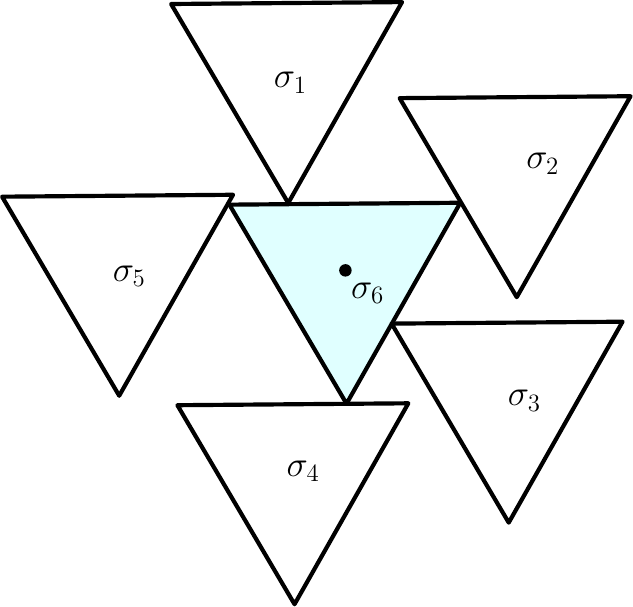}
\caption{}
\label{fig:triangle}
     \end{subfigure}
     \caption{(a) $\ell,{\ell}'$ and $m$ are the points of intersection of the line $L$ with line segments $\overline{xy}, \overline{de}$ and $\overline{uv}$, respectively; (b) The point $p$ lie inside the $\Delta ode$; (c) An independent kissing configuration for unit triangles.}
\end{figure}

\begin{itemize}
\item   [Case I] The point $p$ coincides with the point {$o$}. 
 It is easy to observe that $N(\sigma(p))$ is entirely covering triangle \( \Delta ode \) (see Figure~\ref{fig:tri_ode}).

\item [Case II]  The point $p$ lies on the line segment $\overline{od}$ or $\overline{oe}$.
Without loss of generality, let us assume that the point $p$ lies on the line segment $\overline{od}$. One way to view this is as follows: initially, as in case I, the point $p$ coincides with the point $o$; next, we move the point $p$ along the line segment $\overline{od}$ towards the point $d$ (until it reaches its new destination). We can alternatively visualize this as follows: instead of moving the point $p$ on $\overline{od}$ towards $d$, fix the point $p$ and the region $N(\sigma(p))$; move the triangle \( \Delta ode \) from its initial position  such that the point $o$ always lies on the line segment $\overline{pu}$ and $o$ moves towards $u$.
Since  $\overline{pu}$ and $\overline{yz}$ are parallel and are of the same length, $e$ will move along the line segment $\overline{yz}$ towards the point $z$ and when $o$ reaches $u$ the triangle \( \Delta ode \) is moved to the position of triangle \( \Delta upz \).
Hence, $N(\sigma(p))$ completely covers the triangle \( \Delta ode \).

\item [Case III] The point $p$ lies inside the triangle \( \Delta ode \). 
Let $L$ be a line obtained by extending the line segment $\overline{op}$ in both directions. Let $\ell,{\ell}'$ and $m$ be the points of intersection of the line $L$ with line segments $\overline{xy}, \overline{de}$ and $\overline{uv}$, respectively ({see Figure}~\ref{fig:tri_case1}).  One way to visualize the situation is as follows: initially, as in case I, the point $p$ coincides with the point $o$; next, we move the point $p$ along the line $L$ towards the point ${\ell}'$ (until it reaches its new destination). We can alternatively visualize this as follows: instead of moving $p$, fix the point $p$ and the region $N(\sigma(p))$; move the triangle $\Delta ode$ from its initial position $\Delta pxy$ such that the point $o$ moves along the line $L$ towards the point $m$. Observe that, when
$o$ reaches the boundary $\overline{uv}$, the point
$p$ coincides with the point ${\ell}'$ (see Figure~\ref{fig:tri_case2}). Thus, the triangle \( \Delta ode \) is always contained in the region $N(\sigma(p))$. This completes the proof.%\myqed
\end{itemize} % 
\end{proof}

% \begin{theorem}
%      The independent kissing number for the family of translated copies of a regular $k$-gon $(k\geq 5)$ is at least $5$ and at most $6$.
% \end{theorem}

\noindent
\textbf{Theorem~\ref{thm:main_zeta}(d).}  \textit{The independent kissing number for the family of translated copies of a regular $k$-gon $(k\geq 5)$ is at least $5$ and at most $6$.}

% \noindent
% \textbf{Proof of Theorem~\ref{thm:main_zeta}(d).}\\

\begin{figure}[!ht]
  \centering
     \begin{subfigure}[b]{0.23\textwidth}
         \centering
\includegraphics[scale=.22]{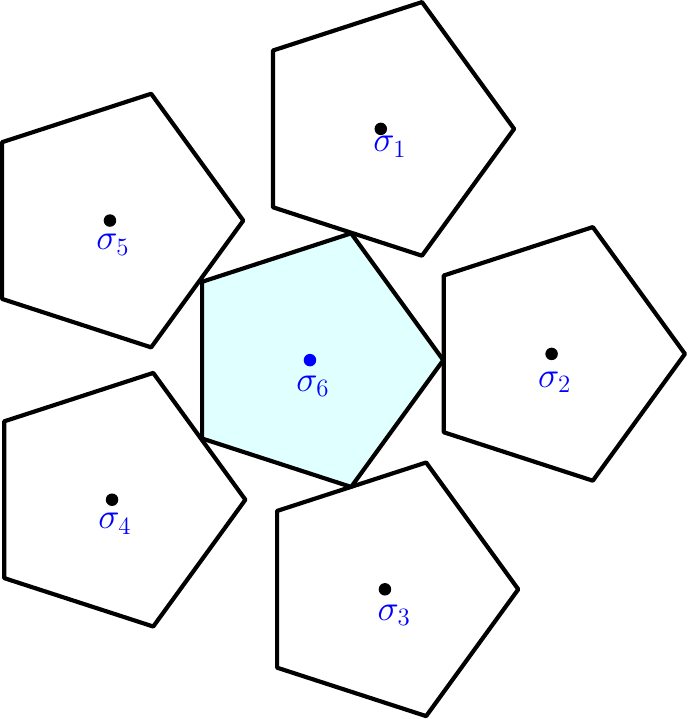} %width=65mm
\caption{}
\label{fig:pentagon}
     \end{subfigure}
     \hfill
     \begin{subfigure}[b]{0.23\textwidth}
         \centering
\includegraphics[scale=.22]{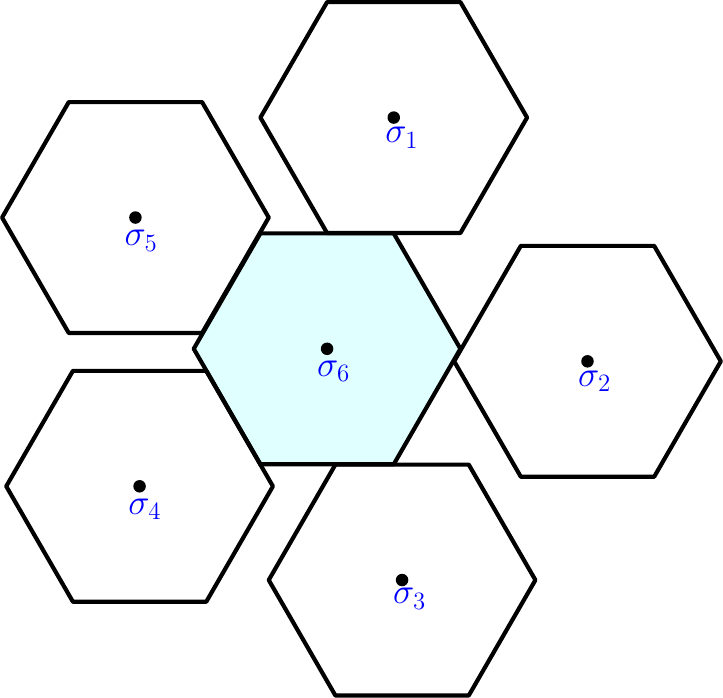}
\caption{}
\label{fig:hexagon}
     \end{subfigure}
     \begin{subfigure}[b]{0.23\textwidth}
         \centering
\includegraphics[scale=.55]{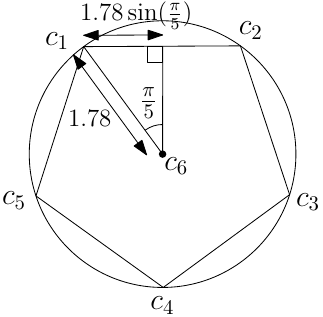}
\caption{}
\label{fig:regular_1}
     \end{subfigure}
     \begin{subfigure}[b]{0.27\textwidth}
         \centering
\includegraphics[scale=.6]{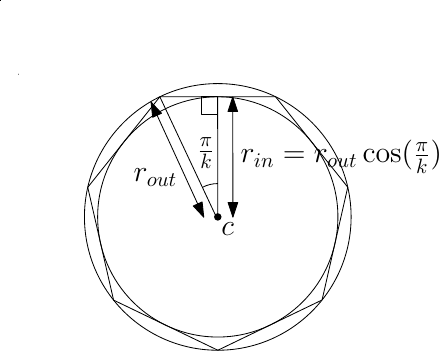}
\caption{}
\label{fig:regular}
     \end{subfigure}
       \caption{Illustration of Theorem~\ref{thm:main_zeta}(d):  (a) An independent kissing configuration for the pentagon; (b) An independent kissing configuration for  hexagon; (c) Side length of a regular pentagon inscribed in $C$; (d) Radius of the  inscribed circle of a regular $k$-gon ($k=7$).}
       \label{fig:penta_hexa}
\end{figure}
\begin{proof}

First, note that the kissing number for the regular $k$-gon 
 ($k \geq 5$) is 6~\cite{ZHAO1998293,article}.  Thus, the upper bound follows. In the rest of the proof, we prove the lower bound.
For $k=5$ and $6$, refer to  Figure~\ref{fig:penta_hexa}(a-b), where we have constructed examples of independent kissing configurations having the independent set of size five. For $k\geq 7$, we have the following proof.

Consider a circle $C(c_6)$ of radius $1.78$ (see Figure~\ref{fig:penta_hexa}(c)). Let us inscribe a regular pentagon 
$P$ of the largest side length inside $C$. 
Let $c_1, \ldots, c_5$ are the five corner points of $P$.
For each $i\in [5]$, let $\sigma_i(c_i)$ be a regular unit $k$-gon .
Observe that $dist(c_i,c_j)= 2\times 1.78 \sin(\pi/5)  > 2$, where $i,j\in[5]$ and $i\neq j$. Therefore, all the regular unit $k$-gons  $\sigma_1, \sigma_2,\ldots, \sigma_5$ are mutually non-touching. 
 Next, we argue that a regular $k$-gon $\sigma_6(c_6)$ intersects each of the
$k$-gons  $\sigma_1, \sigma_2,\ldots, \sigma_5$. 
Let $r_{out}$ and $r_{in}$ be the radius of the circumscribed circle and the inscribed circle of a regular $k$-gon, respectively. It is easy to observe that  $r_{in}=r_{out}\cos({\frac{\pi}{k}})$ (refer to Figure~\ref{fig:penta_hexa}(d)). Since $k \geq 7$, in our case the value of $r_{in}$ for each of the $\sigma_i$ is at least $\cos{\frac{\pi}{7}}(>0.9)$.
 Since, the  $dist(c_i,c_6)=1.78(<1.8)$ for all $i\in[5]$,  the regular unit $k$-gon  $\sigma_6$ intersects each of the regular unit $k$-gons  $\sigma_1, \sigma_2,\ldots, \sigma_5$.  Hence, the result follows.
% \myqed\\
\end{proof}

% \begin{theorem}
%      The independent kissing number for the family of  congruent hypercubes in $\IR^d$ is at least~$2^{d+1}$, where $d\geq 2$ is an integer.
% \end{theorem}
\noindent
\textbf{Theorem~\ref{thm:main_zeta}(e).}  \textit{The independent kissing number for the family of  congruent hypercubes in $\IR^d$ is at least~$2^{d+1}$, where $d\geq 2$ is an integer.}

% \noindent
% \textbf{Proof of Theorem~\ref{thm:main_zeta}(e).}\\
\begin{figure}[!ht]
 \centering
 %  \begin{subfigure}[b]{0.32\textwidth}
 %        \centering
 %        \includegraphics[scale=0.5]{Figures/cube2.pdf}
 %       \caption{(a) Neighbourhood of a fixed-oriented cube in $\mathbb{R}^3$; }
 %       \label{cube}
 % \end{subfigure}
 % \centering
\centering
\includegraphics[scale=0.75]{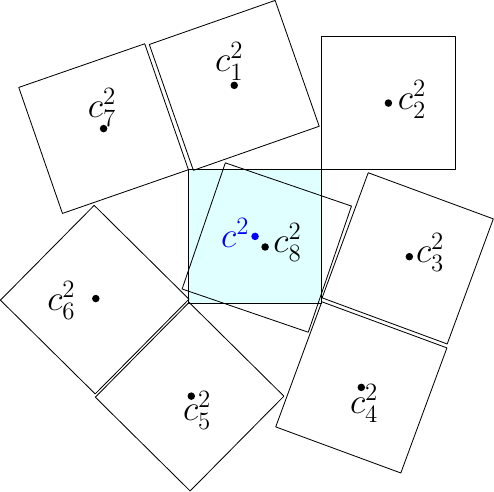}
% \centering
%   \begin{subfigure}[b]{0.45\textwidth}
%   \centering
% \includegraphics[scale=0.86]{Figures/kissing square.pdf}
% \caption{}
% \label{fig:kiss_sq}
% \end{subfigure}
% \label{}
\caption{An independent kissing configuration for congruent squares.}
\label{fig:arb_sq}
% ; (b) The optimal and unique kissing configuration for squares~\cite{KlamkinLL}.}
  \end{figure}

\begin{proof}
 To prove the theorem, we construct an  independent kissing configuration $K_d$, where the size of the independent set is $M(d)=8\times 2^{d-2}$ for unit hypercubes in $\IR^d$. 
 We give this construction by induction on $d$.
For $d=2$, the base case of the induction,  consider the independent kissing configuration given in Figure~\ref{fig:arb_sq}, where the independent set size is $8$.
 
Let us assume that the induction hypothesis holds for $d-1$ dimension. Let  $K_{d-1}$ be an independent kissing configuration, where the size of the independent set is $M(d-1)=8\times 2^{d-3}$ for unit hypercubes in $\IR^{d-1}$. Let the core hypercube in $K_{d-1}$ be centered at $c^{d-1}$, and let other hypercubes be centered at $c_1^{d-1}, c_2^{d-1},\ldots, c_{M(d-1)}^{d-1}$. To construct the independent kissing configuration $K_d$, we take two copies of the independent set from $K_{d-1}$ and append a $d$th coordinate to each. We append  $(\frac{1}{2}+\epsilon)$ as the   $d$th coordinate to all the centers of the first copy; whereas we append $-(\frac{1}{2}+\epsilon)$ to the other. Formally, we have

\begin{equation}\label{eq:mcd_arb_hypercube1}
c_i^d(x_j) =
    \begin{cases}
      c_i^{d-1}(x_j), & \text{ if $j<d$ and $i\leq M(d)$}\\
      \frac{1}{2}+\epsilon,& \text{ if $j=d$ and $i\leq M(d-1)$}\\
      -(\frac{1}{2}+\epsilon), & \text{ if $j=d$ and $i > M(d-1)$.}
    \end{cases}      
\end{equation}
The center $c^d$ is obtained by appending $0$ as the  $d$th coordinate to $c^{d-1}$. Therefore, we have
\begin{equation}\label{eq:mcd_arb_hypercube2}
c^d(x_j)=
     \begin{cases}
     $0$,& \text{ if $j=d$}\\
     c^{d-1}(x_j), & \text{ if $j<d$.}
     \end{cases} 
\end{equation}
Each of the  hypercubes is axis-parallel in every dimension except the first two coordinates. 
Thus, all unit hypercubes in $\IR^d$ centered at $c_1^d,c_2^d,\ldots,c_{M(d)}^d$ are mutually non-touching, and all of them are intersected by the unit hypercube centered at $c^{d}$. It is straightforward to see that the value of $M(d)= 2M(d-1)$, where $M(d-1)=8\times 2^{d-3}$. 
This completes the proof.
  % \myqed\\
  \end{proof}

% \begin{theorem}
%      The independent kissing number for the family of bounded scaled disks in $\IR^2$ having radii in the interval $[1,2]$ is at most~11.
% \end{theorem}

Before presenting the remaining proofs, we give definitions of ball packing and packing density that we are going to use.
A \emph{ball packing} is an arrangement of non-overlapping balls (of equal size) in $\IR^d$ within an object $O$ in $\IR^d$.  The aim of the packing problem is to find an arrangement in which the balls occupy as much space as possible within $O$. The proportion of space (volume) of $O$ occupied by the non-overlapping balls is called the \emph{packing density} $\eta$ of the arrangement.\\

\noindent
\textbf{Theorem~\ref{thm:main_zeta}(f).}  \textit{The independent kissing number for the family of bounded scaled disks in $\IR^2$ having radii in the interval $[1,2]$ is~11.}
% \noindent
% \textbf{Proof of Theorem~\ref{thm:main_zeta}(f).}\\
\begin{proof}
    Note that the value of the independent kissing number for bounded scaled disks having radii in the interval $[1,2]$ is upper bounded by the number of unit disks that can be packed (non-overlapping unit disks) inside a disk of radius four. Melissen~\cite{melissen1994} proved that a disk with radius $1+\frac{1}{\sin{\frac{\pi}{9}}} \equiv3.923...$  can pack eleven unit disks inside it. On the other hand, Fodor~\cite{Fodor2000} proved that the smallest disk that can pack twelve unit disks inside it has a radius of at least~4.092.  Hence, the result follows. 
    \end{proof}

% \begin{theorem}
%      The independent kissing number for the family of bounded-scaled $\alpha$-fat objects in $\IR^d$ having widths in the interval $[1,m]$ is at least~$\left(\frac{\alpha}{2}\left(\frac{m+2}{1+\epsilon}\right)\right)^d$ and at most~$\Big(\frac{m}{\alpha}+2\Big)^d$, where $\epsilon>0$ is a very small constant and $d\in\mathbb{Z}^{+}$.
% \end{theorem}

\noindent
\textbf{Theorem~\ref{thm:main_zeta}(g).}  \textit{The independent kissing number for the family of bounded-scaled $\alpha$-fat objects in $\IR^d$ having widths in the interval $[1,m]$ is at least~$\left(\frac{\alpha}{2}\left(\frac{m+2}{1+\epsilon}\right)\right)^d$ and at most~$\Big(\frac{m}{\alpha}+2\Big)^d$, where $\epsilon>0$ is a very small constant and $d\in\mathbb{Z}^{+}$.}

% \noindent
% \textbf{Proof of Theorem~\ref{thm:main_zeta}(g).}\\
\begin{proof}

% \noindent
% \textbf{Proof of Theorem~\ref{thm:main_zeta}(f).}\\
% Recall that according to the definition of $\alpha$ fat object, there  are  two balls $B_{out}$ and $B_{in}$ with radii $\frac{m}{\alpha}$ and $m$, that contains and is contained in $\sigma$, respectively.

% According to the definition of independent kissing number, $\zeta$ many non-touching objects can be placed in and around a core object $u$ such that they all intersect the object $u$. 

Note that an optimal independent kissing configuration is achieved when the core  $\sigma$ is the largest possible $\alpha$-fat object, i.e., with a width $m$, and all the mutually non-touching neighbours of $\sigma$ are the smallest $\alpha$-fat objects with unit width.
% Let $\sigma$ be an $\alpha$-fat object having width $m$. 
% To determine the value of independent kissing number $\zeta$ for a family of $\alpha$-fat object having a width in $[1,m]$, we must place as many non-touching objects of width 1 (unit objects) as possible in and around $\sigma$ such that all unit objects intersect $\sigma$. 
Now, we present an estimation on the upper and lower bound of $\zeta$ for the family of $\alpha$-fat objects having widths in $[1,m]$.

\noindent
   \textbf{Upper Bound.}    
  As per the definition of $\alpha$-fat object, a ball $B$ of radius $\frac{m}{\alpha}$ contains $\sigma$. On the other hand, each unit width $\alpha$-fat object contains a unit ball. To give an upper bound on the value of $\zeta$, we place as many non-touching unit balls as possible in and around the ball $B$ such that all unit balls intersect $B$. In other words, we pack (place non-overlapping objects) as many non-touching unit balls as possible within a ball $B_1$ of radius $\left(\frac{m}{\alpha}+2\right)$. 
  Therefore, the value of $\zeta$ is at most~$\Big(\frac{m}{\alpha}+2\Big)^d$, which is the ratio between the volume of the ball $B_1$ and the volume of a unit ball.

\noindent
    \textbf{Lower Bound.}
    As per the definition of $\alpha$-fat object, a ball $B'$ of radius $m$ is contained within $\sigma$. Each unit width $\alpha$-fat object is totally contained within a ball of radius $\frac{1}{\alpha}$. Thus, to obtain a lower bound on the value of independent kissing number $\zeta$, we place as many balls of radius $\frac{1}{\alpha}$ as possible in and around $B'$ such that all balls of radius $\frac{1}{\alpha}$ intersect $B'$. In other words, we pack (place non-overlapping objects) as many balls of radius $\Big(\frac{1+\epsilon}{\alpha}\Big)$ as possible within a ball $B_1'$ of radius $m+2$, where $\epsilon>0$ is a very small constant.
    Note that the value of $\zeta$ is at least the packing density times the ratio between the volume of the ball having radius $m+2$ and the volume of the ball having radius $\frac{(1+\epsilon)}{\alpha}$.
     % \begin{align*}
     %     \zeta\geq the ratio between the volume of the ball having radius $m+2$ and the volume of the ball having radius $\frac{(1+\epsilon)}{\alpha}$.
     % \end{align*}
 Due to the Minkowski–Hlawka theorem, the packing density $\eta$ is at least $2^{-d}$~\cite{Sloane84}.
   Hence, the value of $\zeta$ is at least~$\left(\frac{\alpha}{2}\left(\frac{m+2}{1+\epsilon}\right)\right)^d$, which is the packing density (i.e., $2^{-d}$) times the ratio between the volume of the ball $B_1'$ and the volume of the ball having radius~$\frac{(1+\epsilon)}{\alpha}$.
% the value of $\zeta$ is at least ~$\Big(\frac{(m+2)\alpha}{2(1+\epsilon)}\Big)^d$ and at most ~$\Big(\frac{m}{\alpha}+2\Big)^d$
% \myqed
\end{proof}

\begin{remark}\label{zeta_zeta'}
    Let $\zeta$ and $\zeta'$ be the independent kissing number for the family of bounded-scaled $\alpha$-fat objects in $\IR^d$ having widths in $[1,m]$ and $[1,2]$, respectively.  Now, we have the following remarks.
\begin{itemize}

\item[(i)] $\zeta'\leq\left(\frac{2}{\alpha}+2\right)^d$: It follows from Theorem~\ref{thm:main_zeta}(g) by putting the value of $m=2$.

 \item[(ii)]  $\zeta\geq \left(\frac{(m+2)\alpha^2}{4(1+\epsilon)(1+\alpha)}\right)^d \zeta'$:  Due to Theorem~\ref{thm:main_zeta}(g), we have   $\zeta\geq \left(\frac{\alpha}{2}\left(\frac{m+2}{1+\epsilon}\right)\right)^d$. Note that $\zeta\geq \left(\frac{\alpha}{2}\left(\frac{m+2}{1+\epsilon}\right)\right)^d
=\left(\frac{\alpha}{2}\left(\frac{m+2}{1+\epsilon}\right)\right)^d\left(\frac{1}{\frac{2}{\alpha}+2}\right)^d\left(\frac{2}{\alpha}+2\right)^d$. 
Now, using the fact $\zeta'\leq\left(\frac{2}{\alpha}+2\right)^d$ in the above expression, we have 
$\zeta\geq \left(\frac{(m+2)\alpha^2}{4(1+\epsilon)(1+\alpha)}\right)^d \zeta'$.
\end{itemize}
\end{remark}

% \subsection{Results for Coloring, Independent Set, Dominating Set and its Variants}

\section{Applications}\label{Application_zeta}
In this section, we mention some of the implications of Theorem~\ref{thm:main_zeta}. 
Combining Theorem~\ref{thm:main_zeta} with Theorem~\ref{th:mds_geometric_objects_1}, Theorem~\ref{th:mids_geometric_objects} and Lemma~\ref{lem:coloring}, respectively, we obtain the following results for the online MDS, MIDS and MC problems, respectively.

\begin{theorem}\label{thm:main:conclusion1}

For each of the  MDS, MIDS and MC problems, there exists a deterministic online algorithm that achieves a competitive ratio of at most
 \begin{itemize}
 \item [(a)] 12 for  congruent balls in $\IR^3$;
     \item[(b)] $2^d$ for translated copies of a hypercube in $\IR^d$, where $d\in\mathbb{Z}^{+}$;
     \item[(c)]  $6$ for translated copies of a regular $k$-gon $($for $k=3$ and $k\geq 5)$;
      \item[(d)] $11$ for bounded scaled disks in $\IR^2$ having radii in the interval $[1,2]$; 
     \item[(e)] $\Big(\frac{m}{\alpha}+2\Big)^d$ for bounded-scaled $\alpha$-fat objects in $\IR^d$ having widths in the interval $[1,m]$, where $d\in\mathbb{Z}^{+}$.
    
     \end{itemize}
 \end{theorem}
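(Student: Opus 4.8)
The plan is to observe that Theorem~\ref{thm:main:conclusion1} follows immediately by combining the three competitive-ratio bounds already established with the independent kissing numbers computed in Theorem~\ref{thm:main_zeta}. First I would recall the three ingredients: by Theorem~\ref{th:mds_geometric_objects_1} and Theorem~\ref{th:mids_geometric_objects}, the greedy algorithm $\AGDS$ achieves a competitive ratio of at most $\zeta$ for both the MDS and MIDS problems on any graph whose independent kissing number is at most $\zeta$; and by Lemma~\ref{lem:coloring}, the algorithm $\AFF$ achieves a competitive ratio of at most $\zeta$ for the MC problem under the same hypothesis. All three algorithms are deterministic and need not know $\zeta$ in advance, so it suffices to supply, for each family, an upper bound on $\zeta$.

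The key (and entirely routine) step is then a term-by-term substitution. For (a) I would use $\zeta = 12$ from Theorem~\ref{thm:main_zeta}(a); for (b), $\zeta = 2^d$ from Theorem~\ref{thm:main_zeta}(b); for (d), $\zeta = 11$ from Theorem~\ref{thm:main_zeta}(f); and for (e), $\zeta \leq \Big(\frac{m}{\alpha}+2\Big)^d$ from the upper-bound half of Theorem~\ref{thm:main_zeta}(g). Case (c) is the only one needing a small remark: for the equilateral triangle ($k=3$) one invokes Theorem~\ref{thm:main_zeta}(c) and for a regular $k$-gon with $k \geq 5$ one invokes Theorem~\ref{thm:main_zeta}(d); both give $\zeta \leq 6$, so the common bound $6$ applies to the whole family.

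Because each of the cited competitive-ratio statements has the monotone form ``at most $\zeta$,'' replacing $\zeta$ by any valid upper bound $\zeta_0$ for the family yields ``at most $\zeta_0$,'' and the five claimed ratios follow. I expect no genuine obstacle here: all of the substantive geometric packing arguments (the icosahedral placement for balls, the hypercube subdivision, the hexagonal neighbourhood region for polygons, the Melissen--Fodor disk-packing bounds, and the volume and Minkowski--Hlawka estimates for $\alpha$-fat objects) have already been discharged inside Theorem~\ref{thm:main_zeta}. The only mild care points are to quote the upper-bound direction of Theorem~\ref{thm:main_zeta}(g) in case (e) and to merge parts (c) and (d) of Theorem~\ref{thm:main_zeta} in case (c); the rest is immediate.
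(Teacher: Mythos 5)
Your proposal is correct and matches the paper's own proof, which likewise obtains the theorem by combining Theorem~\ref{th:mds_geometric_objects_1}, Theorem~\ref{th:mids_geometric_objects}, and Lemma~\ref{lem:coloring} with the bounds on the independent kissing number from Theorem~\ref{thm:main_zeta}. Your item-by-item mapping (including merging parts (c) and (d) of Theorem~\ref{thm:main_zeta} for case (c) and using only the upper-bound half of part (g) for case (e)) is exactly what the paper intends.
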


For the MC problem, further, using Theorem~\ref{thm:AL} and Theorem~\ref{thm:main_zeta}(f,g), we have the following result.

\begin{theorem}\label{thm:main:conclusionMCAL}

For the MC problem, the  algorithm $\AL$  achieves a competitive ratio of at most
 \begin{itemize}
 \item [(a)] $11(\lfloor\log m \rfloor+1)$ for bounded scaled disks in $\IR^2$ having radii in the interval $[1,m]$; 
     \item[(b)] $\Big(\frac{2}{\alpha}+2\Big)^d (\lfloor\log m \rfloor+1)$ for bounded-scaled $\alpha$-fat objects in $\IR^d$ having widths in the interval $[1,m]$, where $d\in\mathbb{Z}^{+}$.
    
     \end{itemize}
 \end{theorem}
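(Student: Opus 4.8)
The plan is to treat this statement as a direct corollary of Theorem~\ref{thm:AL}, which already establishes that $\AL$ attains a competitive ratio of at most $\zeta'(\lfloor\log m\rfloor+1)$, where $\zeta'$ denotes the independent kissing number of the relevant family restricted to widths in the interval $[1,2]$. Thus the entire argument reduces to substituting the correct value (or upper bound) of $\zeta'$ supplied by Theorem~\ref{thm:main_zeta} and Remark~\ref{zeta_zeta'} for each of the two families under consideration; no new algorithmic analysis is needed, and the bound inherited from Theorem~\ref{thm:AL} is an absolute one.

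For part~(b) I would start with the general family of bounded-scaled $\alpha$-fat objects in $\IR^d$ having widths in $[1,m]$. Here $\zeta'$ is by definition the independent kissing number of the subfamily with widths in $[1,2]$, and Remark~\ref{zeta_zeta'}(i) (obtained from Theorem~\ref{thm:main_zeta}(g) by setting $m=2$) gives $\zeta'\le\left(\frac{2}{\alpha}+2\right)^d$. Plugging this bound into Theorem~\ref{thm:AL} immediately yields the claimed competitive ratio of at most $\left(\frac{2}{\alpha}+2\right)^d(\lfloor\log m\rfloor+1)$.

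For part~(a) the only extra observation is that a disk is a $1$-fat object whose width equals its radius, so the family of disks with radii in $[1,m]$ coincides with the family of $\alpha$-fat objects (with $\alpha=1$) of widths in $[1,m]$, and the width-based layering used by $\AL$ agrees with a layering by radius. Consequently $\zeta'$ is the independent kissing number of disks with radii in $[1,2]$, which Theorem~\ref{thm:main_zeta}(f) pins down exactly as $11$. Substituting $\zeta'=11$ into Theorem~\ref{thm:AL} gives the bound $11(\lfloor\log m\rfloor+1)$; note that this is sharper than the generic $\alpha=1$ estimate $\left(\frac{2}{1}+2\right)^d=16$ for $d=2$, precisely because Theorem~\ref{thm:main_zeta}(f) relies on the exact disk-packing results of Melissen and Fodor rather than on the crude volume bound behind Remark~\ref{zeta_zeta'}(i).

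Since both parts are pure substitutions, there is essentially no technical obstacle; the only point requiring a moment of care is confirming that \emph{radius} and \emph{width} coincide for disks, so that Theorem~\ref{thm:AL}, stated in terms of widths, legitimately applies to the radius-indexed statement of part~(a). Everything else is an immediate consequence of the results already established in Sections~\ref{sec:coloring} and~\ref{sec:IKS}.
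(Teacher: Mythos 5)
Your proposal is correct and follows exactly the paper's own route: the paper derives Theorem~\ref{thm:main:conclusionMCAL} precisely by combining Theorem~\ref{thm:AL} with Theorem~\ref{thm:main_zeta}(f) for disks (giving $\zeta'=11$) and Theorem~\ref{thm:main_zeta}(g) at $m=2$ for $\alpha$-fat objects (giving $\zeta'\leq\left(\frac{2}{\alpha}+2\right)^d$). Your additional check that radius and width coincide for disks is a sound (if implicit in the paper) justification for applying the width-based Theorem~\ref{thm:AL} to part~(a).
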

Here, we would like to point out that the result on Theorem~\ref{thm:main:conclusionMCAL}(a) improves the best-known competitive ratio from $28(\lfloor\log m \rfloor+1)$~\cite{ErlebachF06}.
Due to  Theorem~\ref{thm:AFF} and Theorem~\ref{thm:main_zeta}(g), we have the following.

 \begin{theorem}\label{thm:main:conclusionMCAFF}

For  the MC problem, the  algorithm $\AFF$  achieves a competitive ratio of
at most $O(\Big(\frac{2}{\alpha}+2\Big)^d \log m )$ for bounded-scaled $\alpha$-fat objects in $\IR^d$ having widths in the interval $[1,m]$, where $d\in\mathbb{Z}^{+}$.
    
 \end{theorem}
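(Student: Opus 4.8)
The plan is to obtain this statement as a direct composition of two results already established in the excerpt: the general competitive-ratio bound for $\AFF$ expressed through the independent kissing number $\zeta'$ of the narrow subfamily (widths in $[1,2]$), and the explicit upper bound on $\zeta'$ coming from the packing argument. There is essentially no new work to do beyond correctly matching the families and absorbing a size-independent constant into the $O(\cdot)$ notation.

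First I would invoke Theorem~\ref{thm:AFF}, which asserts that for the minimum coloring problem on the geometric intersection graph of bounded scaled $\alpha$-fat objects in $\IR^d$ with widths in $[1,m]$, the algorithm $\AFF$ attains an asymptotic competitive ratio of $O(\zeta' \log m)$, where $\zeta'$ is the independent kissing number of the subfamily of such objects whose widths lie in $[1,2]$. This already reduces the theorem to controlling $\zeta'$ by a quantity depending only on $\alpha$ and $d$.

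Next I would bound $\zeta'$ using Remark~\ref{zeta_zeta'}(i), which specializes Theorem~\ref{thm:main_zeta}(g) to the case $m=2$ and yields $\zeta' \le \left(\frac{2}{\alpha}+2\right)^d$. Substituting this bound into the $O(\zeta' \log m)$ estimate gives an asymptotic competitive ratio of $O\!\left(\left(\frac{2}{\alpha}+2\right)^d \log m\right)$, which is exactly the claimed bound. Since the factor $\left(\frac{2}{\alpha}+2\right)^d$ is independent of the length of the input sequence, it may be carried inside the $O(\cdot)$ without affecting the asymptotic nature of the guarantee.

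The only point requiring care is that the two ingredients refer to the correct families. Theorem~\ref{thm:AFF} is stated for the width-$[1,m]$ family, but its bound is governed by the width-$[1,2]$ independent kissing number precisely because the layering in its proof (Claim~\ref{clm:layer}) keeps the widths within any single layer inside a factor of two; hence the $\zeta'$ appearing there is exactly the $\zeta'$ appearing in Remark~\ref{zeta_zeta'}(i), and the substitution is legitimate. Thus this final statement is purely the composition of Theorem~\ref{thm:AFF} with the packing estimate of Theorem~\ref{thm:main_zeta}(g); no genuine obstacle remains, as all the substantive effort was already expended in proving those two results.
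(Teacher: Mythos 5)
Your proposal is correct and is essentially identical to the paper's own derivation: the paper obtains this theorem by combining Theorem~\ref{thm:AFF} with Theorem~\ref{thm:main_zeta}(g) (the latter specialized to $m=2$, which is exactly Remark~\ref{zeta_zeta'}(i) that you cite). Your additional remark about why the width-$[1,2]$ kissing number $\zeta'$ is the right quantity in Theorem~\ref{thm:AFF} is a sound sanity check, but it introduces no new content beyond the paper's composition of the two results.
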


For the MCDS problem, combining Theorem~\ref{cds} and Theorem~\ref{thm:main_zeta}, we have the following.

 \begin{theorem}\label{thm:main:conclusion2}

For the  MCDS problem, there exists a deterministic online algorithm that achieves a competitive ratio of at most
 \begin{itemize}
 \item [(a)]  22 for  congruent balls in $\IR^3$;
     \item[(b)]  $2(2^d-1)$ for translated copies of a hypercube in $\IR^d$, where $d\in\mathbb{Z}^{+}$;
\item[(c)]  $10$ for translated copies of a regular $k$-gon $($for $k=3$ and $k\geq 5)$;
\item[(d)] ~$20$ for bounded scaled disks in $\IR^2$ radii in the interval $[1,2]$; 
 \item[(e)]  $2\left(\Big(\frac{m}{\alpha}+2\Big)^d-1\right)$ for bounded-scaled $\alpha$-fat objects having widths in the interval~$[1,m]$, where $d\in\mathbb{Z}^{+}$.
 
     \end{itemize}
 \end{theorem}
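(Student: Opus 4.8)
The plan is to obtain Theorem~\ref{thm:main:conclusion2} as an immediate corollary, by composing the algorithmic guarantee of Theorem~\ref{cds} with the family-specific bounds on the independent kissing number $\zeta$ collected in Theorem~\ref{thm:main_zeta}. First I would invoke Theorem~\ref{cds}, which states that the deterministic online algorithm $\AGCDS$ attains an asymptotic competitive ratio of at most $2(\zeta-1)$ for the MCDS problem on any graph whose independent kissing number is at most $\zeta$. Since $\AGCDS$ needs neither the value of $\zeta$ nor any geometric representation (as recorded in Section~\ref{MDS_variants}), the single algorithm $\AGCDS$ witnesses the existential claim for every family in the list.

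Next, for each family I would substitute the appropriate upper bound on $\zeta$ supplied by Theorem~\ref{thm:main_zeta}. For congruent balls in $\IR^3$, Theorem~\ref{thm:main_zeta}(a) gives $\zeta=12$, so $2(\zeta-1)=22$, proving (a). For translated copies of a hypercube in $\IR^d$, Theorem~\ref{thm:main_zeta}(b) gives $\zeta=2^d$, so $2(\zeta-1)=2(2^d-1)$, proving (b). For translated copies of a regular $k$-gon with $k=3$ or $k\geq 5$, Theorem~\ref{thm:main_zeta}(c,d) gives $\zeta\leq 6$, so the ratio is at most $2(6-1)=10$, proving (c). For bounded scaled disks in $\IR^2$ with radii in $[1,2]$, Theorem~\ref{thm:main_zeta}(f) gives $\zeta=11$, so $2(\zeta-1)=20$, proving (d). Finally, for bounded-scaled $\alpha$-fat objects in $\IR^d$ with widths in $[1,m]$, Theorem~\ref{thm:main_zeta}(g) gives $\zeta\leq\bigl(\frac{m}{\alpha}+2\bigr)^d$, so the ratio is at most $2\bigl(\bigl(\frac{m}{\alpha}+2\bigr)^d-1\bigr)$, proving (e).

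I do not expect any genuine obstacle here, since the argument is purely compositional and the entire content already resides in the two theorems invoked. The only points demanding care are bookkeeping rather than mathematics: one must use the \emph{upper} bounds on $\zeta$ from Theorem~\ref{thm:main_zeta}, which is exactly what the monotonicity of $2(\zeta-1)$ in $\zeta$ permits (a smaller independent kissing number only improves the ratio, so plugging in the upper bound yields a valid upper bound on the competitive ratio), and one must note that the inherited figure is the asymptotic $2(\zeta-1)$ bound of Theorem~\ref{cds} rather than its absolute $2\zeta$ counterpart. Items (a), (b), and (d) use exact values of $\zeta$, while items (c) and (e) use genuine upper bounds, but in all cases the monotonic substitution goes through verbatim.
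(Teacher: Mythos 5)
Your proposal is correct and is exactly the paper's argument: the paper proves this theorem simply by composing Theorem~\ref{cds} with the bounds on $\zeta$ from Theorem~\ref{thm:main_zeta}, precisely as you do, including using the upper bounds $\zeta\leq 6$ and $\zeta\leq\bigl(\frac{m}{\alpha}+2\bigr)^d$ for items (c) and (e). Your side remark that the stated constants come from the \emph{asymptotic} bound $2(\zeta-1)$ rather than the absolute bound $2\zeta$ is also the correct reading of the paper (consistent with Table~\ref{tab:competitive_ratios}).
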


Now, we present the implication of Theorem~\ref{thm:main_zeta} on the online $t$-relaxed coloring problem. 
For a  nonnegative integer  $t$, 
% Let $c:G\rightarrow\mathbb{Z}^{+}$ be a mapping. The mapping $c$  is referred to as a $t$-relaxed coloring of $G$ if, for any vertex $u$ of $G$, the number of neighbors $v$ of $u$ with $c(v) = c(u)$ is less than or equal to $t$. 
in the \emph{ online $t$-relaxed coloring} problem, upon the arrival of a new vertex,
%the graph's vertices are revealed one by one, along with the edges that connect them to the previously revealed vertices. 
the algorithm must immediately assign a color to it, ensuring that the maximum degree of the subgraph induced by the vertices of this color class does not exceed~$t$. The objective of the problem is to minimize the number of distinct colors. Combining the result of Capponi and Pilotto~\cite[Thm~5]{CapponiP05} with Theorem~\ref{thm:main_zeta},  we have the following.

\begin{theorem}\label{thm:main:conclusion3}
For the online $t$-relaxed coloring problem, there exists an online algorithm that achieves an asymptotic competitive ratio of at most
 \begin{itemize}
 \item [(a)] 288 for  congruent balls in $\IR^3$;
     \item[(b)] $2^{2d+1}$ for translated copies of a hypercube in $\IR^d$, where $d\in\mathbb{Z}^{+}$;
     \item[(c)] at most $72$ for translated copies of a regular $k$-gon $($for $k=3$ and $k\geq 5)$;
     \item[(d)]  242 for bounded scaled disks in $\IR^2$ having radii in the interval $[1,2]$;
      \item[(e)]~$2\Big(\frac{m}{\alpha}+2\Big)^{2d}$ for bounded-scaled $\alpha$-fat objects having widths in the interval $[1,m]$, where $d\in\mathbb{Z}^{+}$.
       
     \end{itemize}
 \end{theorem}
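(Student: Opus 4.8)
The plan is to read off Theorem~\ref{thm:main:conclusion3} as a direct instantiation of an existing $\zeta$-parameterized guarantee, with all the geometric content already supplied by Theorem~\ref{thm:main_zeta}. The key ingredient is the result of Capponi and Pilotto~\cite[Thm~5]{CapponiP05}, which asserts that for any graph whose independent kissing number is at most $\zeta$, there is an online algorithm for the $t$-relaxed coloring problem with asymptotic competitive ratio at most $2\zeta^2$. Granting this, the entire theorem reduces to substituting, for each family, the value (or upper bound) of $\zeta$ furnished by Theorem~\ref{thm:main_zeta}, and checking that $2\zeta^2$ matches the claimed numbers.

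Concretely, I would proceed family by family. For congruent balls in $\IR^3$ we have $\zeta=12$ by Theorem~\ref{thm:main_zeta}(a), so $2\zeta^2=288$, giving (a). For translated copies of a hypercube in $\IR^d$ we have $\zeta=2^d$ by Theorem~\ref{thm:main_zeta}(b), so $2\zeta^2=2^{2d+1}$, giving (b). For translated copies of a regular $k$-gon with $k=3$ or $k\geq 5$, Theorem~\ref{thm:main_zeta}(c,d) gives $\zeta\leq 6$, whence $2\zeta^2\leq 72$, giving (c). For bounded scaled disks in $\IR^2$ with radii in $[1,2]$ we have $\zeta=11$ by Theorem~\ref{thm:main_zeta}(f), so $2\zeta^2=242$, giving (d). Finally, for bounded-scaled $\alpha$-fat objects in $\IR^d$ with widths in $[1,m]$, Theorem~\ref{thm:main_zeta}(g) gives $\zeta\leq\big(\frac{m}{\alpha}+2\big)^d$, so $2\zeta^2\leq 2\big(\frac{m}{\alpha}+2\big)^{2d}$, giving (e).

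The only points that genuinely need care—and where a loosely stated argument could fail—are twofold. First, one must confirm the precise form of the Capponi--Pilotto bound: that it is indeed quadratic in $\zeta$ (of the shape $2\zeta^2$, so that the arithmetic $2\cdot 12^2=288$, $2\cdot(2^d)^2=2^{2d+1}$, etc.\ lines up), and that it is an \emph{asymptotic} rather than absolute competitive ratio, matching the wording of the statement. Second, because several families admit only an upper bound on $\zeta$ (the $k$-gon and $\alpha$-fat cases), I must verify that the Capponi--Pilotto guarantee is monotone nondecreasing in $\zeta$, so that substituting an upper bound on $\zeta$ still yields a valid upper bound on the competitive ratio. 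I expect the latter monotonicity check to be the main (though minor) obstacle; both verifications are routine once the exact statement of~\cite[Thm~5]{CapponiP05} is in hand, and no combinatorial work beyond Theorem~\ref{thm:main_zeta} is needed.
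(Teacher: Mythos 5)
Your proposal is correct and follows exactly the paper's own argument: the paper proves this theorem in one line by combining the $2\zeta^2$ asymptotic bound of Capponi and Pilotto~\cite[Thm~5]{CapponiP05} with the values of $\zeta$ from Theorem~\ref{thm:main_zeta}, which is precisely your instantiation (and your arithmetic $2\cdot 12^2=288$, $2\cdot 2^{2d}=2^{2d+1}$, $2\cdot 36=72$, $2\cdot 121=242$, $2\left(\frac{m}{\alpha}+2\right)^{2d}$ all checks out). The two care points you raise (asymptotic form of the bound, and monotonicity in $\zeta$ so that upper bounds on $\zeta$ suffice) are indeed the only things to verify, and both hold since the Capponi--Pilotto result applies to any graph with independent kissing number at most $\zeta$.
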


\noindent
 % We feel that our Theorem~\ref{thm:main_zeta} might find potential applications in other problems as well.

% \begin{corollary} 
% For the MDS and MIS problems (respectively, MC problem), $\AGDS$ (respectively, $\AFF$) achieves an optimal competitive ratio of 12, $2^d$ and $2^{d+1}$ for the geometric intersection graph of congruent balls in $\IR^3$, translates of a hypercube in $\IR^d$, congruent hypercubes in $\IR^d$, respectively.
% % \item [(ii)] at most~$6$ for the geometric intersection graph of translates of an equilateral triangle.
% % \item [(iii)] at most~$6$ for the geometric intersection graph of translates of a regular $k$-gon $(k\geq 5)$.
% \end{corollary}
% \begin{corollary} 
% \blue{For the MDS and MIS problems (respectively, MC problem) , $\AGDS$ achieves a competitive ratio of  at most~$6$ for the geometric intersection graph of translates of both equilateral triangle and regular $k$-gon $(k\geq 5)$.}
% \end{corollary}
% \noindent
% On combining Theorem~\ref{cds} and Theorem~\ref{thm:main_zeta}, we obtain the following results for the online MCDS problem.
% \begin{corollary} 
% For the MCDS problem, $\AGCDS$ achieves an asymptotic competitive ratio of at most 10, 10, 22, $2(2^d-1)$ and $2(2^{d+1}-1)$ for the geometric intersection graph of translates of an equilateral triangle, translates of a regular $k$-gon $(k\geq 5)$, congruent balls in $\IR^3$, translates of a hypercube in $\IR^d$ and congruent hypercubes in $\IR^d$, respectively.
% \end{corollary}

\section{Conclusion}\label{10}

{We conclude by mentioning some open problems.} The results obtained in this paper, as well as the results obtained in
% ~\cite{CapponiP05,DeS23ind,MaratheBHRR95}
~\cite{CapponiP05,MaratheBHRR95} for other graph problems, %(online $t$-relaxed coloring, independent set)
 are dependent on the independent kissing number $\zeta$. Consequently, the value of~$\zeta$  becomes an intriguing graph parameter to investigate. 
For congruent balls in $\IR^3$ and translates of a hypercube in $\IR^d$, we prove that the value of $\zeta$ is tight and equals 12 and $2^d$, respectively. In contrast, the value of $\zeta$ for translates of a regular $k$-gon (for $k=3$ and $k\geq 5$) is either 5 or 6.
We propose to settle the value $\zeta$ for this case as an open problem.
For congruent hypercubes in $\IR^d$, we prove that the value of $\zeta$  is at least $2^{d+1}$;  on the other hand, since congruent hypercubes are $\frac{1}{\sqrt{d}}$-fat objects, due to Theorem~\ref{thm:main_zeta}(g), it follows that $\zeta$ is at most $(2+\sqrt{d})^d$. Bridging this gap would be an open question. It would be of independent interest to see parametrized algorithms for graphs considering $\zeta$ as a parameter.

 \section*{Acknowledgement} 

{The authors would like to thank anonymous reviewers for bringing to their attention articles~\cite{ButenkoKU11,DuD15,MaratheBHRR95} that they were unaware of.}

\section*{Conflict of Interest}
The authors declare that there are no financial and non-financial competing interests that are relevant to the content of this article.

\bibliographystyle{plainurl}% the mandatory bibstyle

\bibliography{reference}

\end{document}